\newtheorem{thm}{Theorem}[section]
\newtheorem{cor}[thm]{Corollary}
\newtheorem{lem}[thm]{Lemma}
\newtheorem{prop}[thm]{Proposition}
\newtheorem*{thm*}{Theorem}
\newtheorem{defn}{Definition}[section]
\newtheorem{example}{Example}[section]
\theoremstyle{definition}
\newtheorem{rmk}[thm]{Remark}
\newcommand{\mybox}[1]{%
  \setbox0=\hbox{#1}%
  \setlength{\@tempdima}{\dimexpr\wd0+13pt}%
  \begin{tcolorbox}[colframe=mycolor,boxrule=0.5pt,arc=4pt,
      left=6pt,right=6pt,top=6pt,bottom=6pt,boxsep=0pt,width=\@tempdima]
    #1
  \end{tcolorbox}
}
\newcommand{\dlim}{\displaystyle \lim\limits}
\newcommand{\dsum}{\displaystyle \sum\limits}
\definecolor{cred}{rgb}{0.5, 0.0, 0.13}
\definecolor{cadmiumgreen}{rgb}{0.0, 0.42, 0.24}
\definecolor{bblue}{rgb}{0.0, 0.44, 1.0}
\definecolor{black}{rgb}{0.0, 0.0, 0.0}
\def\S{\mathcal S}
\def\C{\mathcal C}
\def\Re{\mathcal R}
\def\R{\mathbb R}
\def\Z{\mathbb Z}
\def\K{\mathcal K}
\def\Td1{T^{D,1}_{\{x_n\}}}
\def\Ts1{T^{S,1}_{\{x_n\}}}
\def\K{\mathcal{K}}
\def\Tid1{T^{D,1}_{\{x_n+\zeta_i\}}}
\def\Tis1{T^{S,1}_{\{x_n+\zeta_i\}}}
\def\Tjs1{T^{S,1}_{\{x_n+\zeta_j\}}}
\def\Tjd1{T^{D,1}_{\{x_n+\zeta_j\}}}
\newcommand*\colvec[1]{
        \global\colveccount#1
        \begin{pmatrix}
        \colvecnext
}
\def\colvecnext#1{
        #1
        \global\advance\colveccount-1
        \ifnum\colveccount>0
                \\
                \expandafter\colvecnext
        \else
                \end{pmatrix}
        \fi
}
\title{Absolutely Robust Controllers for Chemical Reaction Networks}
\author{Jinsu Kim \and
German Enciso }
\begin{document}

\maketitle

\begin{abstract}
\noindent 
In this work, we design a type of controller that consists of adding a specific set of reactions to an existing mass-action chemical reaction network in order to control a target species.  This set of reactions is effective for both deterministic and stochastic networks, in the latter case controlling the mean as well as the variance of the target species. We employ a type of network property called absolute concentration robustness (ACR). We provide applications to the control of a multisite phosphorylation model as well as a receptor-ligand signaling system.

For this framework, we use the so-called deficiency zero theorem from chemical reaction network theory as well as multiscaling model reduction methods.  We show that the target species has approximately Poisson distribution with the desired mean.  We further show that ACR controllers can bring robust perfect adaptation to a target species and are complementary to
a recently introduced antithetic feedback controller used for stochastic chemical reactions.

\end{abstract}

\textbf{Keywords} absolute concentration robustness, control, reaction networks, Possion distribution, multiscaling, deficiency zero.

\section{Introduction}

In this paper we propose a set of synthetic controllers that can be added to a given chemical reaction network in order to control the concentration or copy number of a given species of interest.  Chemical reaction networks describe a variety of problems in engineering and biology, and there has recently been a surge in interest for stochastic models of such networks \cite{AndProdForm, enciso2019embracing, StoQSSA2019, Paulsson2004}.  Stochastic effects are important in order to describe the noise inherent in reactions with low numbers of molecules, as is often the case inside individual cells.  The techniques proposed in this paper will be shown to apply both in the deterministic case, where concentrations are described by ordinary differential equations, as well as in the stochastic case where the dynamics are described by a continuous time, discrete space Markov process.

The controllers used in our framework are inspired by a property called \textit{absolute concentration robustness} (ACR), which guarantees that a species has the same steady state value regardless of the initial conditions. We provide both a theoretical framework and computational simulations in several specific biochemical systems to show that an ACR controller can shift all positive steady state values of a target species towards a desired value. We also show that in stochastic networks satisfying certain topological criteria, the ACR controller can account for the intrinsic noise in the chemical reaction.  We approximate the behaviour of the target species using a reduced chemical reaction model derived through multiscaling analysis. Our stochastic analysis assumes certain conditions on the topology of the controlled network that are described using the so-called \emph{deficiency} and \emph{weak reversibility} of the system. These two theoretical tools will be combined to calculate the behaviour of the reduced system, as well as to show that the behaviour of the target species in the reduced system approximates that of the original network. Using computational simulations we also explore the \textit{robust perfect adaptation} of the target species in the controlled system, a highly desirable goal in control theory.

When a dynamical system has multistationarity or the dynamics are confined to a lower-dimensional subset by conservation relations among species, the long term behaviour depends on the initial conditions of the system. In general different initial concentrations may lead to different long term steady states of the different species.  However sometimes the positive steady state values of a species of interest are identical independent of the initial conditions. Such a system is said to possess the ACR property as the steady state of the dynamics is robust to the initial conditions. In that case, the species with identical steady state values is called the ACR species.  This counter-intuitive dynamical aspect was proposed by Shinar and Feinberg in 2010 \cite{shinar2010structural}, where they further provided network topological conditions ensuring that the associated deterministic system admits ACR. 

For a simple example of an ACR system consider the following network, which will be the basic ACR controller throughout this manuscript, 
\begin{align}\label{eq:ACRcontroller}
Z+A\xrightarrow{\theta} 2Z, \quad Z\xrightarrow{\mu} A.
\end{align}
Both reactions produce and consume the same amount of $A$ and $Z$, hence the total amount of $Z+A$ is conserved.  One can think of $A$ and $Z$ as being different forms of the same protein, say active and inactive.  Let $a(t)$ and $z(t)$ be the concentration of the species $A$ and $Z$, respectively.  Assuming the associated dynamical system is equipped with mass-action kinetics \cite{FeinbergLec79}, the concentration of $Z$ follows the equation
\begin{align}\label{eq:ACRode1}
\quad \frac{d}{dt}z(t)= \theta a(t) z(t) -\mu z(t).
\end{align}
At steady state one can set the right hand side equal to zero, and assuming $z\not=0$ one obtains that the steady state value for $A$ is $\frac{\mu}{\theta}$.
Letting $a(0)+z(0)=N$ be the initial input of the system, the only positive roots of the right-hand side of \eqref{eq:ACRode1} are $(a,z)=(\frac{\mu}{\theta},N-\frac{\mu}{\theta})$.
Hence this system is an ACR system and species $A$ is an ACR species.  See Figure 1b for a phase plane diagram illustrating this behaviour.  

In comparison to the ACR network in \eqref{eq:ACRcontroller}, we consider the simple reaction 
\begin{align}\label{eq:compare}
A\xrightleftharpoons[4]{2} B,
\end{align}  where the total mass of $A$ and $B$ is also invariant in time. In both systems the dynamics is confined to one of the black straight lines in Figure \ref{fig:example}a and \ref{fig:example}b.
The positive steady states of this system lie on the intersections between the nullclines (red) and the phase planes (black). Hence for system \eqref{eq:compare}, the steady state values of both $A$ and $B$ vary depending on the initial condition as shown in Figure \ref{fig:example}a. On the other hand, the ACR network system \eqref{eq:ACRcontroller} is such that all the positive steady states for $A$ are identical, as shown in Figure \ref{fig:example}b.

A similar type of control has been considered by Mustafa Khammash and others in Briat et. al. \cite{briat2016antithetic}. In that work, Khammash and colleagues propose an antithetic integral feedback circuit
\begin{align*}
Z_1+Z_2\xrightarrow{\eta} 0\xrightarrow{\mu}Z_1 \xrightarrow{k} Z_1+X_1, \quad  X_\ell \xrightarrow{\theta}X_\ell+Z_2,
\end{align*}
 that robustly stabilizes a species of interest, $X_\ell$, in the presence of intrinsic noise. In the controlled system the mean of the stochastic dynamics of a target species is stabilized at a pre-specified value with a low metabolic cost. A recent follow-up work \cite{aoki2019universal} experimentally implemented the antithetic control circuit in a growth-rate control system in \emph{E. coli}. We point out that this antithetic control circuit satisfies the topological conditions for ACR provided by Shinar and Feinberg in \cite{shinar2010structural}, and thus it is a specific example of an ACR system. In particular, the control species $Z_1$ and $Z_2$ are ACR species if the system admits a positive steady state. There are however a few important differences with our work. As we will show, the ACR controllers aim to control the mean, variance and even higher moments of a target species by controlling its distribution. The controllers in \cite{briat2016antithetic, aoki2019universal} are designed to robustly control the mean of the target species, but the controller might increase its noise. To account for the noise in the target species, Briat et. al. show in the follow-up work \cite{briat2018antithetic} that for a unimolecular model, an additional negative feedback loop can reduce the noise up to the original variance. Also, stochastic ACR controllers control the target distribution approximately assuming that sufficient copies of the control species present, while the control proposed by \cite{briat2016antithetic, aoki2019universal} provides exact control without approximation scheme.

   A particularly powerful property of an ACR system is that it can endow the ACR property to a given network. When an ACR system is added to a non-ACR model, one of the species in the combined system could become absolutely robust.  For example, suppose that species $A$ is present in a given deterministic network but that $Z$ is not, and that we add the two reactions \eqref{eq:ACRcontroller}. Then the dynamics of $Z$ will still satisfy $z’(t)= \theta a z -\mu z$. Moreover, at steady state it must still hold $a=\mu/\theta$ due to the same analysis as before.  Thus species $A$ is now absolutely robust in the new system.  

   If the ACR property of the ACR system is inherited to the target species in the controlled system, then we call the ACR system an \textit{ACR controller}. Throughout this paper we also call the newly introduced species in the ACR controller a \textit{control species}. In the following sections, we show that the steady state value of the target species is tunable with the parameters of the ACR controller. In the Supplementary Material, we further investigate the local stability of the steady state in the controlled system.

While the ACR controller has the ability to control a target species in deterministic systems, chemical species are often modeled as discrete entities. Stochastic models in biology have become increasingly relevant, as people have noticed that intrinsic noise significantly contributes to the dynamical behaviour \cite{Arkin1998, Becskei2005, Elowitz2002, Paulsson2011,Maamar2007, ULPOSP2016}. The effects of noise are especially large if the abundance of a species in the system is low. Many important biochemical models consist of species with low copy numbers inside each individual cell \cite{Paulsson2004}. More details on the modeling of stochastic networks are included in the Supplementary Material. 
In stochastic models we have additional control goals than for deterministic models, as it is important to not only control the mean expression level but also its variance (i.e. noise) and ideally the full probability distribution of the target species.
If a thermostat makes the temperature of a room oscillate between $30^{\circ}$ F and $110^{\circ}$ F, one might argue that the room is controlled with mean temperature $70^{\circ}$ F, although its occupants might disagree. 

In order to control stochastic systems, we rely on the mathematical theory of deficiency in chemical reaction networks. The \emph{deficiency} of a reaction network is a non-negative integer that is determined by the topology of the network regardless of parameter values. Networks with deficiency zero and a weak reversibility property have well characterized long term dynamics, under both deterministic and stochastic conditions. In the deterministic case, such systems admit a unique local asymptotically stable steady state for given total amounts of the species \cite{Horn72, Feinberg72,craciun2015toric}. For a stochastic system, under the same conditions, each of the species has Poisson distribution centered around its deterministic steady state \cite{AndProdForm} (see the Supplementary Material for additional details).
 These strong properties inspire us to propose a new deficiency based control scheme for stochastic reaction networks, based on recent work expanding ACR to the stochastic case \cite{Enciso2016, anderson2017finite}. 

One property observed in some stochastic chemical reaction networks is a so-called \emph{extinction event}. Such an event takes place when some of the species disappear and can never return to the system. A stochastically modeled ACR controller can go extinct if a control species, such as $Z$ in the basic ACR controller, is entirely removed from the system.  This phenomenon is commonly present in ACR networks \cite{AEJ2014}.  One way to minimize this effect is to run the controllers with sufficiently high control species abundance, so that a potential breakdown of the ACR system is rare.

This high abundance setting is indeed a suitable assumption for the study of stochastic systems \cite{Kurtz72}. In our stochastic systems, each species can be categorized as either high abundance or low abundance, compared with the total protein abundance $N$. We use $N$ as a scaling parameter, and we carry out a multiscaling procedure to reach a reduced stochastic reaction network. By assuming that the reduced network has zero deficiency and is weakly reversible, we conclude that the target species has approximately Poisson distribution both in the reduced and the original networks. In special cases, the reduced model can be treated as a hybrid between deterministic and stochastic networks \cite{ anderson2017finite,ball2006asymptotic,  KangKurtz2013}.

\begin{figure}[h!t]
\centering
    \includegraphics[]{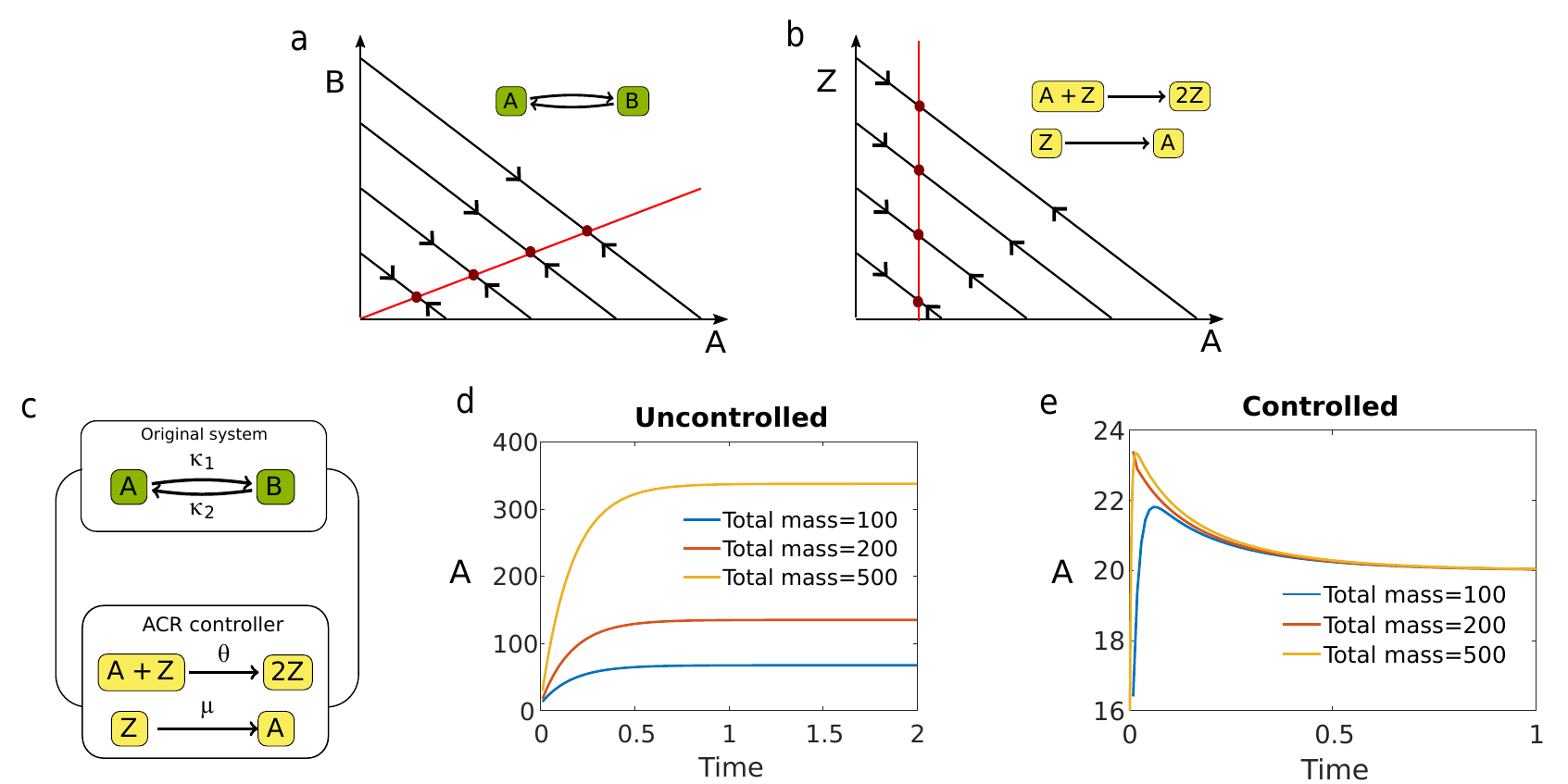}
    \caption{\textbf{a.} and \textbf{b.} Dynamics of the networks $A\rightleftharpoons B$ and the basic ACR controller, respectively. The intersection between each black line and the red line is a steady state for a given total mass. \textbf{c.} The original model in 1a is controlled using the basic ACR controller. \textbf{d.} Time evolution of $A$ in the original system, $k_1=2,k_2=4$.  \textbf{e.} Time evolution of $A$ in the controlled system via the ACR controller, using the above parameters and $\theta=1$, $\mu=20$.} \label{fig:example}
\end{figure}

We provide multiple examples of control of given biochemical networks using different ACR controllers.  For example, we use an existing deterministic model of ERK signal transduction from Rubinstein et al \cite{rubinstein2016long}, and we stabilize the dose response of this system using the basic ACR controller.  We also study a stochastic receptor-ligand model in which we target the concentration of free receptor, and we show that the concentration of a downstream regulatory protein is also controlled as a result.  Finally, we study a stochastic dimer-catalyzer model together with an expanded ACR controller as an application of the hybrid approach.  Simulations using the Gillespie algorithm are provided throughout to illustrate the control implemented by our approach. We also emphasize that the controlled networks admit a robust perfect adaptation property for both deterministic and stochastic examples.

\section{Results}

\subsection{Deterministic Control Using ACR Networks}
\label{subsec:ACR control}

We begin by using an ACR controller for a deterministic system. Consider again the simple translation model between species $A$ and $B$,
 \begin{align}\label{eq:simple translation}
 A \xrightleftharpoons[k_2]{k_1} B.
 \end{align} 
Letting $a(t)$ and $b(t)$ denote the concentration of species $A$ and $B$, respectively, the associated deterministic system with mass-action kinetics is
\begin{align}\label{eq:simple example}
\frac{d}{dt}a(t)=-k_1a(t)+k_2 b(t), \quad \frac{d}{dt}b(t)=k_1a(t)-k_2 b(t).
\end{align} 
We notice that $\frac{d}{dt}a(t)+\frac{d}{dt}b(t)=0$ which implies that the total mass $a(t)+b(t)$ is conserved. 
When $a(0)+b(0)=N$, the steady state of the system is $(a^*,b^*)=(\frac{\kappa_2}{\kappa_1+\kappa_2}N,\frac{\kappa_1}{\kappa_1+\kappa_2}N)$ by using the conservation $a(t)+b(t)=N$. Hence the positive steady state concentration of $A$ in the original system \eqref{eq:simple translation} varies along with the initial input $N$. To get the desired steady state value for $A$, therefore, fine-tuning of the initial condition $N$ is necessary.

Adding the basic ACR controller \eqref{eq:ACRcontroller} to the original system, we have a new system
\begin{align}\label{eq:controlled system example}
Z+A\xrightarrow{\theta} 2Z, \quad Z\xrightarrow{\mu}A \xrightleftharpoons[k_2]{k_1} B.
\end{align}

As described in the introduction, using the equation for $Z$ we can deduce that for any positive steady state it must hold $a^*=\frac{\mu}{\theta}$. See Theorem S3.2 in the Supplementary Material for a generalization of this statement to other networks as well as systems with reaction kinetics different from mass action.  

As an application, we use the basic ACR network to control a system of an extracellular signal regulated kinase (ERK) activation shown in Rubinstein et al \cite{rubinstein2016long}. ERK is a widely studied protein in signal transduction, and it is activated through phosphorylation at two different sites. The steps of the dual phosphorylation are regulated by other protein kinases \cite{rubinstein2016long}. We denote ERK by $S$, and consider the four phosphorylation forms $S_{00},S_{01},S_{10}$ and $S_{11}$ depending on the phosphorylated sites.  Nonsequential ERK phosphorylation is mediated by mitogen-activated protein kinase MEK, denoted here by $E$.  The variable $F$ denotes a nonspecific phosphatase that mediates ERK dephosphorylation. The steps of phosphorylation and dephosphorylation are described with the reaction network model in Figure \ref{fig:ERK}a.

In the ERK system in Figure \ref{fig:ERK}a, there are three conservation relations.  For instance, $E_{tot}=E+[ES_{00}]+[ES_{01}]+[ES_{10}]$ represents the total concentration of kinase, and similarly for total substrate $S_{tot}$ and total phosphatase $F_{tot}$. It has been shown that the mass action deterministic model associated with the ERK model in Figure \ref{fig:ERK} a has different long-term dynamical behaviour depending on the system parameters \cite{conradi2015global, obatake2019oscillations, rubinstein2016long}. These dynamical behaviours include unique stable stationarity, sustained oscillations, and bistability. We use the parameters in Rubinstein et al., which are such that the ERK system converges to a unique, stable, and positive steady state \cite{rubinstein2016long}. 

One of the most important features of this system is its so-called dose response, which describes active ERK $S_{11}$ as a function of the kinase intput $E$. However this dose response depends on the total amount of phosphatase $F_{tot}$. We introduce a control using the basic ACR controller in order to fix $S_{11}$ for every given value of total kinase $E_{tot}$, and therefore to stabilize the dose response.  As the plot on the right-hand side of Figure \ref{fig:ERK}b shows, the steady state concentration of protein $S_{11}$ is sigmoidal as a function of $E_{\text{tot}}$ for fixed $F_{\text{tot}}$. The goal of control with the basic ACR system in Figure \ref{fig:ERK}a is to equalize the positive steady state of the phophatase $F$ for any $F_{\text{tot}}$, and eventually to obtain the same sigmoidal curve for the steady state concentration of $S_{11}$, see Figure 2c.  

Aside from the mathematical model, it is important to think how the basic ACR system in Figure \ref{fig:ERK}a could be implemented experimentally. There are several possible approaches which might depend on the individual system.  In this case, suppose that the phosphatase $F$ is bifunctional, acting as a phosphatase in its standard form and turning into a kinase $Z$ when it is itself phosphorylated. Suppose that kinase $Z$ mediates the phosphorylation of protein $F$ as depicted with the reaction $F+Z \to 2Z$.  Finally, another phosphatase, which is not explicitly modeled in this system, eventually dephosphorylates $Z$ into $F$ as described with the reaction $Z\to F$.  This set of assumptions would suffice to implement the control network. Notice that bifunctional enzymes can be found in the literature, for instance EnvZ in \emph{E. coli} osmolarity regulation \cite{batchelor2003robustness}. Notice also that the self-mediated phosphorylation can be found in the epidermal growth factor receptor (EGFR) \cite{normanno2006epidermal, needham2016egfr}. While the possibility of the practical implementation of the basic ACR controller remains open, we focus on theoretic aspects of the controller in this manuscript.

Assuming that the dynamics associated with the ERK model and the basic ACR system in Figure \ref{fig:ERK}a follows mass-action kinetics, we use $\mu=2$ and $\theta=1$. This implies that for any input $E_{\text{tot}}, F_{\text{tot}}$ and $S_{\text{tot}}$, the steady state of $F$ is $2$. The convergence of $F$ to $2$ is also theoretically proven, see Section S5.1 in the Supplementary material. Thus in the controlled system, the concentration of $F$ converges to $2$, unlike the uncontrolled original ERK model which has different steady state concentrations of $F$ for different values of $F_{\text{tot}}$, as described in Figure \ref{fig:ERK}b (left) and \ref{fig:ERK}c (left). As a result, $S_{11}$ in the controlled system has identical dose response regardless of the value of $F_{tot}$ (right plot in Figure \ref{fig:ERK}c). On the other hand, the $S_{11}$ dose responses are different in the original system (the right plot in Figure \ref{fig:ERK}c).
 
\begin{figure}[H]
\centering
	\includegraphics[]{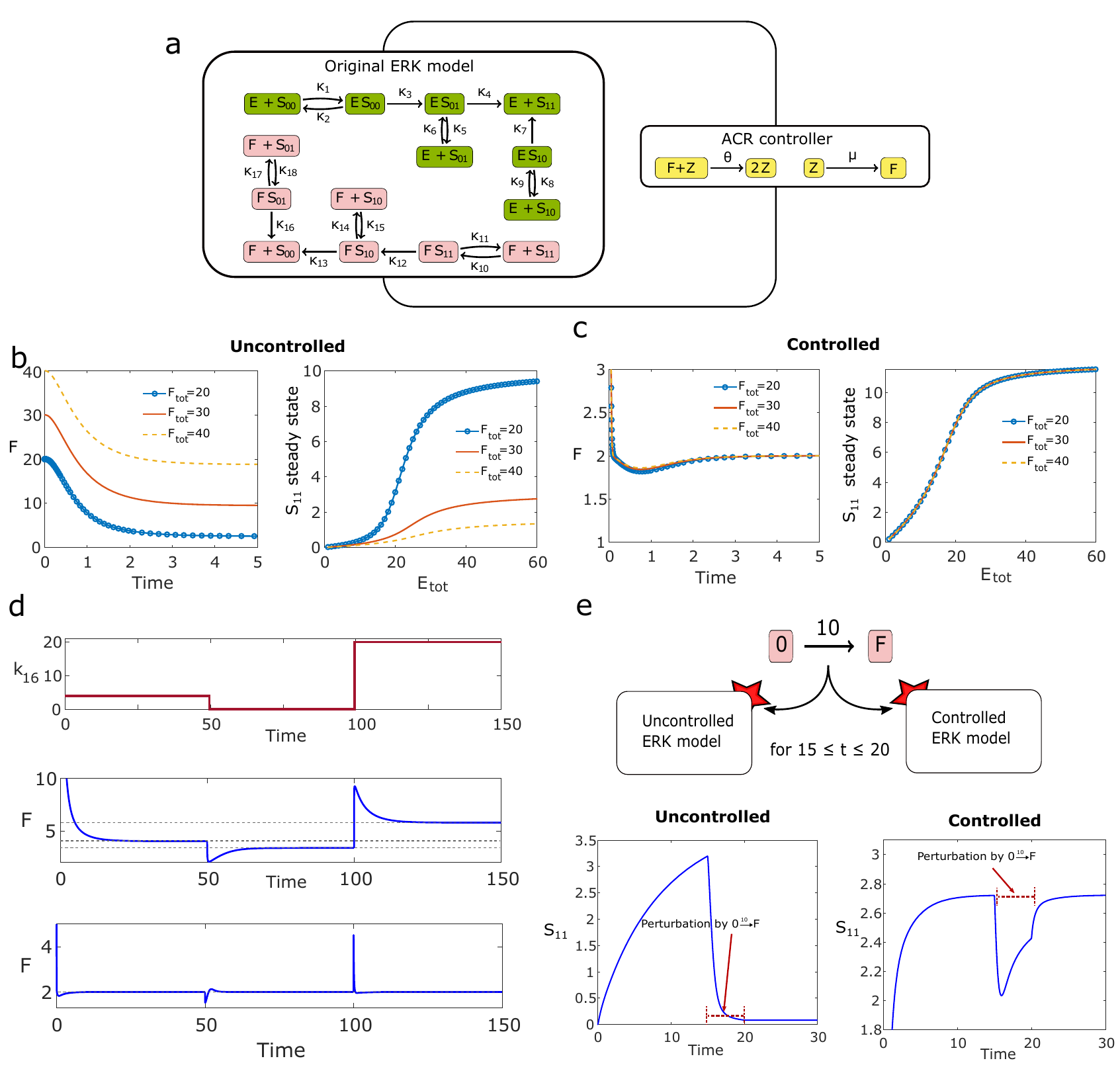}
	\caption{\footnotesize{ \textbf{a.}  Reaction network of an ERK system originally proposed in \cite{rubinstein2016long}, together with the basic ACR controller.  Parameters used are $k_1=3,k_2=2,k_3=1,k_4=2,k_5=1,k_6=3,k_7=2,k_8=5,k_9=3,k_{10}=2,k_{11}=2,k_{12}=3,k_{13}=1,k_{14}=3,k_{15}=1,k_{16}=1,k_{17}=4, k_{18}=4$ for the original ERK system, and $\theta=1$, $\mu=2$ for the ACR controller. \textbf{b.} (Left) Time evolution of $F$ in the original system without the ACR controller. (Right) Dose response of active ERK $S_{11}$ as a function of total kinase $E_{tot}$ for the original system. Initial conditions are $S_{00}=50$, $E=E_{tot}$, $F=F_{tot}$, and zero for all other species. 
	 \textbf{c} (Left) Time evolution of $F$ in the controlled system. (Right) Dose response of $S_{11}$ as a function of $E_{tot}$ for the controlled system. Initial conditions are $S_{00}=50$, $Z=50$, $E=E_{tot}$, $F=F_{tot}$, and zero for all other species.
\textbf{d.} Robust perfect adaptation of $F$ in the controlled system. We set the initial conditions as $S_{00}=30, E_{\text{tot}}=30, F_{\text{tot}}=30, z(0)=30$ and zero for all other species.
\textbf{e.} The dynamics of $S_{11}$ in the original (left) or controlled
(right) system which is transiently perturbed by reaction $0\xrightarrow{10}$ $F$ for $t\in [15,20]$.}} \label{fig:ERK}
\end{figure}

\subsection{Robust Perfect Adaptation}\label{subsec:robust_deter}

One of the main purposes of the control with an ACR system is to create 
\emph{robust perfect adaptation} for the target species. A system possesses perfect adaptation if the output of the system returns to the pre-stimulus level after the value of the input parameter is changed to a different constant level. Furthermore, if the system achieves perfect adaptation independently of the system parameters, it is said to have robust perfect adaptation (RPA) \cite{ferrell2016perfect, ma2009defining,  xiao2018robust, yi2000robust}.

 Here we show that species $F$ in the controlled ERK model in Figure \ref{fig:ERK} admits RPA. We persistently disturb the parameters by changing them in time such as $k_{16}$ as shown in Figure \ref{fig:ERK}d (top). As expected, the concentration of phosphatase $F$ converges to different values for each perturbation as depicted in Figure \ref{fig:ERK}d (middle). However as Figure \ref{fig:ERK}d (bottom) shows, the controlled ERK system has robustness to the perturbations for $F$ as its concentration converges to the set-point $\frac{\mu}{\theta}=2$ regardless of parameter values. This RPA for $F$ basically arises because the steady state of $F$ is completely determined by the two reactions in the ACR controller, and it is independent on the parameters $k_i$ of the original ERK system. More details are in Section S3 of the supplementary material.

In addition to RPA, we also show that the controlled system is robust to a transient change of the system structure. We perturb the system by turning on an additional reaction $0\xrightarrow{10} F$ for time $[15,20]$. The $S_{11}$ concentration in the uncontrolled system initially converges, but it immediately responds to the transient in-flow as $F$ is produced during that time interval (Figure \ref{fig:ERK}e (left)). The concentration does not return to the previous steady state value after the transient perturbation is turned off. In the controlled system, the phosphorlyated protein $S_{11}$ also responds when the transient in-flow is switched on at $t=15$ as show in Figure \ref{fig:ERK}e (right). However, it is quickly driven back to the steady state after the perturbation is switched off at $t=20$.  
This robustness to the transient structural disturbance stems from the fact that the controlled system admits a single positive steady state concentration for the target species $F$ regardless of the input states. Hence when the transient inflow is turned off, $F$ in the controlled system converges to the set-point wherever the current state of the system has been driven by the perturbation.

\subsection{Control of Additional Species}\label{sec:additional species}

Recall that the basic ACR system \eqref{eq:ACRcontroller} consists of the control species $Z$ and the target species $A$. The fact that $Z$ only directly controls $A$ may impose limitations in some situations. We show in the following example that an ACR controller with reactions involving other network species can provide better performance.

For example, consider the following reaction network where no conservation relations exist:
\begin{align}\label{eq:unbound system}
\begin{split}
&\hspace{0.6cm} A	\\[-1ex]
\text{\scriptsize{$\kappa_2$}} \hspace{-0.2cm}&\nearrow \\[-0.6ex]
A+B \xrightarrow{\kappa_1} \emptyset & \\[-1ex]
\text{\scriptsize{$\kappa_3$}} \hspace{-0.2cm}&\searrow  \\[-1ex]
&\hspace{0.6cm} B
\end{split}
\end{align}

In this system, two proteins $A$ and $B$ are constantly produced but also degrade each other. Using the parameters $\kappa_1=1, \kappa_2=3$, $\kappa_3=5$, the concentration of $A$ decays toward zero as shown in Figure \ref{fig:AB}d. Despite the addition of the basic ACR system, the concentration of $A$ still decays to zero as shown in Figure \eqref{fig:AB}e.  See Section S5.2 in the Supplementary Material for additional details about this system, including the existence of positive steady states.
 
We design the expanded controller shown in Figure 3c to include both $A$ and $B$ in the reactions.  It can be verified that the mass-action system associated with this controller is ACR, with ACR species $A$. This is because the additional reactions $Z+B \rightleftharpoons Z$ do not change $Z$, so that the equation for $Z$ is the same as in the base ACR model. Such reactions that have no contribution to the control species are also used for the antithetic integral controller in \cite{briat2016antithetic, aoki2019universal}. Using $\theta=1,\mu=5,\alpha_1=2$, $\alpha_2=1$, one can see that this controller steers the positive steady state concentration of $A$ to $5$ for different initial conditions (Figure \ref{fig:AB}f). Here the $\alpha_1$ and $\alpha_2$ need to be chosen in a certain range. See Section S4 in the Supplementary Material for  more details about control of general 2-dimensional systems.

\begin{figure}[H]
\centering
	\includegraphics{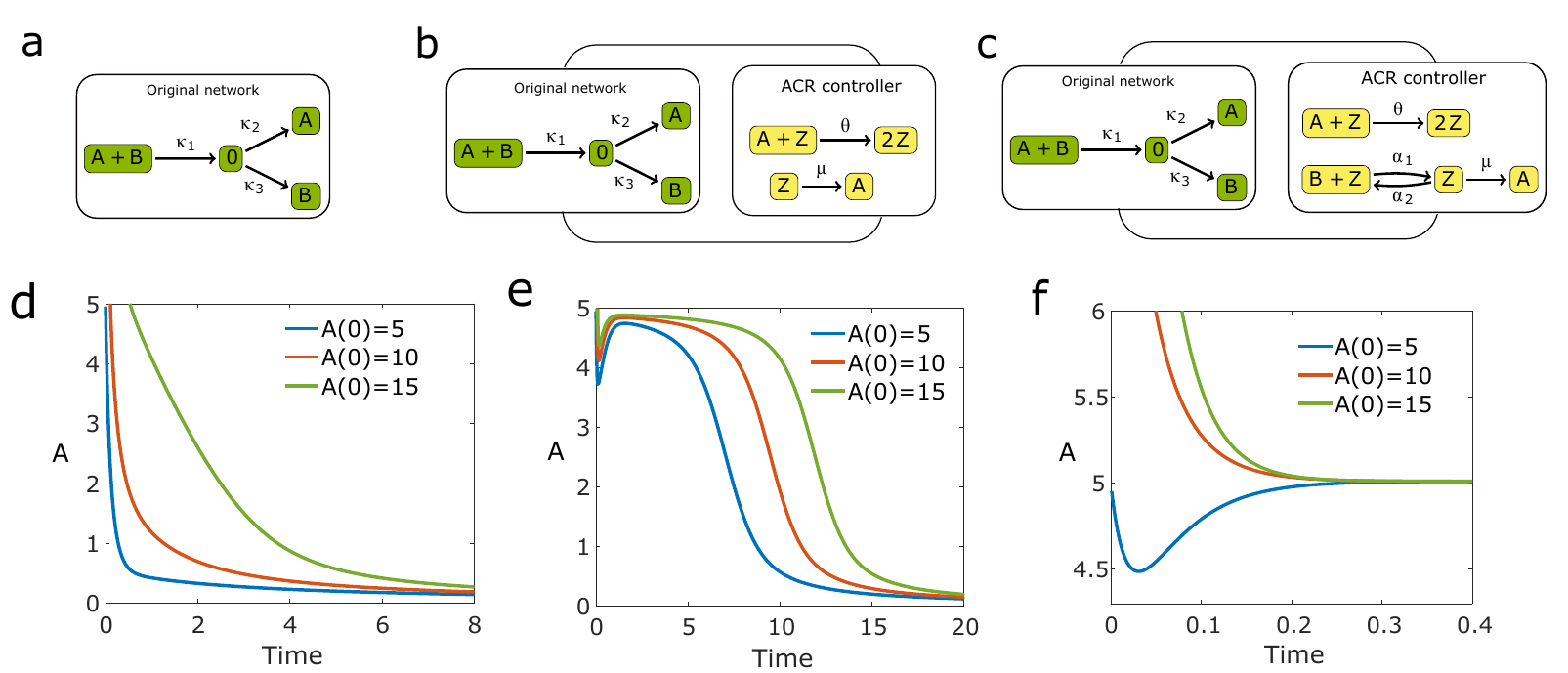}
	\caption{ \footnotesize{\textbf{a.} Original network, using parameters $\kappa_1=1, \kappa_2=3$ and $\kappa_3=5$. \textbf{b}. The basic ACR system is added to the original system, with $\mu=5$, $\theta=1$. \textbf{c.} Expanded ACR controller, with $\alpha_1=2$ and $\alpha_1=1$. \textbf{d.} The concentration of $A$ converges to zero in the original system. \textbf{e.} The basic ACR system in \ref{fig:AB}b fails to control $A$, as $A$ still converges to zero for different initial values. \textbf{f.} The concentration of $A$ is driven to the set-point $5$ with each initial condition, for the controlled system in c.}} \label{fig:AB}
\end{figure}

\subsection{Stochastic Control Using an ACR Module}
\label{subsec:stochastic control}

When a system contains species with low copy numbers, the intrinsic noise considerably affects the system dynamics. Therefore we model the system stochastically using a Markov process. This continuously evolving Markov process defined on a multi-dimensional integer grid has state-dependent transition rates (for more detail see Section~S1.1 in the Supplementary material). In the context of stochastic control, recent work by Mustafa Khammash and others has proposed controlling a target species by adding four reactions.  While that framework allows to control the mean of the target species, there could be significant variability in its noise. Our ACR approach makes use of topological properties of the original network to approximate the full distribution of the target species.  

In stochastic networks, if one of the species reaches zero copies, then a subset of the reactions in the system would be turned off, potentially preventing the species from ever being produced.  Such an extinction event can take place for $Z$ in the basic ACR controller as well as many other ACR systems \cite{AEJ2014}. In order to avoid this situation, we design the basic ACR system with sufficiently high copies of the control species. More generally, we assume that all species are classified into two types: highly abundant species such as control species $Z$ which are of order $N$ for a scaling parameter $N$, and low abundance species of constant order. We also scale the parameters of the controlled system to make all the reaction propensities of constant order. Under this same scaling, Enciso \cite{Enciso2016} used the technique of species `freezing' for an ACR system to generate a reduced network of low abundant species. It was further shown that if the reduced network has zero deficiency and is weakly reversible, then an ACR species of low order tends to follow a Poisson distribution centered at its ACR value, as time $t$ and the scaling parameter $N$ go to infinity. 

The work in \cite{Enciso2016} approximated the distribution of the target species with the help of a reduced stochastic model, which is the limit of the original stochastic network using a multiscaling procedure.   Similar types of approaches have been studied using different system scaling, network topological conditions or state space truncations \cite{anderson2017finite, enciso2019embracing, Gupta_Khammash2017,  StoQSSA2019,  kim2017reduction, sontag2017reduction,  MunskyKhammash2008}. The multiscaling assumption in \cite{Enciso2016} is somewhat special in that all reaction propensities have  constant order of magnitude up to finite time. 

Given a stochastic chemical reaction network, we now add an ACR controller and use the scaling procedure described above in order to study the resulting controlled system. To exemplify this we consider a model describing the dynamics of a receptor binding to a ligand and generating a downstream response (Figure \ref{fig:RL schematic}a). Many important biology models involve receptor-ligand interactions such as signal transduction, physiological regulation, and gene transcription.  In this case a ligand $L$ binds to an inactive receptor $R_0$ on the cell membrane, converting it into an active receptor $R$. Two active receptors are dimerized, forming the species $D$ which is phosphorylated sequentially in three different locations. The triphosphorylated dimer $D_3$ transmits the signal inside the cell by activating another protein $P$ as shown in Figure \ref{fig:RL schematic}d. We control the inactive receptor $R_0$ using the basic ACR system, in order to control the desired amount of active protein $P^*$.

Once again, the practical implementation of such a system must depend on the specific receptor.  We suggest a possible implementation as follows: suppose that a second ligand, called an antagonist, binds to the receptor forming a molecule $Z$, which prevents the binding of the original ligand (see Figure \ref{fig:RL schematic}d).  Suppose the complex $Z$ facilitates the recruitment of another antagonist to produce another copy of $Z$, leading to the reaction $Z+R_0\to 2Z$. The reaction $Z\to R_0$ simply represents the natural unbinding of the antagonist from $R_0$. Another option could be to think of $Z$ as a misfolded form of $R_0$, and of the reaction $Z+R_0\to 2Z$ as a prion-like effect where a misfolded receptor makes it more likely that a second receptor will misfold. In any of these cases, the introduction of an new molecule into the system (the antagonist or the misfolded protein) leads to two additional reactions that control the network.     

\begin{figure}[H]
\centering
\includegraphics{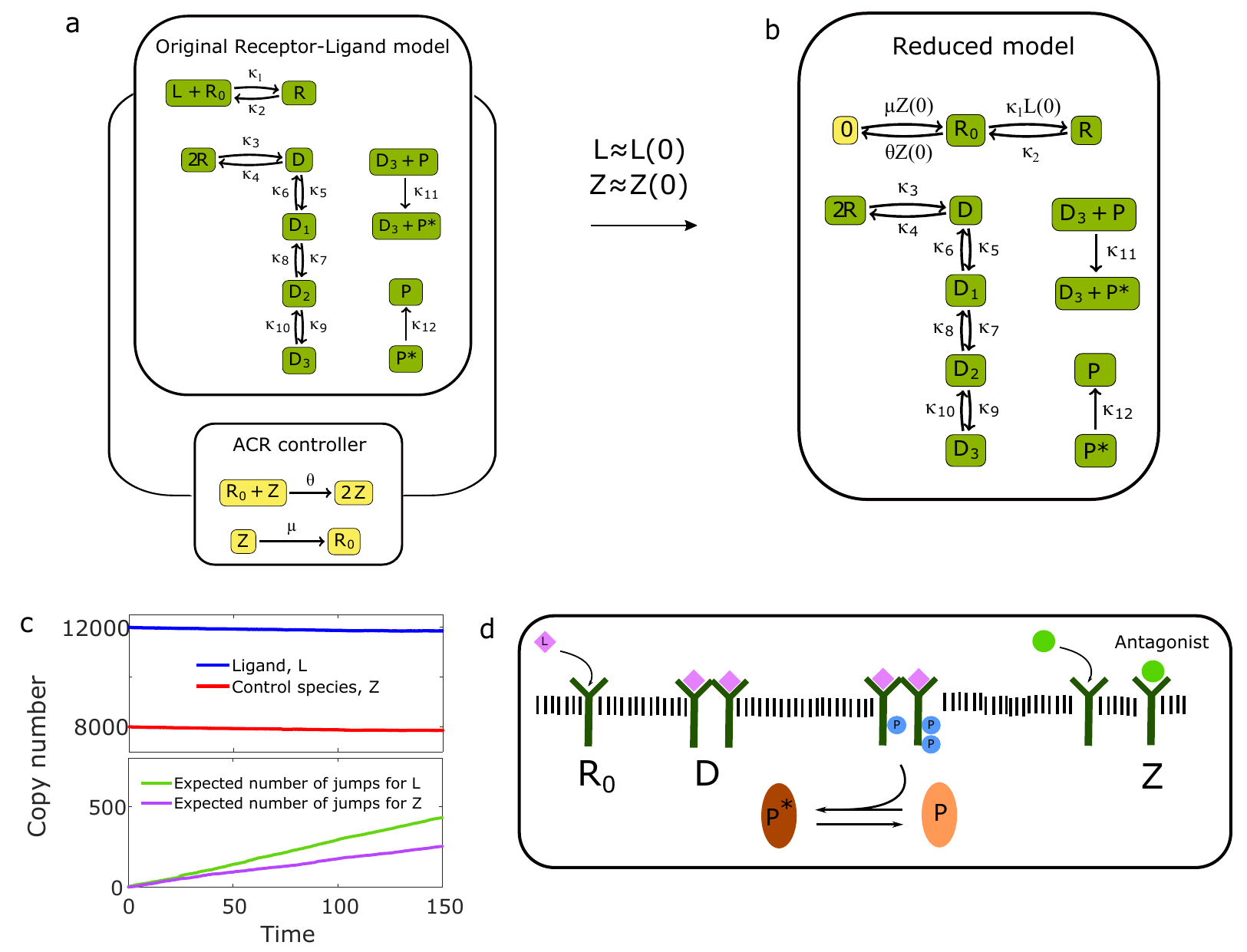}
\caption{\footnotesize{\textbf{a.} Reaction network for the receptor-ligand  pathway (green) and the ACR controller (yellow). \textbf{b.} Reduced model obtained by freezing $L$ and $Z$ at their initial values, respectively. \textbf{c.} Stochastic time evolution of the copy numbers of $L$ and $Z$, highlighting the small net change of $L$ and $Z$ in the system by time $t=150$. \textbf{d.} A schematic picture for the receptor-ligand model and the ACR controller}}\label{fig:RL schematic}
\end{figure}

We let the system start with initial counts $L(0)=1500, Z(0)=1000, R_0(0)\le 50$ and the initial copy numbers of all the other species equal to zero. Hence species $L$ and $Z$ are the high abundance species of order $N=1000$ and the other species are of low abundance. 

As mentioned above, the main idea of the control for this system is to approximate the distribution of $R_0$ by the reduced network in Figure \ref{fig:RL schematic}b, which we now explain. Parameters are chosen as $\kappa_1=0.82\times 10^{-3}, \kappa_2=1.37, \kappa_3=1.41, \kappa_4=1.79,  \kappa_5=1.02, \kappa_6=1.36, \kappa_7=1.97, \kappa_8=1.11, \kappa_{9}=1.55, \kappa_{10}=1.01,\kappa_{11}=1.34, \kappa_{12}=0.5, \theta=10^{-3}$ and $\mu=5\times 10^{-3}$. In order to arrive to this parameter set, parameters $\kappa_2$ through $\kappa_{12}$  were randomly chosen in the range $[1,2]$.
Parameters $\kappa_1,\theta$ and $\mu$ are associated with reactions involving high abundance species $L$ and $Z$, and they were chosen of order $\frac{1}{N}$ so that the reactions $L+R_0\to R, Z+R_0\to 2Z$ and $Z\to R_0$ have constant order propensities under mass-action kinetics. Details of the mass-action propensity computations are provided in Section S7.1 in the Supplementary Material. Because of the low propensities of the reactions relative to $N$, the expected change of species $L$ and $Z$ by $t=150$ are much smaller than the copy numbers of $L$ and $Z$ as Figure \ref{fig:RL schematic}c shows. By neglecting the relatively small number of fluctuations for $L$ and $Z$ shown in Figure \ref{fig:RL schematic}c, we can freeze them at their initial counts and obtain a reduced system in Figure \ref{fig:RL schematic}b.

\begin{figure}[H]
\centering
\includegraphics{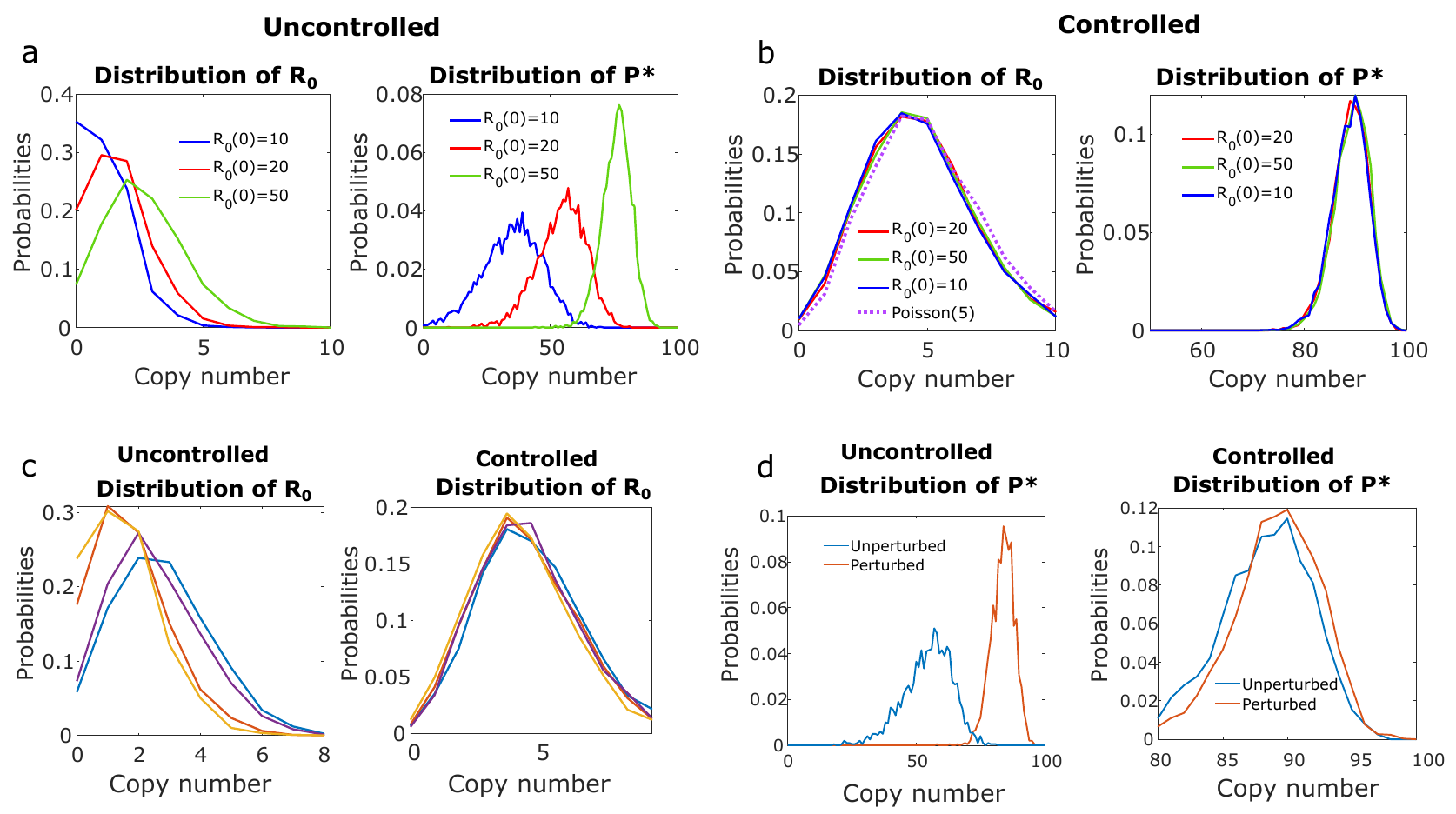}
\caption{ \footnotesize{Gillespie simulations \cite{Gillespie77} for the distribution of $R_0$ and activated protein $P^*$. We use the initial values $L(0)=1500, P(0)=100, R(0)\le 50$, and the remaining species have zero initial values. For the ACR controller, we set $Z(0)=1000$.
\textbf{a.}  For the uncontrolled system, distribution at time $t=150$ of inactive receptor $R_0$ (left) and active protein $P^*$ (right). 
\textbf{b.} (Left) For the controlled system, distribution at time $t=150$ of $R_0$ (left) and $P^*$ (right).
\textbf{c} and \textbf{d.} Robustness of the system to randomized parameter perturbations and a transient reaction perturbation, setting $R_0(0)=20$. \textbf{c.} Four distributions of $R_0$ obtained by simulations of the uncontrolled
(left) and controlled (right) receptor-ligand model with randomly
perturbed system parameters. \textbf{d.} Unperturbed and perturbed distributions of $P^*$ by transiently
switched-on reaction $0\xrightarrow{2} R_0$ for $t\in [50,80]$ in the uncontrolled (left) and controlled
(right) receptor-ligand model.}}\label{fig:stochasticLR} 
\end{figure}
 
 Recalling that $R_0$ in the receptor-ligand network in Figure \ref{fig:RL schematic}a and its associated reduced network \ref{fig:RL schematic}b behave similarly over time, we carry out an analysis of the reduced network. We initially  ignore the reactions $D_3+P \to D_3+P^*$ and $P^*\to P$ because they only affect the proteins $P,P^*$ without an effect on other species. The resulting network is reversible, and it has zero deficiency since
 \begin{align*}
 n-\ell-s=8-2-6=0,
 \end{align*}
 where $n$ is the number of complexes, $\ell$ is the number of connected components, and $s$ is the rank of the stoichiometry matrix. The distribution of $R_0$ in the reduced network converges to a Poisson distribution in the long run \cite{AndProdForm}. Depending on the number of states of the stochastic system, the long-term distribution could be a truncated Poisson distribution. However, the reaction in the ACR controller are always reduced to the inflow and the outflow of the target species in the reduced system. Hence the long-term distribution is always approximated a Poisson distribution. See Section S6.3 in the Supplementary Material for a more rigorous statement.

Furthermore, the rate of the Poisson distribution associated with $R_0$ is determined by the steady state value of the corresponding deterministic system \cite{AndProdForm}. Since the rate is equal to the mean of the Poisson distribution, the mean of $R_0$ is close to $\frac{\mu}{\theta}$ in the long run.  Thus species $R_0$ in the controlled system shown in Figure \ref{fig:RL schematic}a at a sufficiently large finite time $t$ is well approximated by the Poisson distribution centered at $\frac{\mu}{\theta}$. This is shown in Figure \ref{fig:stochasticLR}b (left) at $t=150$, where for any input $R_0$ the distribution seems almost Poisson($\frac{\mu}{\theta}$).  Consequently the protein $P$ distribution is also robustly stabilized as shown in Figure \ref{fig:stochasticLR} b (right). On the other hand both mean and variance of $R_0$ in the original system vary with respect to different inputs (Figure \ref{fig:stochasticLR}a, left), and this causes the distribution of $P^*$ to change accordingly (Figure \ref{fig:stochasticLR}a, right). 

In an additional analysis, we study the convergence speed of the distribution of the reduced system towards a stationary distribution in Section S7.1 of the Supplementary Material. The underlying mathematical framework, with an emphasis on the accuracy of the approximation between the controlled network and the reduced system, is further described in our follow-up paper \cite{EK2019}.

%\subsection{Robust Perfect Adaptation in a Stochastic 
% Example}\label{subsec:stochastic robust}
For the receptor-ligand system, the basic ACR module also robustly controls the target species to perturbations. We perturb the parameters $\kappa_i$ in Figure \ref{fig:stochasticLR} using the equation $\kappa'_i=\kappa_i+r_i$, where the $r_i$ are sampled from a uniform distribution on the interval $[0,3]$. As shown in Figure \ref{fig:stochasticLR}c (left), the uncontrolled system generates distinct distributions of $R_0$ at $t=150$ for randomly perturbed parameters in each simulation. On the other hand, the distributions of $R_0$ at $t=150$ generated by the controlled system with the same parameters closely approximate the Poisson distribution with mean $\frac{\mu}{\theta}=5$, as shown in Figure \ref{fig:stochasticLR}c (right).

Plots in Figure \ref{fig:stochasticLR}d show how $P^*$ robustly behaves with a transient perturbation in the controlled system. We perturb the system with a reaction $0\xrightarrow{2} R_0$ only for time $t\in [50,80]$. Because of this additional input, the distribution of $P^*$ at $t=150$ is shifted to the right for the uncontrolled system (Figure \ref{fig:stochasticLR}d, left). However for the controlled system, Figure \ref{fig:stochasticLR}d  (right) shows that its distribution is robust to the transient perturbation.

\subsection{Stochastic Control Using a Hybrid Approximation}\label{subsec:faststo}

Recall that in the receptor-ligand system in Section \ref{subsec:stochastic control}, the fluctuation of species $L$ and $Z$ in the concentrations are negligible since the reaction propensities are small compared with their concentration. However, many classical studies of stochastic systems eliminate this assumption of small reaction propensities, see for instance the classical work by Kurtz \cite{Kurtz72}. Reaction propensities could also have different orders of magnitude with respect to $N$. In such cases, the stochastic system is modeled under a multiscaling regime, and its behaviour can be studied using a hybrid deterministic-stochastic system \cite{anderson2017finite, ball2006asymptotic, KangKurtz2013, preziosi2006hybrid, ge2015stochastic, lin2018efficient, ge2018relatively}. In a hybrid system, the counts of some species change stochastically while  the concentrations of the other species change continuously. We modify the basic controller in order to control such a hybrid system. In this section, using the finite time stationary distribution approximation in \cite{anderson2017finite}, we show that an expanded basic ACR system can be used to control a stochastic system under more general scaling.

As an example, we provide a dimer-catalyzer model in Figure \ref{fig:faststochastic}a. In this system the initial copy number of species $X^*, X_1, C, C_p$ and $C_{pp}$ are all of order $N=1000$. Hence using mass action kinetics all the reactions have order $N$ propensities. Using the framework established by Anderson et al \cite{anderson2017finite}, we approximate the original model with the hybrid system in Figure \ref{fig:faststochastic}d. The stochastic part of the hybrid system has zero deficiency and is weakly reversible so that the distribution of the target species $X$ is Poisson at a finite time $t=5$ \cite{anderson2017finite, AndProdForm}. The stochastic and deterministic parts are coupled, as the mean $m(t)$ of $X$ at finite time $t$ is determined by the dynamics of the deterministic system as depicted in Figure \ref{fig:faststochastic}e. A flux balance analysis implies that $m(t)=\dfrac{k_1x^\ast(t)+\mu z(t)}{k_2c_{pp}(t)+\theta z(t)}$.  In Section S7.2 of the Supplementary Material, we show that $m(t)$ converges to the desired value $\dfrac{\mu}{\theta}$, as $t\to \infty$, when the initial concentration of $Z$ is sufficient. The mean $m(t)$ actually converges to $\dfrac{\mu}{\theta}$ quickly as shown in Figure \ref{fig:faststochastic}e. Therefore unlike the distribution of $X$ in the original system as shown in Figure \ref{fig:faststochastic}b, the distribution of $X$ in the controlled system is approximately Poisson centered at $\dfrac{\mu}{\theta}=5$ at time $t=5$ for randomly sampled parameters $\kappa_i$ (\ref{fig:faststochastic}c).

\begin{figure}[h!bt]
\centering
\includegraphics[]{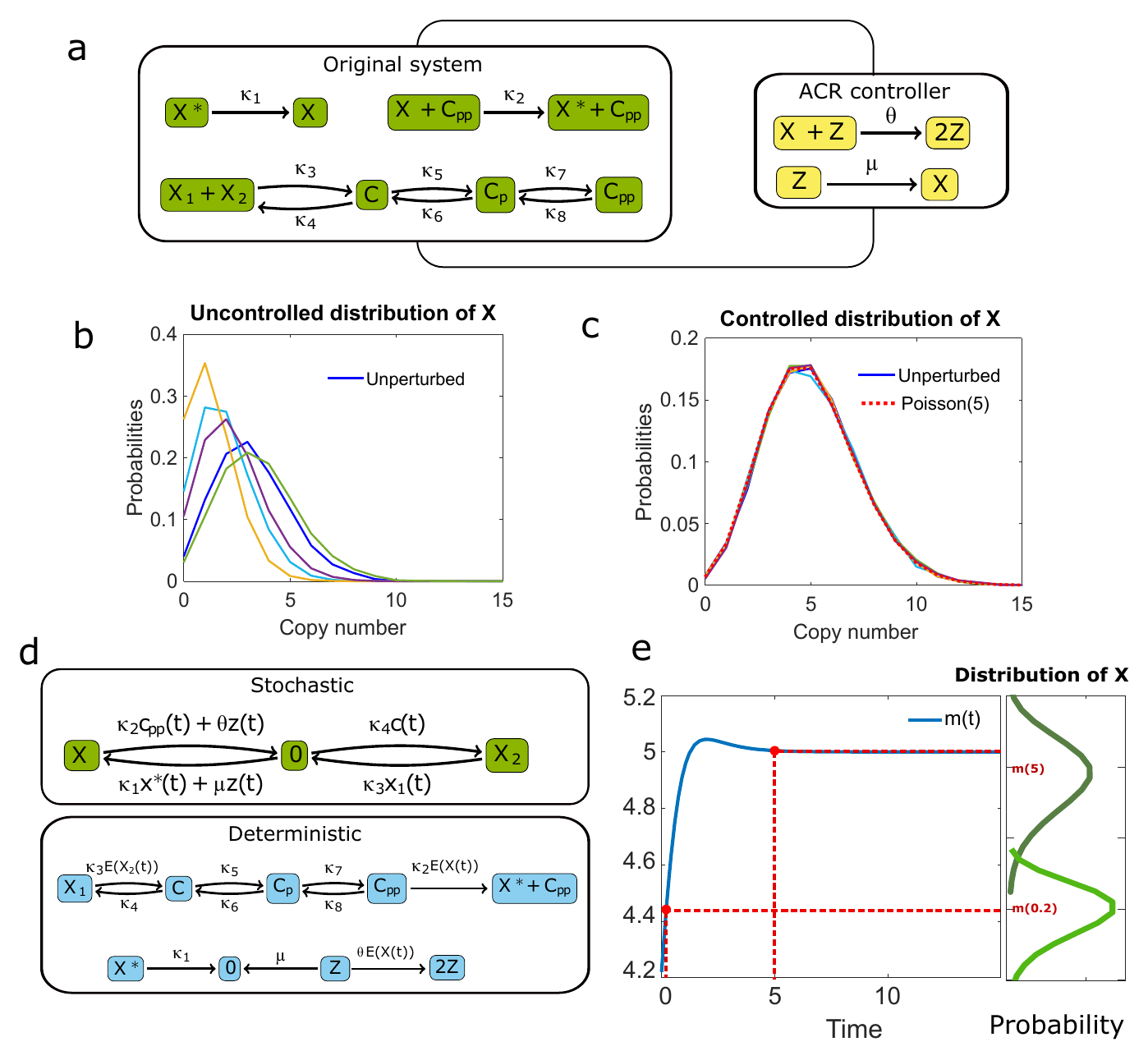}
 \caption{\footnotesize{Dimer-catalyzer model with high reaction rates of order $N$ \textbf{a.} The original model and the ACR controller. Parameters are $\kappa_1=1.38, \kappa_2=1.58, \kappa_3=1.19, \kappa_4=1.01, \kappa_5=1.17, \kappa_6=1.92, \kappa_7=1.11$ and $\kappa_8=1.88$. The parameters are sampled uniformly randomly in $[1,2]$. We use $\mu=5$ and $\theta=1$ for the ACR controller. \textbf{b.} Distribution of $X$ at $t=5$ in the uncontrolled system in Figure \ref{fig:faststochastic}a with both the chosen parameters and randomly perturbed parameters. \textbf{c.} Distribution of $X$ at $t=5$ in the controlled system using both the chosen parameters and randomly perturbed parameters. Red dotted line indicates the Poisson distribution with the rate $5$. \textbf{d.} The controlled system is approximated with a hybrid model consisting of stochastic and deterministic parts. \textbf{e.} The mean $m(t)$ of the distribution of $X$ is displayed (blue solid line) along with the distribution of X at times $t=0.2$ and $t=5$.}}\label{fig:faststochastic} 
\end{figure}
%
%\begin{figure}[h!bt]
%\centering
%\includegraphics[]{figure_module/cppdimer_2.pdf}
% \caption{\footnotesize{Dimer-catalyzer model with high reaction rates of order $N$ \textbf{a.} The original model and the ACR controller. Parameters are $\kappa_1=1.38, \kappa_2=1.58, \kappa_3=1.19, \kappa_4=1.01, \kappa_5=1.17, \kappa_6=1.92, \kappa_7=1.11$ and $\kappa_8=1.88$. The parameters are sampled uniformly randomly in $[1,2]$. We use $\mu=10, \theta=1$ and $\alpha=20$ for the ACR controller. \textbf{b.} Distribution of $X$ at $t=1.5$ in the uncontrolled system in Figure \ref{fig:faststochastic}a with both the chosen parameters and randomly perturbed parameters. \textbf{c.} Distribution of $X$ at $t=1.5$ in the controlled system using both the chosen parameters and randomly perturbed parameters. \textbf{d.} The controlled system is approximated with a hybrid model consisting of stochastic and deterministic parts. \textbf{e.} The mean $m(t)$ of the distribution of $X$ is displayed (blue solid line) along with the distribution of X at times $t=0.06$ and $t=1.5$.}}\label{fig:faststochastic} 
%\end{figure}

\section{Discussion}

Absolutely robust networks have the property that the steady state value of a target species is independent of the total mass of the system. In this paper we have provided a class of controllers based on absolutely robust networks. We define a control species that interacts with the target species, embedding an absolutely robust network into the given network to enforce target species robustness. For deterministically modeled networks, this type of controller not only stabilizes the target species at the desired value by tuning the parameters of the ACR controller, but also makes the species robustly adapted to parameter perturbations and a transiently supplied additional reactions. We demonstrate control for deterministic system through an ACR system with an ERK model.  We illustrate some of our results with the so-called base ACR controller, but we show that other ACR networks can be also be used.

We also show that ACR controllers have the ability to control stochastic networks. The need for control stochastic system is becoming clear in many disciplines of systems and synthetic biology, particularly given the low species counts present in many individual cells.  The average of a species concentration is a deterministic quantity of a stochastic system, thus one might think that a controller used for a typical deterministic system could also implement stochastic control. However for a nonlinear system, studying the dynamics associated with the averages requires nontrivial tools such as moment closure \cite{gillespie2009moment}. Even if mean control is valid with a given controller, the system may be still out of control if noise is not properly accounted for. Furthermore, because the associated stochastic system describes molecular counts of each species instead of concentrations, some species might reach a zero state and lead to an extinction event.  
   
As a result, for the control of stochastic systems it can be helpful to use advanced mathematical tools such as theoretical analysis of chemical reaction networks. Using an ACR controller for stochastic systems here involves two main mathematical tools, multiscaling model reduction and deficiency zero theorems. To avoid a potential breakdown of a controller because of lack of  reactants, we design an ACR controller with high copies of the control species. Using the tools above, we show that a species of interest in the controlled system is roughly Poissonian with tunable mean and variance. Combining the multiscaling model reduction and the zero deficiency condition, we show that a simple ACR system can control both mean and variance of an inactive receptor in stochastic receptor-ligand system as the distribution of the inactive receptor roughly follows a Poisson distribution centered at the desired value. The controlled stochastic system also admits robust perfect adaptation as does the corresponding deterministic system.

We note that the basic ACR controller used throughout this paper has a connection to classical control theory, as it admits a non-linear integral feedback that is a well-studied characteristic of robustly adapted systems \cite{ferrell2016perfect, ma2009defining, yi2000robust}. Integral feedback loops arise in many important biological phenomena such as bacterial chemotaxis, photoreceptor responses, or MAP kinase activities. For the simple mass-action ACR system
\begin{align}
Z+X\xrightarrow{\theta} 2Z, \quad Z\xrightarrow{\mu} X,
\end{align}
 the concentration of $Z$ satisfies
$\quad \frac{d}{dt}z(t)=z(t)(-\mu+\theta x(t))$.
Dividing by $z(t)$ and integrating on both sides, we obtain 
\begin{align}
\log z(t)=\log z(0)+\int_0^t (\theta x(s)-\mu)ds,
\end{align}
which is a non-linear integral feedback relation. Such types of integral feedback loops appear in many different biochemical systems \cite{briat2016antithetic, briat2018antithetic, cappelletti2019hidden, shoval2010fold, xiao2018robust}.

One of the major issues on synthetic controllers is the practical implementation of the proposed controller. Aoki et al. \cite{aoki2019universal} show that an antithetic controller could be constructed using two control proteins, $\sigma$ factor SigW and anti-$\sigma$ factor RsiW, in an emph{E. coli} plasmid implementation. 
For an ACR controller, it remains an open question whether its design is practically feasible \emph{in vivo} or \emph{in vitro}. One key for synthesizing it is the bifunctionality of an enzyme that potentially brings ACR to the system, as it has been observed for other ACR applications \cite{shinar2007input, dexter2013dimerization}. Notice that the control species $Z$ mediates both production and degradation of the target species $X$ in the basic ACR controller. We have suggested some ideas for implementing ACR controllers in our examples. The control species could be obtained by phosphorlyation of a bifunctional target species, by antagonist ligand binding, or by a form of protein misfolding. 

As sufficient network architectural conditions for ACR property have been shown for example in the regulation of osmolarity in bacteria \cite{shinar2010structural}, designing more general ACR controllers could be feasible. Therefore we believe that this new approach introduced in this paper could help control other biochemical networks in a way that takes into account stochastic effects.
\section*{Acknowledgements}
We would like to thank Eduardo Sontag, Carsten Wiuf, Chuang Xu, Linard Hoessly and Jason Dark for key suggestions regarding this work.

\section*{Funding}
This work is partially supported by NSF grant DMS1763272 and Simons Foundation grant 594598 (Qing Nie). This work is also supported by NSF grant DMS1616233.

\newpage

\begin{center}
\Huge{Supplementary Material}
\end{center}

\section{Chemical reaction network theory}

\subsection{Reaction networks}
\label{sec:reactionnetworks}

In this section, we provide mathematical models associated with biochemical systems that we use in the main manuscript, starting with the introduction of reaction networks. A biochemical system can be described with a reaction network, which consists of constituent species, complexes that are combinations of species, and reactions between complexes. A triple $(\S,\C,\Re)$ represents a reaction network where $\S, \C$ and $\Re$ are collections of species, complexes and reactions, respectively.

\begin{example}
Consider the following reaction network describing a substrate-enzyme system.
\[
	S+E \rightleftharpoons SE \rightarrow E+P,
\]
For this reaction network, $\S = \{S,E,SE,P\}$, $\C=\{S+E,SE,E+P\}$ and $\Re=\{S+E\rightarrow SE, SE\rightarrow S+E,SE\rightarrow E+P\}$. \hfill $\triangle$
\end{example}

Regarding a reaction network as a directed graph, each connected component is termed a \emph{linkage class}. A subset $Q$ of complexes in a linkage class is a \emph{strongly connected component} if and only if for any two complexes $y, y'\in Q$, there exists a path of directed edges connecting from $y$ to $y'$. 
If every linkage class in a network consists of a single strongly connected component, then the network is \emph{weakly reversible}. By the definition, in a network $(\S,\C,\Re)$, the set of complexes $\C$ can be decomposed into disjoint linkage classes. Allowing that a single complex can be a strongly connected component, every linkage class is decomposed into disjoint strongly connected components.

For example, for the following network $(\S,\C,\Re)$
\begin{equation}\label{eq:example components}
\emptyset \rightleftharpoons C, \quad A \rightleftharpoons B \to A+B \to 2C\rightleftharpoons B+C,
\end{equation}
there are two linkage classes $\{\emptyset,C\}$ and $\{A,B,A+B,2C,B+C\}$. Linkage class $\{\emptyset,C\}$ consists of a single strongly connected component. Linkage class $\{A,B,A+B,2C,B+C\}$ has three strongly connected components $\{A,B\}, \{A+B\}$ and $\{2C,B+C\}$. 

Each strongly connected component is further classified into two categories. For a strongly connected component $Q$, if there is no path of directed edges connecting from $y\in Q$ to $y'\not \in Q$, then $Q$ is a \textit{terminal connected component}. Otherwise, $Q$ is a \textit{non-terminal connected component}.
A complex contained in a terminal connected component is called a \textit{terminal complex}, otherwise it is called a \textit{non-terminal complex}. In \eqref{eq:example components}, strongly connected components $\{\emptyset,C\}$ and $\{2C,B+C\}$ are terminal connected components, and the others are non-terminal connected components.

We introduce a domain on which the dynamical system associated with a reaction network is defined.
\begin{defn}
Let $(\S,\C,\Re)$ be a reaction network. 
For a $x_0\in \R^d_{>0}$, we call a set $S_{x_0}=x_0+\text{span}\{y'-y:y\to y' \in \Re\}\cap \R^d_{>0}$ the \textit{stoichiometry class}.
\end{defn}

\subsection{Dynamical systems}

For a dynamical system of a reaction network, a reaction rate constant $\kappa$ for each reaction $y\to y'$ gives a weight on each reaction, and we denote $y\xrightarrow{\kappa} y'$ to incorporate the rate constant. With a collection of rate constants $\K$, we denote the associated dynamical system for $(\S,\C,\Re)$ by $(\S,\C,\Re,\K)$. 

%For a reaction network $(\S,\C,\Re)$, a linkage class is a set of complexes in which any complexes are connected by a path of reactions. 
%For any two complexes in any linkage of a reaction network, there exist two paths connecting two complexes in both ways, the reaction network is weakly reversible. 
%   
%
%
%\begin{example}\label{ex11}
%Consider the reaction network with associated reaction graph
%\begin{align*}
%& 3A \rightarrow 2B \rightleftharpoons C+D, \qquad \emptyset \rightarrow B,\\
%& A+B+C \rightarrow 2C.\\
%  \hspace{7.8cm}  \displaystyle \nwarrow \hspace{.35in}  \swarrow \\
% \hspace{3.37in} D
%\end{align*}
%This network has three linkage classes.  The right-most linkage class is weakly reversible, whereas the other two are not.
%\hfill $\triangle$
%\end{example}

For mathematical models of reaction networks, we typically assume that the associated system is spatially well-stirred. In this case the usual mathematical model for a reaction network is either a system of ordinary differential equation or a continuous-time, discrete-space Markov process. When each species has high copy number so that intrinsic noise can be averaged out, the concentration vector $x(t)$ of species in a reaction network $(\S,\C,\Re)$ is typically modeled with a deterministic network system  
\begin{equation}\label{eq:deterministic system}
\dfrac{d}{dt}x(t) = \sum_{y\rightarrow y'}\kappa_{y\rightarrow y'} \eta_y (x(t))(y'-y), 
\end{equation}
where $\eta_y : \R^d_{>0}\to \R_{> 0}$ is a rate function associated to a reaction $y\xrightarrow{\kappa_{y\to y'}} y'$.
One of the prevalent choice of the rate function is \textit{mass action kinetics} which defines $\eta_{y}(x)=x^y$, where $u^v = \prod_{i=1}^d u_i^{v_i}$ for two vectors $u,v\in \R^d_{\ge 0}$.

The intrinsic stochasticity of a system is considered when each species in a reaction network system has low copy number. For the usual stochastic model, we use a continuous time, discrete state space Markov process  $X(t) \in \Z^d_{\ge 0}$ defined on $\Z^d_{\ge 0}=\{z \in \Z^d:z_i \ge 0 \text{ for each } i\}$. The transitions of $X$ are determined by the reaction vectors. Letting $h(t)$ be a function such that $\dlim_{t\to 0}\dfrac{h(t)}{t}=0$, the transition probabilities are defined as
\begin{equation}\label{eq:prob}
P(X(t+\Delta t) = z+y'-y \ | \ X(t)=z) = \sum_{\substack{\bar y \to \bar y'\\ \bar y'-\bar y =y'-y}}\kappa_{\bar y \to \bar y'}\lambda_{\bar y\rightarrow \bar y'}(z)  \Delta t + h(\Delta t), 
\end{equation} where $y'-y$ is a reaction vector associated with a reaction $y\to y'$, and $\lambda_{y\to y'}:\Z^d_{\ge 0} \to \R_{\ge 0}$ is the \emph{reaction intensity} representing how likely the associated reaction $y\to y'$ fires. 

% and $z$ will represent a state in the state space $\mathbb{S}$ of the associate Markov process $X$. %In addition, the lower case $s(t)$ represents the concentration of a species $S$ in the deterministic system that will be introduced below. 
 %An infinitesimal behavior of the associated Markov process can be described with the generator $\mathcal{A}$  \cite{Kurtz86},
%\begin{align}\label{gen5}
% \mathcal{A}V(x) = \lim_{h\to 0} \frac{E_x(V(X(h)))-V(x)}{h}=\sum_{y \rightarrow y' \in \Re} \lambda_{y\rightarrow y'}(x)(V(x+y'-y)-V(x)),
%\end{align}
%for a function $V:\Z^d_{\ge 0} \to \R$.

% Let $E_x$ denote the expectation of $X$ with $X(0)=x$ and let $p_x(A,t)=P(X(t)\in A|X(0)=x)$ for $A \subset \Z^d_{\ge 0}$ and $x \in \Z^d_{\ge 0}$. We often omit the subindex $x$ when the circumstance does not incorporate an initial condition $x$. Then for a non-explosive Markov process $X$ associated to a reaction network $(\S,\C,\Re)$, each $p(z,t)$ solves Kolmogorov forward equation (master equation)
%\begin{align}\label{eq:master}
%\frac{d}{dt}p(z,t)=\sum_{y\to y' \in \Re}\lambda_{y\to y'}(z-y'+y)p(z-y'+y,t)-\sum_{y\to y'\in \Re}\lambda_{y\to y'}(z)p(z,t).
%\end{align}
The usual choice of the propensity functions for a stochastic network system $(\S,\C,\Re)$ is
\begin{equation}\label{mass}
\lambda_{y\rightarrow y'}(x)= x^{(y)},
\end{equation} 
where $u^{(v)}=\displaystyle \prod_{i=1}^d \dfrac{u_i!}{(u_i-v_i)!}\mathbf{1}_{\{u_i \ge v_i\}}$ for $u,v \in \Z^d_{\ge 0}$. This choice of the propensity function is \textit{stochastic mass-action kinetics}. In both this supplementary material and the main text, we model both the deterministic and the stochastic dynamical system under mass action kinetics. Letting $\K$ be the set of reaction intensities associated with $\Re$, the quadruple $(\S,\C,\Re,\K)$ defines a (either deterministic or stochastic) dynamical system associated with the reaction network $(\S,\C,\Re)$.

An infinitesimal behavior of the associated $X$ can be described with the infinitesimal generator $\mathcal{A}$  \cite{Kurtz86},
\begin{equation}\label{gen5}
 \mathcal{A}V(x) = \lim_{h\to 0} \frac{E_x(V(X(h)))-V(x)}{h}=\sum_{y \rightarrow y' \in \Re} \lambda_{y\rightarrow y'}(x)(V(x+y'-y)-V(x)),
\end{equation}
for a function $V:\Z^d_{\ge 0} \to \R$, where $E_x$ denotes the expectation of the process whose initial point is $x$.
%See \cite{AndKurtz2011, AK2015,KurtzPop81} for the connection between the stochastic and deterministic models.  The choice of rate function, i.e.~$\kappa_{y\to y'} x(t)^y$, is termed \textit{deterministic mass-action kinetics.}
% In this paper, lower case characters such as $x(t)$ are used for the concentration at time $t$, and the counts of a species at time $t$ will be denoted by upper characters such as $S(t)$.

%Recalling the discussion below Definition \ref{def:21}, we will sometimes write $x^{\sum_{i=1}^d y_iS_i}$ for $\prod_{i=1}^d x_i^{y_i}$.   For example, we have $x^{2S_1 + S_2} = x_1^2 x_2$.

%For a dynamical system of $(\S,\C,\Re)$, we incorporate a rate constant $\kappa>0$ for each reaction $y\to y' \in \Re$ and denote $y \xrightarrow{\kappa} y'$. For a fixed parameter set of rate constants, \begin{align*}
%\K = \{\kappa_{y\to y'} >0 : y\to y'\in \Re\},
%\end{align*}
%we denote $(\S,\C,\Re,\K)$ a reaction network incorporated with the parameter set $\K$. 

\subsection{Deficiency zero theory}\label{subsec:deficiency}
The deficiency of a reaction network is a positive integer determined solely by the structure of the network regardless of parameter values. Let $(\S,\C,\Re)$ be a reaction network with $m$ complexes and $\ell$ linkage classes. Let further $s$ be the rank of the stoichiometric matrix whose $i$-th column is given by $i$-th reaction in $\Re$. The deficiency $\delta$ is equal to
\begin{align*}
m-\ell-s.
\end{align*}
There are a couple of interpretations of the deficiency. First, we can represent the deterministic system \eqref{eq:deterministic system} as
\begin{align*}
\frac{d}{dt}x(t)=YA_\K \psi(x(t))
\end{align*}
with a stoichiometry coefficient matrix $Y$, rate constant matrix $A_\K$ and the rate function $\phi(x)$ (See \cite{FeinbergLec79} for more details). Then the deficiency $\delta$ of a network $(\S,\C,\Re)$ satisfies
%$|\S|\times|\C|$ matrix $Y$ such that $i$th column of $Y$ is the vector associated with the $i$th complex and $A$ such that if $i$th species, $|\C|\times |\C|$ matrix $A$ such that , then
\begin{align*}
\delta=dim(Ker(Y)\cap Im(A_\K)).
\end{align*} 
Second, the deficiency roughly stands for redundancy of the network in the following sense. Consider the following two networks,
\begin{align*}
\emptyset \rightleftharpoons A \quad \text{and} \quad \emptyset \rightleftharpoons A \rightleftharpoons  2A.
\end{align*} 
The deficiency of the left reaction network is $0=2-1-1$. The deficiency of the right reaction network is $1=3-1-1$. This difference stems from the additional reaction $A \rightleftharpoons 2A$ in the right network. The gain and loss of one $A$ species is already realized with reaction $\emptyset \rightleftharpoons A$. Hence reaction $A \rightleftharpoons 2A$ is redundant. 

%Although the dynamical behaviors of the two the networks are totally different as the rates of the reactions are different, two reaction vectors are same. Therefore we can regard the right network has some redundancy.

Zero deficiency combined with weak reversibility of reaction networks implies very strong characteristics of the associated system dynamics for both deterministic models and stochastic models. 
\begin{thm}[Horn 1972 \cite{Horn72}, Feinberg 1972 \cite{Feinberg72}]
Let $(\S,\C,\Re)$ be a weakly reversible reaction network with zero deficiency. Then for any choice of rate parameters, the associated deterministic dynamics endowed with the mass-action kinetics admits a unique locally asymptotic stable positive steady state at each stoichiometry class.
\end{thm}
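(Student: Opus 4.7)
The plan is to follow the classical Horn--Jackson--Feinberg strategy: reduce the claim to the existence of a \emph{complex balanced} equilibrium and then exploit the associated pseudo-Helmholtz free energy as a strict Lyapunov function. Recall that $x^{*} \in \R^d_{>0}$ is complex balanced if $A_\K \psi(x^{*}) = 0$, where $\psi(x)_y = x^y$ is the vector of monomials indexed by complexes; equivalently, at $x^{*}$ the total inflow and outflow of every complex match.

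First I would establish the existence of at least one positive complex balanced equilibrium in each stoichiometric class $S_{x_0}$. Weak reversibility guarantees that $\mathrm{Ker}(A_\K)$ contains a strictly positive vector $v$, obtained by piecing together the stationary distributions of the continuous-time Markov chain on complexes whose generator is built from the rate constants, restricted to each (now single) terminal strong component. The task is then to realize $v$ as $\psi(x^{*})$ for some $x^{*} > 0$, i.e.\ to solve the linear system $y \cdot \log(x^{*}) = \log(v_y)$ indexed by complexes $y$. This system is consistent precisely when a certain compatibility obstruction vanishes, and a linear-algebraic argument identifies that obstruction with $\mathrm{Ker}(Y) \cap \mathrm{Im}(A_\K)$, whose dimension is exactly $\delta = m - \ell - s$. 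Hence $\delta = 0$ forces solvability and produces $x^{*}$; a rescaling then places a representative inside the prescribed $S_{x_0}$.

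Next I would introduce the Lyapunov function
\begin{equation*}
V(x) = \sum_{i=1}^{d} \bigl[\, x_i \log(x_i/x^{*}_i) - x_i + x^{*}_i \,\bigr].
\end{equation*}
A direct computation using $\nabla V(x)_i = \log(x_i/x^{*}_i)$ gives
\begin{equation*}
\dot{V}(x) = \sum_{y \to y'} \kappa_{y \to y'} \, x^y \, \bigl\langle \log(x/x^{*}),\, y' - y \bigr\rangle,
\end{equation*}
which, after regrouping reactions by source complex, invoking the complex balance relations at $x^{*}$, and applying the elementary inequality $u \log(u/w) - u + w \ge 0$ (equality iff $u = w$), yields $\dot V(x) \le 0$ on $\R^d_{>0}$, with equality exactly on the set of complex balanced states. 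Strict convexity of $V$ on each $S_{x_0}$ then forces a unique minimizer, and this minimizer must itself be complex balanced; therefore the positive complex balanced equilibrium in $S_{x_0}$ is unique.

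Finally I would invoke LaSalle's invariance principle: sublevel sets of $V$ inside $S_{x_0}$ are compact, so trajectories are bounded and converge to the largest invariant subset of $\{\dot V = 0\}$, which by the preceding uniqueness is the singleton $\{x^{*}\}$. This delivers local asymptotic stability within $S_{x_0}$. The main obstacle is the existence step: translating the purely combinatorial condition $\delta = 0$ into solvability of the nonlinear system $\psi(x^{*}) \in \mathrm{Ker}(A_\K)$ requires the careful identification of the obstruction space with $\mathrm{Ker}(Y) \cap \mathrm{Im}(A_\K)$ and is the technical heart of the argument; once this is in hand, the Lyapunov and LaSalle portions are essentially mechanical.
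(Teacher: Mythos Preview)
The paper does not supply its own proof of this theorem: it is quoted verbatim as a classical result attributed to Horn and Feinberg and is used as a black box thereafter. So there is nothing to compare against in the paper itself.

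That said, your outline is essentially the original Horn--Jackson--Feinberg argument, and is sound at the level of a sketch. Two small points of imprecision. First, once a single complex balanced equilibrium $x^{*}$ has been produced, the passage to a representative in a prescribed stoichiometric class $S_{x_0}$ is not a ``rescaling'' but rather uses the fact that the set of complex balanced equilibria is $\{x>0:\log x-\log x^{*}\in S^{\perp}\}$, where $S$ is the stoichiometric subspace; this set meets each positive stoichiometric class in exactly one point by a standard convexity/transversality argument. Second, for the conclusion actually stated (local asymptotic stability within $S_{x_0}$) you do not need the full LaSalle machinery or compactness of sublevel sets: strict convexity of $V$ restricted to $S_{x_0}$, with unique minimum at $x^{*}$ and $\dot V<0$ off $x^{*}$, already gives a strict Lyapunov function in a neighborhood of $x^{*}$, hence local asymptotic stability directly. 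The LaSalle/persistence discussion would only be needed if you were aiming for the stronger Global Attractor statement, which the theorem as quoted does not assert.
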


The stationary distribution of the associated stochastic process is fully characterized for a weakly reversible network which has zero deficiency.
\begin{thm}[Anderson, Craciun and Kurtz 2010 \cite{AndProdForm}]\label{thm:prod}
Let $(\S,\C,\Re)$ be a weakly reversible reaction network with zero deficiency. Then for any choice of rate parameters, the associated Markov process endowed with the stochastic mass-action kinetics admits a stationary distribution, and it is a product form of Poissons (or constrained Poissons). That is, for each $x \in Z^d_{\ge 0}$ in the state space, the stationary distribution $\pi$ satisfies
\begin{align*}
\pi(x)=M\prod_{i=1}^d \frac{c_i^{x_i}}{x_i!}
\end{align*}
where $c=(c_1,c_2,\dots,c_d)$ is a steady state of the deterministic counterpart and $M$ is the normalizing constant.
\end{thm}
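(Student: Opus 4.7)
The plan is to verify that the candidate product-form $\pi(x)=M\prod_i c_i^{x_i}/x_i!$ satisfies the stationary master equation $\mathcal{A}^*\pi=0$, which is the adjoint form of the generator in \eqref{gen5}. The strategy is to exploit the so-called \emph{complex-balance} property of $c$, which is the key structural consequence of combining weak reversibility with deficiency zero.

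First I would invoke Horn's theorem: for a weakly reversible network with zero deficiency, every positive equilibrium $c$ of the mass-action ODE \eqref{eq:deterministic system} is in fact complex balanced, meaning that for every complex $\eta \in \C$,
\begin{align*}
\sum_{y \,:\, y \to \eta} \kappa_{y \to \eta}\, c^{y} \;=\; c^{\eta} \sum_{y' \,:\, \eta \to y'} \kappa_{\eta \to y'}.
\end{align*}
Pick such a $c$ (its existence on each stoichiometric class is granted by the Horn--Feinberg theorem already cited). This is the main nontrivial input; everything else is bookkeeping.

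Next I would compute $\pi(x+y-\eta)\lambda_{y\to\eta}(x+y-\eta)$ under stochastic mass action \eqref{mass}. Using $(x+y-\eta)^{(y)} = \prod_i (x_i+y_i-\eta_i)!/(x_i-\eta_i)!$, the factorials telescope against the $1/(x_i+y_i-\eta_i)!$ inside $\pi$, yielding
\begin{align*}
\pi(x+y-\eta)\lambda_{y\to\eta}(x+y-\eta) \;=\; M\, \kappa_{y\to\eta}\, c^{\,y-\eta} \prod_i \frac{c_i^{x_i}}{(x_i-\eta_i)!},
\end{align*}
while the outgoing term factors as
\begin{align*}
\pi(x)\,\lambda_{\eta\to y'}(x) \;=\; M\, \kappa_{\eta\to y'} \prod_i \frac{c_i^{x_i}}{(x_i-\eta_i)!}.
\end{align*}
Grouping terms in the master equation by the complex $\eta \in \C$ that is involved and pulling out the common factor $M\prod_i c_i^{x_i}/(x_i-\eta_i)!$, the per-complex balance of $\pi$ reduces, after multiplying through by $c^{\eta}$, exactly to the complex-balance identity above. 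Hence each complex contributes zero flux separately, so $\mathcal{A}^*\pi(x)=0$ for every reachable $x$.

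The remaining step is to argue well-posedness on the state space. Since $X$ lives on the closed irreducible components of $\Z^d_{\ge 0}$ under the given reactions, I would restrict $\pi$ to such a component $\Gamma$ and let $M = M_\Gamma = \bigl(\sum_{x\in\Gamma}\prod_i c_i^{x_i}/x_i!\bigr)^{-1}$; this sum is finite either because $\Gamma$ is bounded (truncated/constrained Poisson case, from conservation laws) or because the full Poisson series converges. The main obstacle, conceptually, is the first step: importing the nontrivial fact that weakly reversible plus deficiency zero forces complex balance. Everything after that is a direct manipulation of mass-action factorials, and I expect it to go through cleanly.
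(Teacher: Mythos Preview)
Your argument is correct and is in fact the standard proof from the original Anderson--Craciun--Kurtz paper: pull complex balance out of the Horn--Feinberg deficiency-zero theorem, then check the stationary master equation complex-by-complex using the factorial telescoping between the product-Poisson weights and the stochastic mass-action propensities. The computation you sketch is right, including the key identity
\[
\pi(x+y-\eta)\,\lambda_{y\to\eta}(x+y-\eta)
\;=\; M\,\kappa_{y\to\eta}\,c^{\,y-\eta}\prod_i \frac{c_i^{x_i}}{(x_i-\eta_i)!},
\]
and the per-complex cancellation that follows after factoring out $c^{-\eta}$.

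There is nothing to compare against here: the paper does not prove this theorem. It is stated as a cited result (Theorem~\ref{thm:prod}, attributed to \cite{AndProdForm}) and then used as a black box in the multiscaling analysis, e.g.\ in Corollary~\ref{cor:poissonapprox} and in Section~\ref{subsec:mean of reduced}. So your write-up is supplying a proof that the paper deliberately omits; the approach you take is the same one the reference \cite{AndProdForm} uses.

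One minor point worth tightening if you flesh this out: the normalizability claim ``either $\Gamma$ is bounded or the full Poisson series converges'' glosses over the case of an unbounded closed communicating class that is not all of $\Z^d_{\ge 0}$. It still works, since the restriction of a summable nonnegative measure to any subset is summable, but phrase it that way rather than as a dichotomy.
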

%Note that since stochastically modeled reaction networks have been scrutinized, the product form of Poisson is one of the few that have an explicit analytical formula for a stationary distribution for the Markov process associated to reaction networks. 

For control of a stochastic model, we use Theorem \ref{thm:prod} to find an approximation of a target species in a controlled system. Details about this procedure is state in Section \ref{sec:control stochastic}.

\section{ACR systems and ACR controllers}\label{sec:acr controllers}
In this section we introduce the absolute concentration robustness (ACR) of a reaction network. 
In order to make use of ACR systems to design a controller, we consider a special class of ACR networks, and then we introduce a precise definition of an ACR controller.
 
\begin{defn}
Let $\hat x$ be a solution to the deterministic network system $(\hat \S,\hat \C,\hat \Re,\hat \K)$ such that
\begin{align*}
\frac{d}{dt}\hat x(t)=\hat f(\hat x(t)).
\end{align*} Suppose this system admits a positive steady state. If there exists a species $X_1\in \hat \S$ such that the values of $X_1$ at any positive steady states are all identical, then $(\hat \S,\hat \C,\hat \Re,\hat \K)$ is called an \textit{ACR network system}. Furthermore, the species $X_1$ and the identical positive steady state value of $X_1$ are called an ACR species and an ACR value, respectively. Especially if the deterministic model is equipped with mass-action kinetics, the system is called a \textit{mass-action ACR network system}. 
\end{defn}

In some special cases, the ACR property is determined with a single species in a network system.
\begin{defn}\label{def:definite acr}
Let $(\hat \S,\hat \C,\hat \Re,\hat \K)$ be a deterministic network system modeled with
\begin{align*}
\frac{d}{dt}\hat x(t)=\hat f(\hat x(t))
\end{align*}
If there exists species $S_i \in \hat \S$ such that $\{\hat x_1:\hat f_i(x')=0, \hat x=(\hat x_1,\dots,\hat x_d) \in \Re^{d}_{>0}\}=\{c\}$ for some $c>0$, then the deterministic system is termed \textit{a $S_i$-definite ACR system}. 
\end{defn}

\begin{rmk}
An $S_i$-definite ACR system is an ACR system.
For a $S_i$-definite ACR system, an ACR species and its ACR value is solely determined by the single equation associated with the species $S_i$.
\end{rmk}

A simple mass-action ACR system constructed with only two species is introduced in \cite{shinar2010structural}. 
Let $(\hat \S,\hat \C,\hat \Re,\hat \K)$  be mass-action system associated with 
\begin{equation}\label{eq:basic acr}
Z+X_1\xrightarrow{\theta} 2Z, \quad Z\xrightarrow{\mu} X_1.
\end{equation}
Because any positive roots $x^*=(x_1^*,z^*)$ of the equation $\frac{d}{dt}z(t)=z(t)(\theta x_1(t)-\mu)$ for species $Z$ satisfies $x^*_1=\frac{\mu}{\theta}$, this mass action system is an ACR system. Furthermore since we have $\{x_1 : z(\theta x_1-\mu)=0\}=\{\frac{\mu}{\theta}, x_1>0, z>0\}$, this system is also a $Z$-definite ACR system by Definition \ref{def:definite acr}. We termed this system a \textit{basic ACR system for $X_1$}. This ACR system would be mainly used for control in the main text. 

It is shown that there is a broad collection of networks whose associated mass-action system are ACR systems. They are characterized using network topological conditions in \cite{shinar2010structural}. 
In the following theorem, $e_i$ denotes a vector whose $i$ th entry is one, and the other entries are all zeros.
\begin{thm}[Shinar and Feinberg 2010 \cite{shinar2010structural}]\label{thm:acr}
Let $(\S,\C,\Re)$ be a deficiency 1 reaction network.
Suppose there are two non-terminal complexes $y$ and $\bar y$ such that $y-\bar y=ce_i$ for some $i\in Z_{>0}$ and $c\neq 0$. Then for any set of parameters $\K$, the mass-action deterministic network system $(\S,\C,\Re,\K)$ is a mass-action ACR network system.
\end{thm}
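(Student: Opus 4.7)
The plan is to show that any two positive steady states $x^*, x^{**} \in \R^d_{>0}$ of the mass-action ODE agree in their $i$-th coordinate. I would begin by writing the dynamics in the Feinberg matrix form $\dot{x} = Y A_\K \Psi(x)$, where $Y$ is the $d \times m$ matrix whose columns are the complexes, $\Psi(x)$ has entries $\Psi_y(x) = x^y$, and $A_\K$ is the kinetics Laplacian with off-diagonal entry $(A_\K)_{y', y} = \kappa_{y\to y'}$ and columns summing to zero. A positive steady state then satisfies $A_\K \Psi(x^*) \in \ker Y$. Setting $\mu = \log x^* - \log x^{**} \in \R^d$, one has $\Psi_y(x^*)/\Psi_y(x^{**}) = e^{\langle \mu, y\rangle}$; since $y - \bar{y} = c e_i$ with $c \neq 0$, the conclusion $\mu_i = 0$ is equivalent to the single identity
\[
\frac{\Psi_y(x^*)}{\Psi_y(x^{**})} \;=\; \frac{\Psi_{\bar y}(x^*)}{\Psi_{\bar y}(x^{**})}.
\]

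My strategy for this identity is to mimic the deficiency-zero argument and then correct for one extra degree of freedom. In the deficiency-zero weakly reversible case, Horn and Jackson's complex-balance theorem forces $A_\K \Psi(x^*) = 0$, and a matrix-tree computation inside each strongly connected component produces a factorization $\Psi_y(x^*) = K_y\,\alpha_{L(y)}(x^*)$, where $K_y$ is a purely combinatorial constant determined by the reaction graph and $\alpha_{L(y)}$ depends only on the linkage class $L(y)$. Consequently $\Psi_y(x^*)/\Psi_y(x^{**}) = \alpha_{L(y)}(x^*)/\alpha_{L(y)}(x^{**})$ depends only on $L(y)$, and the target identity follows at once whenever $y$ and $\bar y$ sit in the same linkage class.

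For deficiency one, the space $\ker Y \cap \mathrm{Im}(A_\K)$ has dimension at most one, so $A_\K \Psi(x^*) = t(x^*)\,\xi$ for some fixed generator $\xi$ and a scalar $t(x^*)$. The structural lemma I would then establish is that such a $\xi$ has zero entries at every non-terminal complex: heuristically, the single admissible imbalance can only describe a net flux into terminal strongly connected components, because at a non-terminal complex any net outflow must be sourced internally from within its own strongly connected component. Granting this lemma, the rows of $A_\K \Psi(x^*) = t(x^*)\xi$ indexed by $y$ and $\bar y$ become exactly the local complex-balance equations at those complexes, and the deficiency-zero matrix-tree argument applies verbatim to give the required ratio identity. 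The main obstacle will be proving this structural lemma rigorously; it requires a careful graph-theoretic analysis of the Laplacian $A_\K$ restricted to each strongly connected component, together with the dimension count $\delta = m - \ell - s = 1$ and the defining escape-path property of non-terminal complexes. This lemma is the technical heart of the Shinar-Feinberg argument, and once it is in place the remainder of the proof reduces to the deficiency-zero machinery already recalled.
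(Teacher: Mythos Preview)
The paper does not prove this theorem; it is stated as a citation to Shinar and Feinberg (2010), with no argument given. So there is no ``paper's own proof'' to compare against, and your proposal must be judged on its own merits.

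Your overall setup is right: write $A_\K\Psi(x^*)\in\ker Y\cap\mathrm{Im}\,A_\K$, use that this space is at most one-dimensional when $\delta=1$, and aim to show $\Psi_y(x^*)/\Psi_y(x^{**})=\Psi_{\bar y}(x^*)/\Psi_{\bar y}(x^{**})$. But your central structural lemma---that the generator $\xi$ of $\ker Y\cap\mathrm{Im}\,A_\K$ vanishes at every non-terminal complex---is false. Take the paper's own basic ACR controller $A+Z\xrightarrow{\theta}2Z$, $Z\xrightarrow{\mu}A$. Ordering the complexes as $(A{+}Z,\,2Z,\,Z,\,A)$, the non-terminal complexes are $A{+}Z$ and $Z$, and a direct computation gives $\xi=(-1,1,-1,1)$, which is nonzero precisely at those non-terminal complexes. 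Indeed at any positive equilibrium $c^*=(\mu/\theta,z^*)$ one has $A_\K\Psi(c^*)=\mu z^*\,\xi$, again nonzero at $A{+}Z$ and $Z$. Even if your lemma held, knowing only that the $y$-th and $\bar y$-th rows of $A_\K\Psi(x^*)$ vanish would not by itself invoke the matrix-tree identity: that identity needs complex balance across an entire strong component simultaneously, not at two isolated nodes.

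The correct structural fact sits on the other side of the Laplacian: it is $\ker A_\K$ (not $\xi$) that is supported on the terminal strong components and hence vanishes at every non-terminal complex. With this in hand the argument is short and requires no matrix-tree step. Writing $A_\K\Psi(x^*)=\alpha\,\xi$ and $A_\K\Psi(x^{**})=\alpha'\,\xi$, choose (after possibly swapping the two steady states) $\beta=\alpha/\alpha'$ so that $A_\K\bigl(\Psi(x^*)-\beta\,\Psi(x^{**})\bigr)=0$. Then $\Psi(x^*)-\beta\,\Psi(x^{**})\in\ker A_\K$ vanishes at the non-terminal complexes $y$ and $\bar y$, giving $\Psi_y(x^*)=\beta\,\Psi_y(x^{**})$ and $\Psi_{\bar y}(x^*)=\beta\,\Psi_{\bar y}(x^{**})$, whence the desired ratio identity and $\mu_i=0$. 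This is essentially the Shinar--Feinberg argument; your sketch had the right skeleton but attached the vanishing-on-non-terminal property to the wrong object.
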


The controlled deterministic system is basically a union of two deterministic systems; one is a given network system and the other is an ACR system. We formally define the union of two deterministic network systems.
In the definition below, $M_{n,m}$ denote the set of all $n\times m$ matrices and $I_n$ denotes the $n\times n$ identity matrix.

\begin{defn}
Let $(\S,C\,\Re,\K)$ and $(\hat \S,\hat \C,\hat \Re,\hat \K)$ be deterministic network systems modeled with
\begin{align*}
\frac{d}{dt}x(t)=f(x(t))\quad \text{and} \quad \frac{d}{dt}\hat x(t)=\hat f(\hat x(t)), \text{ respectively.}
\end{align*}
Let $\S=\{X_1,\dots,X_d,Y_1,\dots,Y_k\}$ and $\hat \S=\{X_1,\dots,X_{d},Z_1,\dots,Z_{\hat k}\}$. Then the union system of the deterministic systems $(\S,C\,\Re,\K)$ and $(\hat \S,\hat \C,\hat \Re,\hat \K)$ is a deterministic system such that
\begin{align*}
\frac{d}{dt}\bar x(t)=\bar{f} (\bar x(t)),
\end{align*}
where $\bar x(t)=(x_1(t),\dots,x_d(t),y_1(t),\dots,y_k(t),z_1(t),\dots,z_{\hat k}(t))^T$ and $\bar f= E f+\hat E \hat f$ with

\begin{align*}
E=\begin{pmatrix}
I_{d} & 0\\
0 & I_k\\
0 & 0
\end{pmatrix} \in M_{d+k+\hat k,d+k}, \quad \text{and} \quad
 \hat E=\begin{pmatrix}
I_d & 0\\
0& 0\\
0& I_{\hat k}
\end{pmatrix} \in M_{d+k+\hat k,d+\hat k}.
\end{align*}
\end{defn}

Now, we define an ACR controller.

\begin{defn}
Let $(\S,\C,\Re,\K)$ be a deterministic network system and let $(\hat \S,\hat \C,\hat \Re,\hat \K)$ be an ACR network system such that $X_1 \in \S\cap \hat \S$ and $\hat \S\setminus \S \neq \emptyset$. If the union of the two network systems is an ACR network system such that $X_1$ is an ACR species, then $(\hat \S,\hat \C,\hat \Re,\hat \K)$ is termed an ACR controller for $(\S,\C,\Re,\K)$ and the union system is called a \textit{controlled system}. Furthermore, if ACR controller $(\hat \S,\hat \C,\hat \Re,\hat \K)$ for $(\S,\C,\Re,\K)$ is a mass-action system, then it is termed a \textit{mass-action ACR controller} for $(\S,\C,\Re,\K)$. 
\end{defn}

%\begin{defn}
%Let $(\S,\C,\Re,\K)$ a deterministic reaction network system modeled with 
%\begin{align*}
%\frac{d}{dt}x(t)=f(x(t)).
%\end{align*}
% For a species $X_1\in \S$, we term the deterministic network system with mass-action kinetics associated with a network
%\begin{align}\label{eq:basic}
%Z+X_1\xrightarrow{\theta} 2Z, \quad Z\xrightarrow{\mu} X_1,
%\end{align}
%\textit{a basic mass-action ACR controller for $X_1$}.
%The \textit{controlled deterministic system with the basic mass-action ACR controller for} $X_1$ is
%\begin{align*}
%\frac{d}{dt}\bar x(t)=\bar f(\bar x(t),z(t)), \quad \frac{d}{dt}z(t)=\theta z(t)\bar x_1(t)-\mu z(t)
%\end{align*}
%where $\bar f_1(x,z)=f_1(x)-\theta zx_1$ and $\bar f_i(x,z)=f_i(x)$, $i\in \{2,3,\dots,d\}$ for each $x\in \Re^d_{>0}$ and $z\in \Re_{>0}$.  
%\end{defn}

\section{Steady states and stability using an ACR controller}\label{sec:stability}
In this section, we show that for any deterministic system modeled with general kinetics, an ACR controller endows ACR to the given system and drives the long-term behavior of a target species towards the desired value. For the basic ACR controller, the existence of the steady states will now be verified together with their stability. In this manuscript, every chemical reaction network is modeled under mass action kinetics.  It is notable, however, that Lemma \ref{thm:x1 is acr if ps exists}, Lemma \ref{thm:ps exists with basic acr}, and Theorem \ref{thm:stability} hold for a class of non-mass action networks, for instance under the generalized kinetics defined in \cite{FeinbergLec79} (Definition 2.2).

\begin{lem}\label{thm:x1 is acr if ps exists}
Let $(\S,\C,\Re,\K)$ be a deterministic network system such that $X_1 \in \S$. Let $(\hat \S,\hat \C, \hat \Re, \hat \K)$  be a $Z$-definite ACR system such that $X_1\in \hat \S$ and $Z\not \in \S$. If the union system of $(\S,\C,\Re,\K)$ and $(\hat \S,\hat \C, \hat \Re, \hat \K)$ admits a positive steady state, then it is an ACR system, and $X_1$ is an ACR species.
\end{lem}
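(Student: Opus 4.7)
The plan is to exploit the hypothesis $Z\notin \S$, which isolates the $Z$-equation inside the union system. I would first unpack the union construction from the preceding definition: writing $\bar f = E f + \hat E \hat f$, the $Z$-coordinate of $E f$ vanishes because $Z\notin \S$ and the embedding $E$ places the $\S$-equations into coordinates disjoint from the $Z$-slot. Consequently the $Z$-component of $\bar f$ equals the $Z$-component of $\hat f$ evaluated on the $\hat \S$-coordinates of $\bar x$. This identification does not use any specific kinetics, so the argument applies just as well to more general reaction kinetics.

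Using the assumption that the union system admits a positive steady state $\bar x^*$, I would then read off $\bar f_Z(\bar x^*) = 0$; by the previous step this forces $\hat f_Z(\hat x^*) = 0$, where $\hat x^*$ denotes the restriction of $\bar x^*$ to the $\hat \S$-coordinates, which remains strictly positive. I would then invoke the $Z$-definite ACR hypothesis on $(\hat \S,\hat \C,\hat \Re,\hat \K)$ (Definition~\ref{def:definite acr} with $S_i = Z$): the positive zero set of $\hat f_Z$ projects onto the single value $c>0$ in the $X_1$-coordinate. Applied to $\hat x^*$, this pins the $X_1$-coordinate of $\bar x^*$ to $c$. Because the same reasoning applies to every positive steady state, $X_1$ takes the common value $c$ across all of them, so $X_1$ is an ACR species and the union is an ACR system.

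The main subtlety, and really the only nontrivial point, is justifying that the $Z$-equation in the union truly reduces to $\hat f_Z$. This depends crucially on $Z\notin \S$: if any reaction in $\Re$ produced or consumed $Z$, the union equation for $Z$ would acquire extra terms from $\S$, and the $Z$-definite property of $\hat f_Z$ could no longer be transferred to $\bar f_Z$. Once that observation is in place the argument is essentially bookkeeping, and no existence, uniqueness, or stability analysis enters, since the lemma is conditioned on the existence of a positive steady state of the union.
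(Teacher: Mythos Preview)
Your proposal is correct and follows essentially the same approach as the paper's proof: both use $Z\notin\S$ to identify the $Z$-equation of the union system with $\hat f_Z$, and then apply the $Z$-definite ACR property to force $\bar x^*_1=c$ at every positive steady state. Your version is somewhat more explicit about the embedding matrices $E,\hat E$ from the union definition and about independence from the choice of kinetics, but the logical skeleton is identical.
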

\begin{proof}
Since $Z\not \in S$, the equation $\frac{d}{dt}z(t)=0$ for $Z$ in the union system is same as the equation for $Z$ in ACR system $(\hat S, \hat \C, \hat \Re, \hat \K)$.  At the expense of abusing the notation, we let $\frac{d}{dt}z(t)=f_z(x(t))$ and $\frac{d}{dt}\bar z(t)=f_{z}(\bar x(t))$ be the equations for $Z$ in $(\hat \S,\hat \C, \hat \Re, \hat \K)$ and the union system, respectively. By definition of the $Z$-definite ACR system,
if $f_z(x)=0$ then there exists a positive real number $c$ such that $x_1=c$. Hence, for each positive steady state $\bar x^*$ in the union system, $\bar x^*_1=c$ and therefore $X_1$ is an ACR species in the union system. 
\end{proof}

For a given network system $(\S,\C,\Re)$, suppose $Z\not \in \S$ and $X_1 \in \S$. Since the basic ACR controller \eqref{eq:basic acr} for $X_1$ is a $Z$-definite ACR system, it is a mass action ACR controller for $(\S,\C,\Re)$. We call this basic ACR system the \emph{basic ACR controller} interchangeably.

Lemma \ref{thm:x1 is acr if ps exists} guarantees that the values of $X_1$ must be $c$ at any positive steady states as long as a positive steady state exists in the union system. The following \textcolor{black}{lemma} provides a sufficient condition of a given network system $(\S,\C,\Re,\K)$ for existence of a positive steady state in the union system of  $(\S,\C,\Re,\K)$ and the basic ACR controller. \textcolor{black}{We show that if the desired control value is within the range of the observations, then one can control for that value. }

\begin{lem}\label{thm:ps exists with basic acr}
Let $(\S,\C,\Re,\K)$ be a deterministic network system modeled with
\begin{align*}
\frac{d}{dt}x(t)=f(x(t)).
\end{align*} Let $PS=\{x : x=(x_1,x_2,\dots,x_d)\in \Re^d_{>0}, f(x)=0 \}$. Suppose there exists an $x^* \in PS$ such that $x^*_1=\frac{\mu}{\theta}$, then the basic ACR network system \eqref{eq:basic acr} is a mass-action ACR controller for $(\S,\C,\Re,\K)$, and the controlled system admits a positive steady state.
\end{lem}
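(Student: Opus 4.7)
The plan is to exhibit a positive steady state of the union system directly, then invoke \lemref{thm:x1 is acr if ps exists} to conclude the ACR property. First I would write out the mass-action equations for the union of $(\S,\C,\Re,\K)$ with the basic ACR controller \eqref{eq:basic acr}. Because $Z \notin \S$, only the equation for $X_1$ picks up contributions from the controller reactions, and a fresh equation appears for $Z$. Concretely, the union system reads
\begin{align*}
\tfrac{d}{dt}x_1(t) &= f_1(x(t)) - \theta\,z(t)x_1(t) + \mu\,z(t),\\
\tfrac{d}{dt}x_i(t) &= f_i(x(t)), \qquad i = 2,\dots,d,\\
\tfrac{d}{dt}z(t) &= z(t)\bigl(\theta x_1(t) - \mu\bigr).
\end{align*}

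Next I would construct a positive steady state. By hypothesis there is $x^\ast \in PS$ with $x_1^\ast = \mu/\theta$. Pick any $z^\ast > 0$ and set $\bar x^\ast = (x^\ast, z^\ast)$. The equation for $Z$ vanishes at $\bar x^\ast$ because $\theta x_1^\ast - \mu = 0$. For $i \geq 2$, the equations reduce to $f_i(x^\ast) = 0$, which holds since $x^\ast \in PS$. For $X_1$, the controller contribution $-\theta z^\ast x_1^\ast + \mu z^\ast$ vanishes when $x_1^\ast = \mu/\theta$, leaving $f_1(x^\ast) = 0$, again from $x^\ast \in PS$. Thus $\bar x^\ast$ is a positive steady state of the union.

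Finally I would invoke \lemref{thm:x1 is acr if ps exists}. The basic ACR controller was noted (immediately after Definition~\ref{def:definite acr}) to be a $Z$-definite ACR system with ACR value $\mu/\theta$ for $X_1$, and we verified that $Z \notin \S$ and $X_1 \in \S \cap \hat\S$. Since the union admits a positive steady state, \lemref{thm:x1 is acr if ps exists} applies and guarantees that the union is an ACR system with $X_1$ as an ACR species. By definition of an ACR controller (requiring $X_1 \in \S \cap \hat\S$, $\hat\S \setminus \S \neq \emptyset$, and that the union be ACR with $X_1$ as ACR species), the basic ACR system is therefore a mass-action ACR controller for $(\S,\C,\Re,\K)$.

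There is essentially no hard step: the only bookkeeping is checking that the added terms in the $X_1$ equation cancel at $x_1 = \mu/\theta$, and that the $Z$ equation admits any positive $z^\ast$ as a root once $x_1 = \mu/\theta$. The latter freedom in $z^\ast$ is the reason ACR imposes no constraint on a full steady state beyond fixing the ACR coordinate.
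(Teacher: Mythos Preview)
Your proof is correct and follows essentially the same approach as the paper: write out the union system, observe that the controller terms in both the $X_1$ and $Z$ equations vanish when $x_1^\ast = \mu/\theta$, and conclude that $(x^\ast, z^\ast)$ is a positive steady state for any $z^\ast > 0$. Your explicit appeal to \lemref{thm:x1 is acr if ps exists} to close the ``ACR controller'' part of the claim is a nice touch that the paper's own proof leaves implicit.
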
 
In the following proof, the concatenation $w=(u,v)$ for $u\in \R^d$ and $b\in \R^1$ denotes a vector in $\R^{d+1}$ such that $w_i=u_i$ for $i=1,2,\dots,d$ and $w_{d+1}=v$.
\begin{proof}
Let $\S=\{X_1,\dots,X_d\}$. Let $\bar x=(x,z)$ for each $x\in \R^d_{\ge 0}$ and $z \in \R_{\ge 0}$ be a solution to the union system of $(\S,\C,\Re,\K)$ and the basic ACR network system \eqref{eq:basic acr}. Then $\bar x$ satisfies
\begin{equation}
\frac{d}{dt}\bar x(t)=\bar f (\bar x(t)),
\end{equation}
for some $\bar f$. 
 By the construction of the union system, we have 
\begin{equation}\label{eq:controlled system equations}
\begin{split}
\bar f_i(\bar x)=\begin{cases} f_i(x)-z(\theta x_1-\mu)\quad &\text{if $i=1$},\\
z(\theta x_1-\mu) \quad &\text{if $i=d+1$},\\
f_i(x), \quad &\text{otherwise}.
\end{cases}
\end{split}
\end{equation}
Let $x^*$ be a positive steady state of $(\S,\C,\Re,\K)$ such that $x_1=\frac{\mu}{\theta}$. For any positive value $z^*$, we have $\bar f(\bar x^*)=0$ where $\bar x^*=(x^*,z^*)$. 
\end{proof}

\textcolor{black}{For a given choice of $\theta$ and $\mu$, it is a necessary condition that there exists $x^*\in PS$ such that $x^*_1=\dfrac{\mu}{\theta}$ for a given reaction system because unless $Z=0$, the control species $Z$ is only stabilized when $X_1=\dfrac{\mu}{\theta}$ . We demonstrate this with the following example.}

\begin{example}
\textcolor{black}{Consider this system suggested by one of the reviewers of this manuscript: 
\begin{align*}
 &A  \xrightarrow{\ 1 \ } B\\[-1ex]
\text{\scriptsize{$1$}}&\nwarrow \ \ \ \swarrow \text{\scriptsize{$1$}} \\[-1ex]
& \ \ \ \ \ \ 0
\end{align*}
Note that this system admits a unique equilibrium $(1,1)$. Hence for the choice of $\theta=1$ and $\mu=2$, there is no $x^*$ in $PS$ such that $x^*_1=\dfrac{\mu}{\theta}$. Notably the union of the system and the basic ACR system $A+Z\xrightarrow{\theta} 2Z$ and $Z\xrightarrow{\mu} A$ does not admit a positive steady because no positive values $a^*$ and $z^*$ satisfy both 
\begin{align*}
0=-a^*+1+z^*(2-a^*), \quad \text{and} \quad 0=z^*(a^*-2).
\end{align*}}
\end{example}

The convergence to positive steady states in general controlled systems with a ACR controller is more delicate problem since the actual network structure and parameters need probably to be specified. However, if linear stability condition is held for a given system as well as the conditions in Lemma~\ref{thm:ps exists with basic acr} with some additional conditions, then the controlled system with the basic ACR system \eqref{eq:basic acr} admits linear stability.
Linear stability of a steady state holds if each eigenvalues of the Jacobian of a dynamical system at the steady state has a strictly negative real part. This implies the dynamical system  asymptotically converges to the steady state if its initial state was close enough to the steady state. 

Remark that in case a given system has no conservation relation, the dynamics is not confined into a lower dimensional stoichiometry class. Hence if we assume linear stability of the given system at a positive steady state $x^*$, all eigenvalues of the Jacobian at the steady state have strictly negative real parts. Hence we can maintain the linear stability after we add a ACR controller if the parameters of the ACR controller are small enough. This is by the fact that the roots of the characteristic polynomial are continuous with respect to the coefficients, hence the eigenvalues of the Jacobian of the controlled system still have strictly negative real parts.

 Hence we investigate the stability of the controlled system when a given system $(\S,\C,\Re,\K)$ admits conservation relations.
Let $x(t)=(x_1(t),\dots,x_d(t))$ be the deterministic model associated with $(\S,\C,\Re,\K)$ such as \eqref{eq:deterministic system} in $\R^d_{>0}$. Suppose that $u^1,\dots,u^k$ are positive vectors such that $u^i \cdot \frac{d}{dt}x(t)=0$ for all $t$ and for each $i$, where $\cdot$ means the canonical inner product between two finite dimensional euclidean vectors. This implies that for a fixed initial state $x(0)$, there exist $M_i$'s such that
\begin{align*}
u^i\cdot x(t) = M_i \quad \text{for all $t$}.
\end{align*}

Without loss of generality, we suppose $u^i$ are linear independent. Then in the following way, we can reduce the system onto a lower dimension system that admits no conservative relations. First note that since we assume the linear independence of $u^i$'s, we have  $k\le d$. Hence using  Gaussian elimination and by rearranging the coordinate of $x$, we have
\begin{equation}\label{eq:original conserve}
\begin{bmatrix}
U | I
\end{bmatrix}
\begin{bmatrix}
x_1(t)\\x_2(t)\\ \vdots \\x_{\bar d}(t)\\ x_{\bar d+1}(t) \\x_d(t)
\end{bmatrix}=
\begin{bmatrix}
M_1\\  M_2\\\vdots \\  M_k
\end{bmatrix},
\end{equation}\label{eq:conv for stability}
where $\bar d= d-k$, the matrix $I$ is the $k$ dimensional identity matrix, $U$ is some $\bar d \times k$ matrix and $ M_i$'s are some constants. Hence we have 
\begin{equation}\label{eq:reduce conserve}
\begin{aligned}
x_{\bar d + 1}(t)&=M_1-\sum_{i=1}^{\bar d} u_{1i}x_i(t),\\
x_{\bar d + 2}(t)&=M_2-\sum_{i=1}^{\bar d} u_{1i}x_i(t),\\
&\vdots\\
x_{d}(t)&=M_k-\sum_{i=1}^{\bar d} u_{1i}x_i(t).\\
\end{aligned}
\end{equation}
This implies the variables $x_{\bar d+1},\dots,x_{d}$ are completely determined by relations \eqref{eq:reduce conserve}. Then we have the following reduced system,
\begin{equation}
\begin{aligned}
&\frac{d}{dt}x_i(t)=g_i(x_1(t),x_2(t),\dots,x_{\bar d}(t)) \quad \text{where},\\
&g_i(x_1,x_2,\dots,x_{\bar d})=f_i\left (x_1,x_2,\dots,x_{\bar d},M_1-\sum_{i=1}^{\bar d} u_{1i}x_i(t),\dots,M_k-\sum_{i=1}^{\bar d} u_{ki}x_i(t) \right ).
\end{aligned}
\end{equation}
for $i=1,2,\dots,\bar d$. Note that this reduced system is specified with the choice of initial state $x(0)$ as the initial condition determines the conservative quantity $M_i$'s.
Note further that since the steady state values of $x_{\bar d+i}$ for $i=1,2\dots,k$ are completely determined by the steady state values $x_i$ for $i=1,2,\dots,\bar d$, the stability of $x(t)$ is also determined by the reduced system $(x_1(t),\dots,x_{\bar d}(t))$. The linear stability of the reduced system is investigated with the eigenvalues of Jacobian. 
We denote $J(x^*)$ be the Jacobian of this reduced system at $x^*$, where we abuse the notation since $x^*$ is a state in the original system but the Jacobian is for the reduced system.

Now we suppose that species $X_1$ is the control target with the basic ACR controller 
\begin{equation}\label{eq:basic acr for stability}
X_1+Z \xrightarrow{\theta} 2Z, \quad Z \xrightarrow{\mu} X_1.
\end{equation}
Suppose that $S_1$ is involved in at least one conservation relation. Without loss of generality, suppose $u^1_{1}=1$.  Then we have new conservation relations in the union system of $(\S,\C,\Re,\K)$ and the basic ACR system. We let
\begin{equation}\label{eq:new conv}
\bar M_i = 
\begin{cases}
u^i\cdot x(0) + u^i_1z(0)=M_i+z(0) \quad &\text{if $u^i_1\neq 0$},\\
u^i\cdot x(0) + u^i_1z(0)=M_i, &\text{otherwise}.
\end{cases}
\end{equation}
 
Then the new conservative relations are represented as
\begin{equation}\label{eq:conv stability of union}
\begin{bmatrix}
U \ | \ u_1 \ | \ I
\end{bmatrix}
\begin{bmatrix}
x_1(t)\\x_2(t)\\ \vdots \\x_{\bar d}(t) \\ z(t) \\ x_{\bar d+1}(t)  \\x_d(t)
\end{bmatrix}=
\begin{bmatrix}
\bar M_1\\ \bar M_2\\\vdots \\ \bar M_k
\end{bmatrix},
\end{equation}
where $v$ is the first column vector of $U$, and $U$ and $I$ are the same matrices as \eqref{eq:original conserve}. The definition of $u_1$ basically means that the control species $Z$ is involved in the same conservation relation as $X_1$ in the original system $(\S,\C,\Re,\K)$. Hence the dynamics $\bar x(t)=(\bar x_1(t),\dots,\bar x_d(t),z(t))$ associated with the union system can also be reduced to
\begin{equation}\label{eq:reduced with z}
\frac{d}{dt}\bar x_i(t)=h_i(x_1(t),x_2(t),\dots,x_{\bar d}(t),z(t)).
\end{equation}
where,
\begin{align*}
&h_i(x_1,x_2,\dots,x_{\bar d},z)=\\
&\begin{cases} 
&f_1\left (x_1,x_2,\dots,x_{\bar d},\bar M_1-\dsum_i^{j=\bar d} u^1_{j}x_j(t)-z(t),\dots,\bar M_k-\dsum_{j=1}^{\bar d} u^k_{j}x_j(t)-z(t)\right ) -z(\theta x_1-\mu), \text{if $i=1$},\\
&z(\theta x_1-\mu) \hfill \text{if $i=\bar d+1$},\\
&f_i\left (x_1,x_2,\dots,x_{\bar d},\bar M_1-\dsum_i^{i=\bar d} u_{1i}x_i(t)-v_1z(t),\dots,\bar M_k-\dsum_{j=1}^{\bar d} u^i_{j}x_j(t)-u^i_1z(t) \right ),\hfill \text{otherwise}\\
\end{cases}
\end{align*}

Note that by \eqref{eq:new conv}, we have $\partial_i h_j(x^*_1,\dots,x^*_{\bar d})=\partial_i g_j(x^*_1,\dots,x^*_{\bar d})$ for $i,j \in \{1,2,\dots,\bar d\}$. 
Then the jacobian $\bar J(x^*,z^*)$ for this system at $(x^*,z^*)$ where $x^*_1=\dfrac{\mu}{\theta}$ is 
\begin{equation}\label{eq:jacobian reduced}
\begin{split}
\bar J(x^*,z^*)=&\begin{bmatrix}
\partial_1 h_1(x^*,z^*)-\theta z^*& \partial_2 h_1(x^*,z^*) & \cdots &\partial_{\bar d} h_{1}(x^*,z^*)&  \partial_{z} h_{1}(x^*,z^*)\\
\partial_1 h_2(x^*,z^*) &\partial_2 h_2(x^*,z^*) & \cdots & \partial_{\bar d} h_{2}(x^*,z^*)&\partial_{z} h_{2}(x^*,z^*)\\
& & \vdots & \\
\partial_1 h_{\bar d}(x^*,z^*) &\partial_2 h_{\bar d}(x^*,z^*) & \cdots & \partial_{\bar d} h_{\bar d}(x^*,z^*)& \partial_{z} h_{\bar d}(x^*,z^*)\\
\theta z^* & 0 &\cdots &0 & 0
\end{bmatrix}\\
&=\begin{bmatrix}
\partial_1 g_1(x^*)-\theta z^*& \partial_2 g_1(x^*) & \cdots &\partial_{\bar d} g_{1}(x^*)&  \partial_{z} h_{1}(x^*,z^*)\\
\partial_1 g_2(x^*) &\partial_2 g_2(x^*) & \cdots & \partial_{\bar d} g_{2}(x^*)&\partial_{z} h_{2}(x^*,z^*)\\
& & \vdots & \\
\partial_1 g_{\bar d}(x^*) &\partial_2 g_{\bar d}(x^*) & \cdots & \partial_{\bar d} g_{\bar d}(x^*)& \partial_{z} h_{\bar d}(x^*,z^*)\\
\theta z^* & 0 &\cdots &0 & 0
\end{bmatrix}
\end{split}
\end{equation}

As the stability of $x(t)$ is determined by the stability of its reduced system, we consider the linear stability of the reduced system \eqref{eq:reduced with z} of $\bar x(t)$ to study the stability of the union system of $(\S,\C,\Re,\K)$ and the basic ACR controller \eqref{eq:basic acr for stability}.
For the linear stability, we show that the characteristic function for $\bar J(x^*,z^*)$ will be  the characteristic function of $J(x^*)$ with some perturbation. 
To show this, we use the conventional notation for deterministic $|A|$ for a square matrix $A$. $I$ denotes an identity matrix, and it could denote a different dimensional identity matrix according to the content. We will also use the column/row expansion of deterministic. We further also use the row decomposition for determinant. The row decomposition of the determinant means that when $A$ is a square matrix such that the first row $A_1$ is equal to $A'_1+A''_1$ with some row vectors $A'_1$ and $A''_1$, we have $|A|=|A'|+|A''|$ where $A'$ and $A''$ are square matrices whose first row is replaced with $A'_1$ and $A''_1$, respectively.

Then
\begin{equation}\label{eq:dets}
\begin{aligned}
&|\lambda I- \bar J(x^*,z^*)|=\\
&\begin{vmatrix}
\lambda-\partial_1 g_1(x^*)+\theta z^*& -\partial_2 g_1(x^*) & \cdots &-\partial_{\bar d} g_{1}(x^*)&  -\partial_{z} h_{1}(x^*,z^*)\\
-\partial_1 g_2(x^*) &\lambda-\partial_2 g_2(x^*) & \cdots & -\partial_{\bar d} g_{2}(x^*)& -\partial_{z} h_{2}(x^*,z^*)\\
& & \vdots & \\
-\partial_1 g_{\bar d}(x^*) &-\partial_2 g_{\bar d}(x^*) & \cdots & \lambda-\partial_{\bar d} g_{\bar d}(x^*)& -\partial_{z} h_{\bar d}(x^*,z^*)\\
-\theta z^* & 0 &\cdots &0 & \lambda 
\end{vmatrix} \\
&= \lambda \begin{vmatrix}
\lambda-\partial_1 g_1(x^*)+\theta z^*& -\partial_2 g_1(x^*) & \cdots &-\partial_{\bar d} g_{1}(x^*)\\
-\partial_1 g_2(x^*) &\lambda-\partial_2 g_2(x^*) & \cdots & -\partial_{\bar d} g_{2}(x^*)\\
& & \vdots & \\
-\partial_1 g_{\bar d}(x^*) &-\partial_2 g_{\bar d}(x^*) & \cdots & \lambda-\partial_{\bar d} g_{\bar d}(x^*)
\end{vmatrix}\\
&-(-1)^{\bar{d}+2}\theta z^* 
\begin{vmatrix}
 -\partial_2 g_1(x^*) & \cdots &-\partial_{\bar d} g_{1}(x^*)&  -\partial_{z} h_{1}(x^*,z^*)\\
\lambda-\partial_2 g_2(x^*) & \cdots & -\partial_{\bar d} g_{2}(x^*)& -\partial_{z} h_{2}(x^*,z^*)\\
& \vdots & &\\
-\partial_2 g_{\bar d}(x^*) & \cdots & \lambda-\partial_{\bar d} g_{\bar d}(x^*)& -\partial_{z} h_{\bar d}(x^*,z^*)
\end{vmatrix}\\
&=\lambda \begin{vmatrix}
\lambda-\partial_1 g_1(x^*)& -\partial_2 g_1(x^*) & \cdots &-\partial_{\bar d} g_{1}(x^*)\\
-\partial_1 g_2(x^*) &\lambda-\partial_2 g_2(x^*) & \cdots & -\partial_{\bar d} g_{2}(x^*)\\
& & \vdots & \\
-\partial_1 g_{\bar d}(x^*) &-\partial_2 g_{\bar d}(x^*) & \cdots & \lambda-\partial_{\bar d} g_{\bar d}(x^*)
\end{vmatrix}\\
&+ \theta z^* \lambda \begin{vmatrix}
1& 0& \cdots &0\\
-\partial_1 g_2(x^*) &\lambda-\partial_2 g_2(x^*) & \cdots & -\partial_{\bar d} g_{2}(x^*)\\
& & \vdots & \\
-\partial_1 g_{\bar d}(x^*) &-\partial_2 g_{\bar d}(x^*) & \cdots & \lambda-\partial_{\bar d} g_{\bar d}(x^*)
\end{vmatrix}\\
&-(-1)^{\bar{d}+2}\theta z^* 
\begin{vmatrix}
 -\partial_2 g_1(x^*) & \cdots &-\partial_{\bar d} g_{1}(x^*)&  -\partial_{z} h_{1}(x^*,z^*)\\
\lambda-\partial_2 g_2(x^*) & \cdots & -\partial_{\bar d} g_{2}(x^*)& -\partial_{z} h_{2}(x^*,z^*)\\
& \vdots & &\\
-\partial_2 g_{\bar d}(x^*) & \cdots & \lambda-\partial_{\bar d} g_{\bar d}(x^*)& -\partial_{z} h_{\bar d}(x^*,z^*).
\end{vmatrix}
\end{aligned}
\end{equation}
Notice that the first term in \eqref{eq:dets} is equal to $\lambda |\lambda I-J(x^*)|$. We denote by $\lambda
G(\lambda), \theta z^* \lambda H_1(\lambda)$ and $(-1)^{\bar{d}+2}\theta z^* H_2(\lambda)$ the first, the second and the third term in \eqref{eq:dets}, respectively. Hence we have
\begin{equation}\label{eq:det2}
|\lambda I- \bar J(x^*,z^*)|=\lambda G(\lambda) + \theta z^* \lambda H_1(\lambda)-(-1)^{\bar{d}+2}\theta z^* H_2(\lambda).
\end{equation}
Now, using the same notations above, we state a theorem related to the stability of $(x^*,z^*)$ of the union system of $(\S,\C,\Re,\K)$ and the basic ACR system \eqref{eq:basic acr for stability}.

\begin{thm}\label{thm:stability}
Suppose the conditions in Lemma \ref{thm:ps exists with basic acr} hold. Suppose further
\begin{enumerate}
\item the associated system for $(\S,\C,\Re,\K)$ admits conservative relations such as \eqref{eq:original conserve} and $u^1_{1}\neq 0$, 
\item all the eigenvalues of $J(x^*)$ have strictly negative real parts, and
\item $H_2(0) > 0 $ if $\bar d$ is odd and $H_2(0) <0$ if $\bar d$ is even.
\end{enumerate}
 Then for sufficiently small $\theta$ and $\mu$ and for sufficiently large $z(0)$, all the eigenvalues of $\bar J(x^*,z^*)$ have also strictly negative real parts. That is the positive steady state $(x^*,z^*)$ is linear stable in the union system of $(\S,\C,\Re,\K)$ and the basic ACR system \eqref{eq:basic acr for stability}.
\end{thm}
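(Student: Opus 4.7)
The plan is to treat $\tau := \theta z^{*}$ as a small perturbation parameter and to study the characteristic polynomial in the factored form provided by \eqref{eq:det2}, namely
\[
P(\lambda;\tau) \;=\; \lambda G(\lambda) + \tau\,\lambda H_1(\lambda) - (-1)^{\bar d}\,\tau H_2(\lambda),
\]
using that $(-1)^{\bar d+2}=(-1)^{\bar d}$. At $\tau=0$, $P(\cdot;0)=\lambda G(\lambda)$ has $\bar d+1$ roots: a simple root at $0$ together with the $\bar d$ eigenvalues of $J(x^{*})$, all of which lie strictly in the open left half plane by assumption (2). Classical continuity of the roots of a polynomial in its coefficients then gives, for all sufficiently small $\tau>0$, that exactly $\bar d$ roots of $P(\cdot;\tau)$ remain in the open left half plane and one distinguished root $\lambda(\tau)$ stays close to $0$. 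The remaining task is to verify that this near-zero root moves into the left half plane.

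Since $\partial_\lambda P(0,0)=G(0)$ and (as shown below) $G(0)\neq 0$, the implicit function theorem produces a smooth branch $\lambda(\tau)$ with $\lambda(0)=0$ and
\[
\lambda'(0) \;=\; -\frac{\partial_\tau P(0,0)}{\partial_\lambda P(0,0)} \;=\; \frac{(-1)^{\bar d} H_2(0)}{G(0)};
\]
this branch is real because $P$ has real coefficients and the branch is unique near $0$. For the sign of $G(0)$, I will use $G(0)=(-1)^{\bar d}\det J(x^{*})$: complex-conjugate pairs of eigenvalues of $J(x^{*})$ contribute positive factors $|\lambda_i|^{2}$ to the determinant, so the sign of $\det J(x^{*})$ is determined entirely by the strictly negative real eigenvalues, whose count has the same parity as $\bar d$ (the complex pairs being even in number). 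Hence $\det J(x^{*})$ has sign $(-1)^{\bar d}$ and $G(0)>0$. Assumption (3) is calibrated precisely so that $(-1)^{\bar d} H_2(0)<0$ in either parity of $\bar d$: when $\bar d$ is odd, $H_2(0)>0$ makes $(-1)^{\bar d}H_2(0)=-H_2(0)<0$, and the even case is symmetric. Therefore $\lambda'(0)<0$, and $\lambda(\tau)<0$ for $\tau>0$ small.

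Finally, I translate smallness of $\tau$ and positivity of the steady state into the parameter hypotheses of the theorem. The conservation relation \eqref{eq:new conv} gives $z^{*}=M_1+z(0)-u^{1}\cdot x^{*}\le M_1+z(0)$, so choosing $\theta$ small enough forces $\tau=\theta z^{*}$ below any prescribed tolerance; choosing $\mu$ small keeps the target value $x^{*}_1=\mu/\theta$ inside the range where Lemma~\ref{thm:ps exists with basic acr} produces the required positive steady state; and taking $z(0)$ sufficiently large ensures $z^{*}>0$ in the stoichiometric class determined by the $\bar M_i$. The main obstacle is the sign analysis: showing $G(0)>0$ requires the complex-conjugate structure of eigenvalues of a real matrix, and assumption (3) then has to be read in conjunction with this sign to conclude that $\lambda'(0)<0$. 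Once this sign bookkeeping is in place, the remainder is a routine application of the implicit function theorem and the continuity of polynomial roots.
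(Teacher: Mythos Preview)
Your argument is correct and follows essentially the same perturbation strategy as the paper: introduce $\tau=\theta z^{*}$ as a small parameter, use continuity of polynomial roots to keep the $\bar d$ nonzero roots of $\lambda G(\lambda)$ in the open left half plane, and then analyze the simple root emanating from $\lambda=0$ via the sign of $H_2(0)$ and the parity of $\bar d$. The only technical difference is in how the near-zero root is handled. The paper evaluates the constant term $P(0;\tau)=-(-1)^{\bar d}\tau H_2(0)=(-1)^{\bar d+1}\prod_{i=0}^{\bar d}\lambda_i(\tau)$ and reads off the sign of $\lambda_0(\tau)$ from the sign of $\prod_{i=1}^{\bar d}\lambda_i(\tau)$; you instead apply the implicit function theorem to obtain $\lambda'(0)=(-1)^{\bar d}H_2(0)/G(0)$ and use $G(0)=(-1)^{\bar d}\det J(x^{*})>0$. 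These are equivalent computations (your $G(0)>0$ is precisely the paper's observation that $\prod_{i\ge 1}\lambda_i$ has sign $(-1)^{\bar d}$), and your use of uniqueness from the implicit function theorem to conclude the branch is real replaces the paper's separate case ``if $\lambda_0(\epsilon)$ is complex then it is conjugate to some $\lambda_i(\epsilon)$''.

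One minor point worth tightening in your final paragraph: the quantities $z^{*}$ and $\tau=\theta z^{*}$ are coupled through $z(0)$, so you cannot independently ``choose $\theta$ small'' and then ``take $z(0)$ large'' without risking that $\tau$ fails to be small. The clean way (which the paper adopts) is to fix the ratio $\mu/\theta$ so that $x^{*}$ is determined, write $z^{*}=z(0)+C$ for a constant $C$ depending only on $x^{*}$ and the conservation data, and then scale $\theta\sim\epsilon^{2}$, $z(0)\sim\epsilon^{-1}$ so that simultaneously $z^{*}>0$ and $\tau=\theta z^{*}\sim\epsilon\to 0$. Your argument is easily repaired along these lines.
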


\begin{rmk}
Note that the entries $-\partial _z h_i(x^*_1,\dots,x^*_{\bar d},z^*)$ can be calculated by using solely $f_i$ and the conservative relations in the original system. For $i=2,3,\dots,\bar d$,  by the chain rule
\begin{align*}
&\partial_{z} h_{i}(x_1,\dots,x_{\bar d},z)\\
&=-\sum_{j=1}^{k}u^j_1\frac{\partial f_i}{\partial x_{\bar d+j}}\left (x_1,x_2,\dots,x_{\bar d},M_1-\dsum_{j=1}^{\bar d} x_j-z,\dots,M_k-\sum_{j=1}^{\bar d} u^k_{j}x_j(t)-u^k_j z(t) \right ).
\end{align*}
 Similarly for $i=1$,
\begin{align*}
&\partial_{z} h_{1}(x_1,\dots,x_{\bar d},z)\\
&=-\sum_{i=1}^{k}u^j_1\frac{\partial f_i}{\partial x_{\bar d+j}}\left (x_1,x_2,\dots,x_{\bar d},M_1-\dsum_{j=1}^{\bar d} x_j-z,\dots,M_k-\sum_{j=1}^{\bar d} u^k_{j}x_j-u^k_j z \right ) -(\theta x_1-\mu).
\end{align*}
Hence, especially for $(x^*,z^*)$ such that $x^*=\dfrac{\mu}{\theta}$, we have $$-\partial_{z} h_{1}(x^*,z^*)=-\sum_{j=1}^{k}u^j_1\frac{\partial f_i}{\partial x_{\bar d+j}}\left (x^*_1,x^*_2,\dots,x^*_{\bar d},M_1-\dsum_{j=1}^{\bar d} x^*_j-z^*,\dots,M_k-\sum_{j=1}^{\bar d} u^k_{j}x^*_j-u^k_j z^* \right ).$$
\end{rmk}

\begin{proof}
First of all, we scale $\mu=\epsilon^2 \bar \mu, \theta= \epsilon^2 \bar \theta $ and $z(0)=\dfrac{M}{\epsilon}$ for some $R$. Note that $u^1_{1}\neq 0$ by hypothesis 1 in the statement. Then by \eqref{eq:new conv} and \eqref{eq:conv stability of union}, we have
\begin{align*}
z^*&=\bar M_1-x^*_1-\sum_{j=2}^{\bar d} \frac{u^1_j}{u^1_1}x^*_j\\
&=\bar M_1-\frac{\mu}{\theta}-\sum_{j=2}^{\bar d}  \frac{u^1_j}{u^1_1}x^*_j\\
&=z(0)+\sum_{j=1}^{\bar d} \frac{u^1_j}{u^1_1}x_j(0)-\frac{\mu}{\theta}-\sum_{j=2}^{\bar d}  \frac{u^1_j}{u^1_1}x^*_j,
\end{align*}
for each $i$ such that $u_{i1}\neq 0$. Thus for each $\epsilon$, we have 
$$\theta z^*=\epsilon \bar \theta \left ( R + \epsilon \sum_{j=1}^{\bar d} \frac{u_{ij}}{v_1}x_j(0)-\epsilon \frac{\mu}{\theta}-\epsilon\sum_{j=2}^{\bar d}  \frac{u_{ij}}{v_1}x^*_j \right) >0$$ by taking sufficiently large $R=R(\epsilon)$. We denote $c(\epsilon)=\theta z^*$, then $\lim_{\epsilon \to 0}c(\epsilon)=0$.
 
By the hypothesis, all roots of $G(\lambda)$ have strictly negative real parts. Let $\lambda_0,\lambda_1,\dots,\lambda_{\bar d}$ be the roots of $\lambda G(\lambda)$ where $\lambda_0=0$ and $\lambda_i$ are non-zero roots with strictly negative real parts. Let further denote $\lambda_0(\epsilon),\lambda_1(\epsilon),\dots,\lambda_{\bar d}(\epsilon)$ the roots of
$|\lambda I -\bar J(x^*,z^*)|=\lambda G(\lambda)+c(\epsilon)\lambda H_1(\lambda)-(-1)^{\bar d+2} c(\epsilon) H_2(\lambda)$. By the continuity of roots of a polynomial with respect to the coefficients, we have $\lim_{\epsilon \to 0}|\lambda_i(\epsilon)-\lambda_i|=0$ for $i=1,2,\dots,\bar d$. Hence with small enough $\epsilon$, we could make the real parts of $\lambda_i(\epsilon)$ is still negative for each $i=1,2\dots,\bar d$.

We turn to show that $\lambda_0(\epsilon)$ has also a strictly negative real part. Note that $|\lambda I -\bar J(x^*,z^*)|\Big |_{\lambda=0}=(\lambda-\lambda_0(\epsilon))(\lambda-\lambda_1(\epsilon))\cdots (\lambda-\lambda_{\bar d}(\epsilon))$. Hence
\begin{equation}\label{eq:product of roots}
(-1)^{\bar d+1}\lambda_0(\epsilon)\lambda_1(\epsilon)\cdots\lambda_{\bar d}(\epsilon)=-(-1)^{\bar d+2}c(\epsilon)H_2(0).
\end{equation}
Suppose $\lambda_0(\epsilon)$ is a complex number with non-zero imaginary part. Then it must be a conjugate of $\lambda_i(\epsilon)$ for some $i$. Since we choose $\epsilon$ small enough so that the real part of each $\lambda_i(\epsilon)$ is strictly negative, $\lambda_0(\epsilon)$ has a negative real part. Now we suppose that $\lambda_0(\epsilon)$ is a real number. In this case, because of the negative real parts of $\lambda_i(\epsilon)$, the product $\prod_{i=1}^{\bar d}\lambda_i(\epsilon)$ is negative if $\bar d$ is odd and is positive otherwise. Thus by the hypothesis 2 and \eqref{eq:product of roots}, we have $\lambda_0(\epsilon) <0$. Thus the result follows.
\end{proof}
\textcolor{black}{It is notable that the condition of sufficient amount of initial Z in Theorem \ref{thm:stability} is known to be often satisfied in practical applications. }

\section{Control of networks with no positive steady states}\label{sec:control general 2d}
In this section, we introduce an ACR system that controls both a target species and other species in a given network system. By using this type of expanded ACR system, we show that a 2-dimensional reaction system, which admits no positive steady states, can be controlled as we showed in Section \ref{sec:additional species} of the main text. For a two dimensional system with species $A$ and $B$, let $A$ be the target species we desire to control. Then we define a mass action ACR system \eqref{eq:expanded acr system} that controls the other species $B$ as well as the target species $A$,
\begin{equation}\label{eq:expanded acr system}
A+Z\xrightarrow{\theta} 2Z, \quad Z\xrightarrow{\mu} A, \quad 
B+Z \xrightleftharpoons[\alpha_2]{\alpha_1} Z.
\end{equation}

\begin{lem}\label{lem:control 2d unstable}
Let $(\S,\C,\Re,\K)$ be a network system such that $\S=\{A,B\}$. Let $x(t)=(a(t),b(t))$ be the associated deterministic system such that 
\begin{align*}
\frac{d}{dt}a(t)=f_1(a(t),b(t)), \quad \text{and} \quad \frac{d}{dt}b(t)=f_2(a(t),b(t)).
\end{align*}
Suppose that there is no positive steady state for $(\S,\C,\Re,\K)$. Suppose further that for any positive constant $c$, there exists $d>0$ such that $f_1(c,d)=0$. Then the union system of $(\S,\C,\Re,\K)$ and \eqref{eq:expanded acr system}  admits a positive steady state $(a^*,b^*,z^*)$ such that $a^*=\frac{\mu}{\theta}$ for any $\mu$ and $\theta$ provided $\alpha_1b^*>\alpha_2$ when $f_2(\frac{\mu}{\theta},b^*)>0$ and $\alpha_1b^*<\alpha_2$ when $f_2(\frac{\mu}{\theta},b^*)<0$.
\end{lem}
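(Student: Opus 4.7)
My plan is to verify the claim by directly constructing a positive steady state $(a^\ast, b^\ast, z^\ast)$ of the union system and then checking that the stated sign conditions on $\alpha_1, \alpha_2$ make $z^\ast$ positive. First I would write out the mass-action ODE system obtained by superimposing the expanded controller \eqref{eq:expanded acr system} on the original dynamics. Since in $B+Z \rightleftharpoons Z$ the species $Z$ is catalytic and only $B$ is produced or consumed (with propensities $\alpha_2 z$ and $\alpha_1 b z$ respectively), the union system is
\begin{align*}
\tfrac{d}{dt}a(t) &= f_1(a,b) + z(\mu - \theta a),\\
\tfrac{d}{dt}b(t) &= f_2(a,b) + z(\alpha_2 - \alpha_1 b),\\
\tfrac{d}{dt}z(t) &= z(\theta a - \mu).
\end{align*}

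Next I would impose $z^\ast > 0$ and read off from the $z$ equation that $a^\ast = \mu/\theta$, exactly as in the base ACR controller analyzed in Lemma~\ref{thm:x1 is acr if ps exists}. Substituting this back into the first equation forces $f_1(\mu/\theta, b^\ast) = 0$, and the hypothesis that for every positive $c$ there exists $d>0$ with $f_1(c,d)=0$ guarantees such a $b^\ast > 0$ exists. With $a^\ast$ and $b^\ast$ in hand, the second equation at steady state reduces to $f_2(\mu/\theta, b^\ast) = z^\ast(\alpha_1 b^\ast - \alpha_2)$, so I can solve
\[
z^\ast = \frac{f_2(\mu/\theta, b^\ast)}{\alpha_1 b^\ast - \alpha_2}.
\]

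Finally I would verify positivity: $z^\ast > 0$ precisely when numerator and denominator share a sign, which is exactly the stated case distinction ($\alpha_1 b^\ast > \alpha_2$ when $f_2(\mu/\theta,b^\ast) > 0$, and $\alpha_1 b^\ast < \alpha_2$ when $f_2(\mu/\theta,b^\ast) < 0$). This places $(a^\ast,b^\ast,z^\ast)$ in $\R^3_{>0}$ and satisfies every steady-state equation, proving the lemma. There is no real obstacle beyond algebraic bookkeeping; the only subtle point worth flagging is that the degenerate case $f_2(\mu/\theta,b^\ast)=0$ is implicitly excluded by the assumption that $(\S,\C,\Re,\K)$ has no positive steady state, since otherwise $(\mu/\theta, b^\ast)$ would annihilate both $f_1$ and $f_2$ simultaneously. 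The role of the extra reactions $B+Z \rightleftharpoons Z$ is precisely to introduce a tunable control parameter into the $b$-balance so that $z^\ast$ can be chosen consistently with a pre-specified $a^\ast$, which the base controller alone cannot achieve in the absence of a positive steady state for the original network.
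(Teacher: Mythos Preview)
Your proposal is correct and follows essentially the same approach as the paper: write the union ODE system, use the $z$-equation to fix $a^\ast=\mu/\theta$, invoke the hypothesis to obtain $b^\ast$ with $f_1(\mu/\theta,b^\ast)=0$, solve the $b$-equation for $z^\ast$, and then use the sign hypotheses on $\alpha_1 b^\ast-\alpha_2$ together with the no-positive-steady-state assumption (which rules out $f_2(\mu/\theta,b^\ast)=0$) to conclude $z^\ast>0$. Your treatment of the degenerate case and your remark on the catalytic role of $Z$ in $B+Z\rightleftharpoons Z$ match the paper's reasoning exactly.
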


\begin{proof}
First, the reactions $B+Z \rightleftharpoons Z$ do not change the concentration of $Z$. Hence $A$ is still an ACR species as we showed around \eqref{eq:basic acr}. Let $\bar x=(\bar a,\bar b,z)$ be the solution to the deterministic system associated with the union system of $(\S,\C,\Re,\K)$ and \eqref{eq:expanded acr system}. Then we have
\begin{align*}
&\frac{d}{dt}\bar a(t)=\bar f_1(\bar a,\bar b,z)=f_1(\bar a,\bar b)-z(\theta \bar a-\mu),  \\
&\frac{d}{dt}\bar b(t)=\bar f_2(\bar a,\bar b,z)=f_2(\bar a,\bar b)-z(\alpha_1 \bar b-\alpha_2), \text{ and}\\
&\frac{d}{dt}z(t)=z(\theta \bar a-\mu).
\end{align*}

By the hypothesis, there exists a positive constant $x^*_2$ such that $\bar f_1(\frac{\mu}{\theta},x^*_2,z)= f_1(\frac{\mu}{\theta},x^*_2)=0$ for any $z$. 
Then if we choose $z^*=\dfrac{f_2(\frac{\mu}{\theta},x^*_2)}{\alpha_1  x^*_2-\alpha_2}$, we have $\bar f_2(\frac{\mu}{\theta},x^*_2,z^*)=0$. Since $(\S,\C,\Re,\K)$ does not admits a positive steady state, $f_2(\frac{\mu}{\theta},x^*_2)\neq 0$. Thus $z^*\neq 0$.

It follows that $z^*>0$ because we assumed that $\alpha_1 x^*_2>\alpha_2$ if $f_2(\frac{\mu}{\theta},x^*_2)>0$ and $\alpha_1 x^*_2<\alpha_2$ if $f_2(\frac{\mu}{\theta},x^*_2)<0$. Therefore $(\frac{\mu}{\theta},x^*_2,z^*)$ is a positive steady state of the union system.
\end{proof}

\begin{rmk}
In the proof above, it was shown that $b^*$ is independent of $\alpha_1$ and $\alpha_2$. Hence we can evaluate the value of $f_2(\frac{\mu}{\theta},b^*)$ and then we set the parameters $\alpha_1$ and $\alpha_2$ in the controller \eqref{eq:expanded acr system} according to the sign of $f_2(\frac{\mu}{\theta},b^*)$.
\end{rmk}

\begin{rmk}
\textcolor{black}{For arbitrary $\theta>0$ and $\mu>0$, if there exists a positive steady state for the controlled system $\bar a(t)$, $\bar b(t)$ and $z(t)$, then it must hold $\bar f_1(\frac{\mu}{\theta},b^*,z^*)=f_1(\frac{\mu}{\theta},b^*)=0$ for some $b^*$. Hence for the controllablility of the basic ACR system, it is a necessary condition of a 2-dimensional system that for any $c>0$ there exists $d>0$ such that $f_1(c,d)=0$.}
\end{rmk}

\begin{rmk}
\textcolor{black}{Stability analysis of the union system with an extended ACR controller is a open problem.}
\end{rmk}

\textcolor{black}{For a higher-dimensional system, we can use a similar idea to build an extended ACR circuit with multiple control species. }

\begin{example}
\textcolor{black}{
Consider the following $3$-dimensional mass-action system admitting no positive steady states, where protein $B$ and protein $C$ are dimerized to a protein $A$.
\begin{align*}
B+C \xrightarrow{1} A, \quad B\xleftarrow{1}&\ 0 \xrightarrow{2} C\\
&\uparrow \text{\footnotesize{$1$}}\\
&A
\end{align*}
Then, we can consider an extended ACR system with two control species $Z_1$ and $Z_2$ such that
\begin{align*}
&A+Z_1\xrightarrow{\theta} 2Z_1, \quad Z_1\xrightarrow{\mu} A,\\
&A+Z_2\xrightarrow{\theta} 2Z_2, \quad Z_2\xrightarrow{\mu} A,\\
&B+Z_1 \xrightleftharpoons[\alpha_2]{\alpha_1} Z_1, \quad C+Z \xrightleftharpoons[\alpha_4]{\alpha_3} Z_1.
\end{align*}
Let $\dfrac{\mu}{\theta}=10$ be the desired set point for $A$ in the union of the given $3$-dimensional system and the extended ACR system. 
Then species $A$,$B$ and $C$ are stabilized at $(10,b^*,c^*)$ such that $b^*c^*=10$, as long as the parameters $\alpha_1,\alpha_2,\alpha_3$ and $\alpha_4$ satisfy 
\begin{align*}
\alpha_2 > b^*\alpha_1 \quad \text{and} \quad \alpha_4 > b^*\alpha_3.
\end{align*}}
\begin{figure}
\centering{
\includegraphics{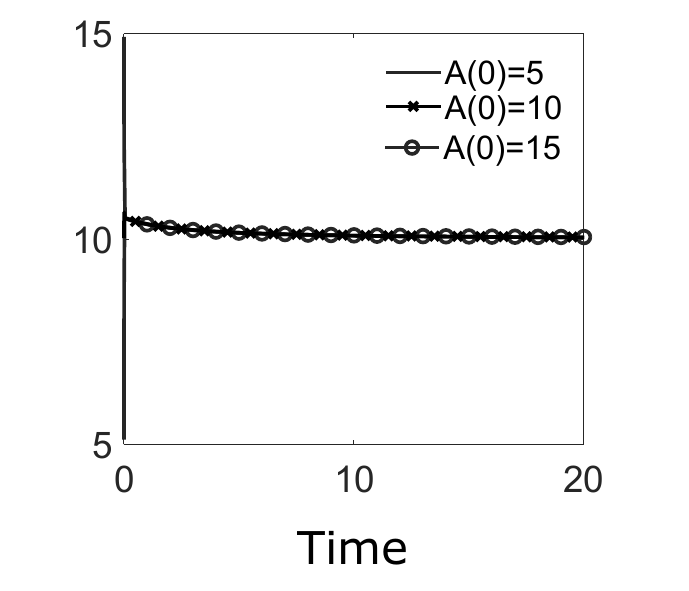}}
\caption{Time evolution of the concentration of $A$ in the union system.}
\end{figure}
\end{example}

\section{Applications of the deterministic results}
\subsection{ERK system}\label{subsec:ERKsystem}
In this section, we analyze the stability of the controlled ERK system shown in Figure~\ref{fig:ERK}~a.
We show that both the conditions in Lemma \ref{thm:ps exists with basic acr} and conditions in Theorem \ref{thm:stability} hold for a positive steady state $(x^*,z^*)$ in the controlled ERK system. Let $(\S,\C,\Re,\K)$ be the deterministic system associated with the ERK network in Figure \ref{fig:ERK} a using the parameters given in the main text. Let also $(\bar \S,\bar \C, \bar \Re, \bar \K)$ be the union system of $(\S,\C,\Re,\K)$ and the ACR controller in \ref{fig:ERK} a with $\theta=1$ and $\mu=2$. We use a Matlab simulation to obtain a positive steady state, as well as the relevant Jacobian, eigenvalues and determinant.

Let $x(t)=(x_1(t),x_2(t),\dots,x_{12}(t))$ and $\bar x(t)=(\bar x_1(t),\bar x_2(t),\dots,\bar x_{12}(t),z(t))$ represent the concentrations of species in the systems $(\S,\C,\Re,\K)$ and $(\bar \S,\bar \C, \bar \Re, \bar \K)$, respectively. We arrange $x_1,x_2,\dots,x_{12}$ so that they represent $F, E, S_{00}, S_{01}, S_{10}, ES_{00}, ES_{01}, FS_{01}, FS_{10}, S_{11}, ES_{10}$ and $FS_{11}$, respectively. We also let $\bar x_i$ represent the same concentration as $x_i$, and we let $z$ represent the concentration of $Z$. 

To show the condition in Lemma \ref{thm:ps exists with basic acr}, we show that the system $(\S,\C,\Re,\K)$ admits a positive steady state at $x^*$ such that $x^*_1=\frac{\mu}{\theta}=2$. We verify the existence of the positive steady state using the simulation shown in the figure below for the ERK system with $F_{\text{tot}}=31, E_{\text{tot}}=37, S_{\text{tot}}=100$.

\begin{figure}[H]
\centering
\includegraphics{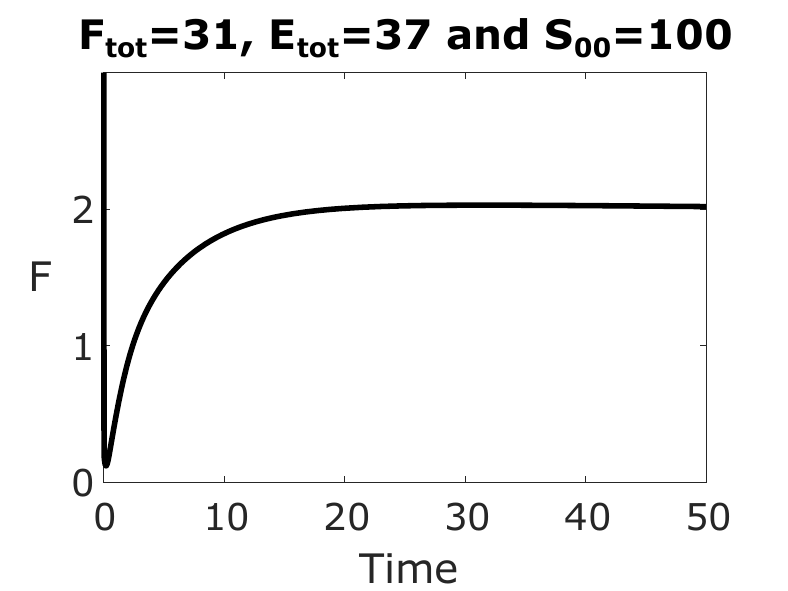}
\end{figure}
The positive steady state $x^*=(2.0, 2.1,7.8, 0.9, 11.9, 16.7, 7.6, 1.5, 15.2,15.3,10.9,12.3)$. We rounded off the values to one decimal place.

We also notice there are three linear independent conservative relations in $(\S,\C,\Re,\K)$,
\begin{equation}\label{eq:conv erk}
\begin{split}
x_{12}(t)&=F_{\text{tot}}-x_1(t)-x_8(t)-x_9(t),\\
x_{11}(t)&=E_{\text{tot}}-x_2(t)-x_6(t)-x_7(t),\\
x_{10}(t)&=S_{\text{tot}}-\sum_{i=3}^{12}x_i(t).
\end{split}
\end{equation}
 The target species $F$ is involved in the first conservative relation in \eqref{eq:conv erk}. Hence condition 1 in Theorem \ref{thm:stability} holds.

For the second condition in Theorem \ref{thm:stability}, we have the following Jacobian $J(x^*)$ of the reduced system obtained by using the conservation laws \eqref{eq:conv erk} for $(\S,\C,\Re,\K)$ as \eqref{eq:reduce conserve}. 
\begin{equation}\label{eq:jac for erk}
\textcolor{black}{J(x^*)=\begin{bmatrix}
-25.3  & -3.0 &   3.0&   -3.0&    1.5&         0 &        0&    3.0&    2.0\\
         0&  -90.8&   -1.0&   1&   -5.0&   -4.0&         0&         0\\
    0 & -56.2&   -1.0&         0&         0&    2.0&         0&    1.0&    1.0\\
       -4.4&   -3.3&         0   -7.0&         0&         0&    1.0&    4.0&         0\\
   -8.1&  -29.2&         0&         0&   -2.5&   -5.0&   -5.0&         0&    3.0\\
         0&   56.2&    1.0&         0&         0&   -3.0&         0&         0&         0\\
         0 &   3.3&         0&    1.0&         0&    1.0&   -3.0&         0&         0\\
    4.4&         0&         0&    6.0&         0&         0&         0&   -5.0&         0\\
    5.1&         0&         0&         0&    1.5&         0&         0&   -3.0&   -7.0
\end{bmatrix}}
\end{equation}
The eigenvalues of $J(x^*)$ are
 $-0.11,  -0.8,   -1.4,    -4.1,  -6.0, -7.5,  -9.3,  -27.0,  -88.2$.

For the third condition in Theorem \ref{thm:stability}, we note that $d=12, k=3$ and hence $\bar d=9$. Thus, if $H_2(0)>0$, then the condition holds. Note that we have three conservation relations for the system $(\bar \S,\bar \C,\bar \Re,\bar \K)$ 
\begin{align*}
x_{12}(t)&=F_{\text{tot}}-x_1(t)-x_8(t)-x_9(t)-z(t),\\
x_{11}(t)&=E_{\text{tot}}-x_2(t)-x_6(t)-x_7(t),\\
x_{10}(t)&=S_{\text{tot}}-\sum_{i=3}^{12}x_i(t).
\end{align*}
Hence using the same notation in Section \ref{sec:stability}, we have
\begin{align*}
\partial_{z} h_{i}(x^*_1,\dots,x^*_{\bar d})=\begin{cases}
-2 \quad &\text{if $i=1$},\\
-3 &\text{if $i=9$, and}\\
0 &\text{otherwise}.
\end{cases}
\end{align*}
Combining this with \eqref{eq:jac for erk}, we obtain the matrix 
\begin{align*}
\begin{pmatrix}
-\partial_2 g_1(x^*_1,\dots,x^*_{\bar d}) & \cdots &-\partial_{\bar d} g_{1}(x^*_1,\dots,x^*_{\bar d})&  -\partial_{z} h_{1}(x^*_1,\dots,x^*_{\bar d})\\
\lambda-\partial_2 g_2(x^*_1,\dots,x^*_{\bar d}) & \cdots & -\partial_{\bar d} g_{2}(x^*_1,\dots,x^*_{\bar d})& -\partial_{z} h_{1}(x^*_1,\dots,x^*_{\bar d}))\\
 & \vdots & \\
-\partial_2 h_{\bar d}(x^*_1,\dots,x^*_{\bar d}) & \cdots & \lambda-\partial_{\bar d} g_{\bar d}(x^*_1,\dots,x^*_{\bar d})& -\partial_{z} h_{\bar d}(x^*_1,\dots,x^*_{\bar d}))
\end{pmatrix} 
\end{align*}
shown in \eqref{eq:dets}. By plugging in $\lambda=0$ into the matrix and computing its determinant, we have $H_2(0)=3.3\times 10^6$. Since $\bar d=9$, the third condition hold in Lemma \ref{thm:ps exists with basic acr}.

Consequently we show that all the conditions in Lemma \ref{thm:ps exists with basic acr} and show that Theorem hold \ref{thm:stability}, hence the linear stability of $(\S,\C,\Re,\K)$ follows.

\subsection{A 2-dimensional system admitting no positive steady states}\label{subsec:Nonconserv system}
In this section, we use the expanded ACR controller \eqref{eq:expanded acr system} to control the 2-dimensional system 
\begin{equation}\label{eq:system without ps}
\begin{split}
&\hspace{0.6cm} A	\\[-1ex]
\text{\scriptsize{$3$}} \hspace{-0.2cm}&\nearrow \\[-0.6ex]
A+B \xrightarrow{1} \emptyset & \\[-1ex]
\text{\scriptsize{$5$}} \hspace{-0.2cm}&\searrow  \\[-1ex]
&\hspace{0.6cm} B,
\end{split}
\end{equation}
 which is introduced in Section \ref{subsec:stochastic control} of the main text.
 The mass-action deterministic system associated with this network is
 \begin{align*}
\frac{d}{dt}a(t)&=f_1(a,b)=-a(t)b(t)+3,\\
\frac{d}{dt}b(t)&=f_2(a,b)=-a(t)b(t)+5.
\end{align*}
Note that this system does not admit a positive steady state, as there does not exists $(a^*,b^*)\in \R^2_{>0}$ such that $3-a^*b^*=0$ and $5-a^*b^*=0$. In particular, $\dlim_{t\to \infty}(b(t)-a(t))=\infty$ since $\dfrac{d}{dt}(b(t)-a(t))=2$. % Hence $\dlim_{t\to \infty}b(t)=\infty$. Assume that $\displaystyle \limsup_{t \to \infty}a(t)>0$. Then $\displaystyle \limsup_{t \to \infty}\dfrac{d}{dt}a(t)= -\infty$, and this is contradiction to the assumption. Therefore $a(t)$ converges to $0$, as $t \to \infty$ as shown in Figure \ref{fig:AB}. 

Furthermore the union system of \eqref{eq:system without ps} and the basic ACR system $Z+A\xrightarrow{1} 2Z$ and $Z\xrightarrow{5} A$ also does not admit a positive steady state. The associated mass-action system is
\begin{align*}
\frac{d}{dt}a(t)&=-a(t)b(t)+3-z(t)(a(t)-5),\\
\frac{d}{dt}b(t)&=-a(t)b(t)+5,\\
\frac{d}{dt}z(t)&=z(t)(a(t)-5).\\
\end{align*}
Suppose there exists a positive steady state $(a^*,b^*,z^*)$. By the last equation, $a^*= 5$. Plugging $a^*= 5$ into $a(t)$ in the first equation, we have $b^*=\frac{3}{5}$. However, at this state, $b$ is not stabilized as $-a^*b^*+5=2$, hence it contradicts to the assumption that $(a^*,b^*,z^*)$ is a positive steady state.

Now we consider the union system of \eqref{eq:system without ps} and the expanded ACR controller \eqref{eq:expanded acr system} introduced in Section \ref{sec:control general 2d}.
\begin{align*} 
Z+A\xrightarrow{\theta} 2Z, \quad Z\xrightarrow{\mu} A, \quad Z+B \xrightleftharpoons[\alpha_2]{\alpha_1} Z,
\end{align*} with general positive parameters $\kappa_1, \kappa_2, \kappa_3, \theta, \mu, \alpha_1$ and $\alpha_2$. We use Lemma \ref{lem:control 2d unstable} to show this union system admits a positive steady state under a mile condition.
The associated mass-action system is
\begin{align*}
\frac{d}{dt}a(t)&=-a(t)b(t)+3-z(t)(\theta a(t)-\mu),\\
\frac{d}{dt}b(t)&=-a(t)b(t)+5-z(t)(\alpha_1 b(t)-\alpha_2),\\
\frac{d}{dt}z(t)&=z(t)(\theta a(t)-\mu).\\
\end{align*}
It can be easily shown that for any positive constant $c$, there exists $d=\dfrac{3}{d}$ such that $f_1(c,d)=0$. Hence for $a^*=\dfrac{\mu}{\theta}$ with arbitrary $\mu>0$ and $\theta>0$, there exists $b^*=\dfrac{3\theta}{\mu}$ such that $f_1(a^*,b^*)=0$. Since $f_2(a^*,b^*)=2>0$, we tune the parameters $\alpha_1>0$ and $\alpha_2>0$ in \eqref{eq:expanded acr system} as they satisfy $\alpha_1 b^* > \alpha_2$. 

Hence by Lemma \ref{lem:control 2d unstable}, 
\begin{align*}
(a^*,b^*,z^*)=\left (\frac{\mu}{\theta},\frac{\kappa_2 \theta}{\kappa_1 \mu},\frac{\kappa_1 \mu (\kappa_3-\kappa_2)}{\alpha_1 \kappa_2 \theta-\alpha_2 \kappa_1 \mu} \right )
\end{align*} is a positive steady state.
\section{Stochastic ACR control}\label{sec:control stochastic}
To control a stochastic system via an ACR controller, we rely on an approximation under multiscaling model reduction as described in Section \ref{subsec:stochastic control} and \ref{subsec:faststo} of the main text. 
In this section we introduce the formal procedures for generating a reduced model, and we introduce related theorems.
\subsection{Network reduction}
To formally define a reduced model of a given stochastic system, a notion of network projection needs to be introduced. The reduced system shown in Figure \ref{fig:RL schematic} b and the hybrid system shown in \ref{fig:faststochastic} d are obtained through network projection. For example,
consider the reaction network
\begin{equation}\label{eq:orginal network}
A \xrightleftharpoons[\kappa_1]{\kappa_2} B.
\end{equation}
Suppose that for some parameters, species $B$ is rarely produced or removed until time $t=1$. In this case,  we approximate the distribution of $A$ with the stochastic system associated with
\begin{equation}\label{eq:projection}
A \xrightleftharpoons[\kappa_1]{\kappa_2 B(0)} 0.
\end{equation}
Note that this reduced network is obtained by freezing the copy number of $B$ at $B(0)$. We call network \eqref{eq:projection} the projection of \eqref{eq:orginal network} by freezing species $B$ at $B(0)$. As this example shows, network projection can be used to describe an asymptotic behavior of a subset of species.

We define a projection function for complexes and reactions in $(\S,\C,\Re)$ with $\S=\S_L\cup \S_H$ where $\S_L=\{S_1,S_2,\dots,S_d\}$ and $\S_H=\{S_{d+1},S_{d+2},\dots,S_{d+r}\}$. In the later section, $\S_L$ and $\S_H$ would represent collections of species with low and high copy numbers, respectively. Let $q_L : \Z^{d+r}\to \Z^d$ and $q_H : \Z^{d+r}\to \Z^r$ be projection function such that for each $v=(v_1,\dots,v_d,v_{d+1},\dots,v_{d+r})^T\in \Z^{d+r}$,
\begin{align*}%\label{eq:project function}
q_L(v)=(v_1,v_2,\dots,v_d)^T\in \Z^d \quad \text{and} \quad q_H(v)=(v_{d+1},v_{d+2},\dots,v_{d+r})^T \in \Z^r.
\end{align*}

We use the projection function $q_L$ and $q_H$ for complexes and reaction. For example, for a network $A+B\to B$ with complexes $A$ and $B$, we let $\S_L=\{A\}$. Then the complex $A$, $B$ and the reaction $A+B\to B$ are represented with two dimensional vectors $(1,1)^T, (0,1)^T$ and $(-1,0)$, respectively. Then the projection of $A$, $B$ and the reaction $A+B\to B$ are
$q_L((1,1)^T)=1$, $q_L((0,1)^T)=0$ and $q_L((-1,0)^T)=-1$ which are associated with complexes $A$, $0$ and reaction $A\to 0$, respectively. Hence by abusing notation, $q_L(A)=A, q_L(B)=0$ and $q_L(A+B\to B)=A\to 0$. Generally, we denote $q_L(y)$ the complex obtained by projection of the complex vector associated with a complex $y$. In this way, the projected network $(\S_L,\C_L,\Re_L)$ of the original reaction network $(\S,\C,\Re)$ by $q_L$ is defined to be 
\begin{equation}\label{eq:project network}
\begin{split}
&\S_L=\{S_1,\dots,S_d\}, \ \C_L=\{q_L(y):y\in \C\}, \ \text{and} \\ 
 &\Re_L=\{ q_L(y)\to q_L(y') : y\to y' \in \Re \ \text{ such that } q_L(y')-q_L(y)\neq \overrightarrow{0}\}.
 \end{split}
\end{equation}

The rate constants of the projected network are defined with a given rate constants $\K$ of a given network $(\S,\C,\Re)$. We inherit $\K$ to the projected network by incorporating the terms coming from freezing species $\S_H$ at their initial count. For example, the rate constant of reaction $A\to 0$ in \eqref{eq:projection} is $\kappa_2 B(0)$, where $\kappa_2$ is inherited from reaction $A+B\to B$.

Hence for the set of reaction intensities $\K$ in a given network $(\S,\C,\Re,\K)$, the set of reaction intensities for a projected network is
\begin{equation}\label{eq:rate constants for reduced}
\K_L=\left \{\bar \lambda_u(x)=\sum_{\substack{y_k\to y'_k \in \Re \\q_L(y_k)=\bar y_u, q_L(y'_k)=\bar y'_u}}q_H(X(0))^{q_H(y)} \lambda_k(x)  : \bar y_u \to \bar y'_u \in \Re_L \right \},
\end{equation}
 where $u^v=\prod u_i^{v_i}$ for the same dimensional non-negative vectors $u$ and $v$ and we use here the convention $0^0=1$. The summation in \eqref{eq:rate constants for reduced} arises when multiple reactions in $(\S,\C,\Re)$ are projected into the same reaction in $\Re_L$.

%Note that some species in $S_L$ could be negligible because it neither  appears in any reactions in $\Re_L$ nor changes its count by any reaction in $\Re_L$. For example, if $S_L=\{A,B\}$, $\S_H=\{C\}$ and $\Re=\{A+C\to C, A+B\to A \}$, then the $\Re_L$ consists of single reaction $A+B\to A$. For the sake of simplicity, we do not consider such case. 

\subsection{Stationary distribution approximation under multiscaling model reduction}\label{subsec:model reduction}
The main idea of the stationary distribution approximation shown in Section \ref{subsec:stochastic control} in the main text is the multiscaling model reduction. In this section, we introduce the multiscaling for a stochastic reaction network system with reaction propensities of constant order. 

Throughout this section we use the following notations. Let $(\S,\C,\Re,\K)$ be a network system with $\S=\{S_1,S_2,\dots,S_{d+r}\}$ and let $N$ be a scaling parameter. Let $X^N=(X^N_1,X^N_2,\dots,X^N_{d+r})$ be the associated scaled stochastic process with transition probabilities \eqref{eq:prob} such that each $X^N_i$ represents the counts of species $S_i$. For a given initial condition $X^N(0)$, let 
\begin{equation}\label{eq:S setting}
\S_L=\{S_i \in \S : X^N_i(0) =c_i\}\quad \text{and} \quad \S_H=\{S_i \in \S: X^N_i(0)=c_i N\},
\end{equation} where $c_i$'s are positive constants.

Then for a given collection of rate constants $\K$, we scale the system with the following reaction intensities.
\begin{equation}\label{eq:scaled rates}
\K^N=\left \{\lambda^N_{k} = \frac{\kappa_{y\to y'}}{N^{\Vert q_H(y) \Vert_1}}\lambda_k : \lambda_k \in \K \right \},
\end{equation}
where $\Vert \cdot \Vert_1$ is the 1-norm. 

For example, let $A+B\xrightarrow 0$ be a reaction network system with $\S_L=\{A\}$ and $\S_H=\{B\}$. Let $x=(x_A,x_B)$ be a state of the associated stochastic process $X^N$ with a scaling parameter $N$. Then the for a given reaction intensity $\lambda(x)=\kappa x_A x_B$ of the reaction $A+B\to 0$, we define a scaled reaction intensity $\lambda^N(x)=\frac{\kappa}{N^{\Vert q_H(A+B)\Vert_1}}\lambda(x)=\frac{\kappa}{N}x_A x_B$. Then the scaled stochastic process $X^N$ has the transition probability
\begin{align*}
P(X^N(t+\Delta t)=x+(-1,-1)^T \ | \ X^N(t)=x)=\lambda^N(x) \Delta t + h(\Delta),
\end{align*} where $h$ is defined as \eqref{eq:prob}.

Having presented the above example, we now describe in more detail the procedure for a formal multiscaling model reduction. Note that in the example above, as long as $X_B(t)$ is of order $N$, the propensity is of order 1. Under this circumstance, we intuitively expect that $X_B$ would not be substantially change because of the relatively low reaction propensity. Using this background, we can approximate the distribution of a stochastic system through multiscaling model reduction. 
The proof of the following theorem is provided in the separate paper \cite{EK2019}.

\begin{thm}\label{thm:main1} 
Let $X^N=(X^N_1,X^N_2,\dots,X^N_{d+r})$ be the stochastic processes associated with $(\S,\C,\Re,\K^N)$ where $\K^N$ is as \eqref{eq:scaled rates}. For an initial condition $X(0)$, suppose $\S=\S_L\cup \S_H$ with $\S_L$ and $\S_H$ as in \eqref{eq:S setting}. Let $X$ be the associated stochastic process for the projected network system $(\S_L, \C_L, \Re_L, \K^N_L)$. Let further that  $p_L^N(\cdot,t)$ and $p(\cdot,t)$ be the probability distributions for $q_L(X^N)$ and $X$ at time $t$, respectively. Then for any $A \subset \Z^{d}_{\ge 0}$, we have
\begin{equation}\label{eq:conv}
|p^N(A,t)- p(A,t)| = O(N^\nu) \quad \text{for some } \nu \in (0,1).
\end{equation}
\end{thm}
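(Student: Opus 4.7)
The plan is to use a coupling argument based on Kurtz's random time change representation. Write the scaled process as
\[
X^N(t) = X^N(0) + \sum_k (y'_k - y_k)\, Y_k\!\left(\int_0^t \lambda^N_k(X^N(s))\,ds\right),
\]
where $\{Y_k\}$ are independent unit-rate Poisson processes, and analogously write the projected process $X$ driven by the same (or carefully coupled) Poisson processes with intensities $\bar\lambda_u$ from \eqref{eq:rate constants for reduced}. By the scaling \eqref{eq:scaled rates}, each $\lambda^N_k(X^N(s))$ is of constant order provided the high-abundance coordinates stay close to $Nc_i$ and the low-abundance coordinates stay bounded, so integrated rates up to time $t$ are $O(t)$.

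Next I would establish the key fluctuation estimate. Define the good event
\[
E_N = \Bigl\{\sup_{s \le t} \bigl\lVert q_H(X^N(s)) - q_H(X^N(0)) \bigr\rVert \le N^\beta, \ \sup_{s\le t}\lVert q_L(X^N(s))\rVert \le M\Bigr\}
\]
for some $\beta\in(0,1)$ and a constant $M$ chosen large enough in terms of $t$. On $E_N$ the propensities $\lambda^N_k$ are uniformly bounded, so the total number of reactions firing by time $t$ is stochastically dominated by a Poisson random variable with parameter $Ct$ for a constant $C$. A standard tail bound then yields $P(E_N^c) = O(e^{-cN^\beta})$, which is negligible compared with $N^\nu$. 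Boundedness of the low-abundance coordinates follows from a similar dominated-birth-process argument plus the assumption that the reduced dynamics remain stochastically well-behaved.

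On $E_N$ I would compare the two processes reaction-by-reaction. By the definition of the projection \eqref{eq:rate constants for reduced} and of $\K^N$, the difference $|\lambda^N_k(X^N(s)) - \bar\lambda_u(q_L(X^N(s)))|$ where $q_L$ sends reaction $k$ to $u$ depends only on how far $q_H(X^N(s))^{q_H(y_k)}/N^{\lVert q_H(y_k)\rVert_1}$ has drifted from $\prod_i c_i^{q_H(y_k)_i}$. On $E_N$ this deviation is $O(N^{\beta-1})$ by a first-order Taylor expansion, so integrated intensity differences are $O(tN^{\beta-1})$. Coupling the Poisson processes to share increments up to these small rate discrepancies and using the Lipschitz dependence of events on Poisson intensities then gives
\[
\bigl|P(q_L(X^N(t)) \in A) - P(X(t) \in A)\bigr| \le P(E_N^c) + C'\, t\, N^{\beta-1},
\]
and choosing $\beta$ close to $1$ yields the bound $O(N^\nu)$ for any $\nu \in (\beta-1,0) \cup$ corresponding positive representation after relabeling, which is the claim.

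The main obstacle will be the careful bookkeeping when multiple reactions of the original network collapse to the same reaction in $\Re_L$, since in that case $\bar\lambda_u$ is a sum and the Poisson coupling must be done at the level of individual terms before aggregating; one has to verify that the summed Poisson processes can still be coupled to single Poisson processes of slightly different rate with an error controlled by the rate gap. A secondary technical point is ensuring the low-abundance species do not explode, which requires either a moment bound or a separate Lyapunov argument using the structure of $\Re_L$ (weak reversibility and zero deficiency of the reduced network, as assumed in the applications, makes this transparent). Once these two issues are handled the bound follows from standard Poisson coupling estimates and Gronwall-type reasoning.
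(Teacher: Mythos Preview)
The paper does not prove this theorem. Immediately before the statement the authors write ``The proof of the following theorem is provided in the separate paper \cite{EK2019},'' and no argument appears anywhere in the present manuscript. There is therefore nothing here to compare your proposal against.

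On the merits of your sketch: the coupling via the random time change representation, the good event $E_N$ controlling the drift of $q_H(X^N)$ at scale $N^\beta$, and the $O(N^{\beta-1})$ propensity-difference estimate are the natural ingredients and match the general strategy one expects in the companion paper (and in the Kang--Kurtz / Ball--Kurtz--Popovic--Rempala lineage). Two points deserve tightening. First, your final sentence is garbled: the theorem as stated has $\nu\in(0,1)$, which makes $N^\nu\to\infty$ and the bound vacuous; almost certainly the intended exponent is negative (i.e.\ $O(N^{-\nu})$), and your estimate $O(tN^{\beta-1})$ with $\beta\in(0,1)$ is exactly of this form, so just say so rather than ``relabeling.'' Second, the bound $P(E_N^c)=O(e^{-cN^\beta})$ for the high-abundance drift is fine via Poisson tails, but the uniform-in-$N$ boundedness of the low-abundance coordinates up to a fixed time $t$ is not automatic and does \emph{not} require zero deficiency or weak reversibility of the reduced network (those hypotheses are only used downstream in Corollary~\ref{cor:poissonapprox}); you need a direct nonexplosion/moment argument for the reduced process on $[0,t]$, which follows from the fact that all propensities are polynomial and the constant-order scaling keeps them uniformly bounded on compact time intervals once the high-abundance species are frozen.
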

\begin{rmk}
In Theorem \ref{thm:main1}, if $X$ admits a stationary distribution $\pi$, then 
\begin{equation}\label{eq:conv2}
|p^N(A,t)- \pi(A)| \le |p^N(A,t)- p(A,t)| +|p(A,t)-\pi(A)|.
\end{equation}
Hence, for fixed $t$, if $|p(A,t)-\pi(A)|$ is sufficiently small, then $p_L^N(A,t)\approx \pi(A)$ for large $N$. 
\end{rmk}
For a special case, $p_L^N(A,t)$ could be explicitly estimated with a Poisson distribution $\pi$
The following corollary follows by Theorem \ref{thm:prod} and Theorem \ref{thm:main1}.

\begin{cor}\label{cor:poissonapprox}
Under the same conditions in Theorem \ref{thm:main1}, suppose that $(\S_L, \C_L, \Re_L)$ has zero deficiency and is weakly reversible reaction network. Then for a positive steady state $c\in \Re^{|\S_L|}_{>0}$ of the deterministic counter part \eqref{eq:deterministic system} of $(\S_L,\C_L,\Re_L,\K_L)$, we have
\begin{equation}\label{eq:poisson approx}
\lim_{t\to \infty}\lim_{N\to \infty}p_L^N(x,t)= M\prod_{i=1}^{|\S_L|}\frac{c_i^{x_i}} {x_i!} \mathbbm{1}_{ \{x\in \mathbb{S}\} }
\end{equation}
 where $\mathbb{S}$ is the state space, and $M$ is the normalizing constant. 
\end{cor}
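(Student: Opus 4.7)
The plan is to read the corollary as the composition of two already-established statements: Theorem \ref{thm:main1} produces the inner $N \to \infty$ limit, and Theorem \ref{thm:prod} identifies the $t \to \infty$ limit of that projected process. Because the iterated limit is $\lim_{t\to\infty}\lim_{N\to\infty}$, there is no delicate limit interchange to worry about --- we may carry out the two passages successively.

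First I would fix $x \in \mathbb{S}$ and $t > 0$ and apply Theorem \ref{thm:main1} to the singleton $A = \{x\}$, obtaining $\lim_{N\to\infty} p_L^N(x,t) = p(x,t)$, where $p(\cdot,t)$ is the time-$t$ distribution of the Markov process $X$ associated with the projected system $(\S_L, \C_L, \Re_L, \K_L)$. This disposes of the inner limit and reduces the corollary to showing $\lim_{t\to\infty} p(x,t) = M \prod_{i=1}^{|\S_L|} c_i^{x_i}/x_i! \cdot \mathbbm{1}_{\{x \in \mathbb{S}\}}$.

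Next I would invoke Theorem \ref{thm:prod} on $(\S_L, \C_L, \Re_L, \K_L)$, which is weakly reversible and of deficiency zero by hypothesis. The theorem guarantees that $X$ admits a stationary distribution of product-of-Poissons form, namely
\begin{equation*}
\pi(x) = M \prod_{i=1}^{|\S_L|} \frac{c_i^{x_i}}{x_i!} \, \mathbbm{1}_{\{x \in \mathbb{S}\}},
\end{equation*}
where $c$ is the (unique, by Horn--Feinberg) positive steady state of the associated mass-action deterministic system and $M$ is the normalizing constant on the closed communicating class $\mathbb{S}$. Then I would argue $p(x,t) \to \pi(x)$ as $t \to \infty$: weak reversibility forces the restriction of $X$ to $\mathbb{S}$ to be irreducible, the existence of the explicit stationary measure $\pi$ gives positive recurrence, and aperiodicity in the continuous-time chain is automatic, so the standard ergodic theorem for continuous-time Markov chains yields pointwise (indeed total-variation) convergence to $\pi$.

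Combining the two passages gives the iterated-limit identity directly; alternatively, one can phrase the combination through the triangle inequality of the Remark, $|p_L^N(x,t) - \pi(x)| \le |p_L^N(x,t) - p(x,t)| + |p(x,t) - \pi(x)|$, sending $N \to \infty$ first to kill the first term by Theorem \ref{thm:main1} and then $t \to \infty$ to kill the second term by the ergodic argument above. The main obstacle I anticipate is the justification of $p(\cdot,t) \to \pi$ when the closed class $\mathbb{S}$ is infinite (as will typically be the case for the reduced ACR systems considered here); one must verify that the product-form $\pi$ is summable on $\mathbb{S}$ so that normalization makes sense, and that the chain does not fail to be positive recurrent on the unbounded components. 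For the networks arising in the paper, where the ACR controller reduces in the limit to inflow/outflow of the target species, this is straightforward, but in general it is where the bulk of the careful work lies.
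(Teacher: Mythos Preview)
Your proposal is correct and follows essentially the same approach as the paper: the paper states only that the corollary ``follows by Theorem \ref{thm:prod} and Theorem \ref{thm:main1}'' without further argument, and the triangle-inequality decomposition you describe is exactly the content of the Remark immediately following Theorem \ref{thm:main1}. Your additional care about ergodicity (convergence $p(\cdot,t)\to\pi$) is handled in the paper by a bare citation to the ergodic theorem \cite{NorrisMC97}, so you have in fact supplied more detail than the paper itself.
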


\subsection{Mean of the projected systems}\label{subsec:mean of reduced}
In this section we consider the mean of the target species in a reduced system. 
As described in Theorem \eqref{thm:prod}, when the reduced network $(\S_L,\C_L,\Re_L)$ has zero deficiency and is weakly reversible, the long-term behavior of the associated stochastic system follows a product form of Poissons (or constrained Poissons). Moreover the mean of the system is determined by the positive steady state of the deterministic counterpart as described in Theorem \eqref{thm:prod}. Hence, we control the mean of the target species approximately by using the positive steady state value of the reduced system. In fact, for a certain reduced network obtained from an ACR system, the steady state value of an ACR species is preserved in the reduced network.

Let $\pi$ be a product form of Poisson distribution (or constrained Poissons) such as $\pi(x)=M \prod_{i=1}^{d}\frac{c_i^{x_i}}{x_i!} \mathbbm{1}_{ \{x\in \mathbb{S}\} }$ defined on a state space $\mathbb{S}$ with some normalizing constant $M$ and some $c\in \R^d_{\ge 0}$. Let $\pi_1$ be the marginal distribution of the $x_1$-coordinate. If the support of $\pi_1$ is whole $\Z_{\ge 0}$ then $\pi_1$ is the Poisson distribution with rate $c_1$ because
\begin{align*}
\pi_1(x_1)=\sum_{x_2 \in Z_{\ge 0}}\cdots \sum_{x_d \in Z_{\ge 0}} \pi(x) = \frac{1}{\sum_{x_i=0}^\infty \frac{c_1^{x_1}}{x_1!}} \frac{c_1^{x_1}}{x_1 !}=e^{-c_1}\frac{c_1^{x_1}}{x_1 !}
\end{align*}
Thus the mean of the marginal distribution $\pi_1$ is equal to $c_1$. Note that the reactions in the basic ACR system \eqref{eq:basic acr} are always projected to $
0 \xrightleftharpoons[\theta]{\mu} X_1$, the state space of $X_1$ is always equal to $Z_{\ge 0}$ in the projected network system. Hence the mean of the target species in the controlled system is close to $c_1$, which is the positive steady state of the deterministic counter part.

Similarly to the stability analysis carried out in Section \ref{sec:stability}, we assume a given network system $(\S,\C,\Re,\K)$ is ACR and admits conservation relations. With the same notation we used in Section \ref{sec:stability}, let
\begin{equation}\label{eq:conv for mean}
u^i\cdot x(0)=u^i\cdot x(t) = M_i \quad \text{for all $t$}
\end{equation} for some positive constants $M_i$'s and for some vectors $u^i$'s in $\R^d_{> 0}$. 
Hence for each positive steady state $x^*$ of the system, if the $i$-th entry of $x^*$ corresponds to a non-ACR species, then $x^*_i$ depends on the total concentrations $M_i$. In this case, we denote 
$x^*(M)$ a positive steady state on the stoichiometry class $S_{x(0)}$, where $M=(M_1,M_2,\dots,M_k)$ such that $ u^i\cdot x(0)=M_i$. We also denote the entry of $x^*(M)$ by $x^*_i=x^*_i(M)$.
With these notations, the following theorem states that if a reduced network is obtained by freezing some non-ACR species of an ACR system, then the steady state value of an ACR species is preserved in the reduced system.

\begin{thm}\label{thm:steady state after freezing}
Let $(\S,\C,\Re,\K)$ be a mass-action ACR system with $x'(t)=f(x(t)).$
Suppose $(\S,\C,\Re,\K)$ admits the conservation laws \eqref{eq:conv for mean}. We split $\S=\{S_1,\dots,S_d,S_{d+1},\dots,S_{d+r}\}$ into two disjoint subsets $\S_L=\{S_1,\dots,S_d\}$ and $\S_H=\{S_{d+1},\dots,S_{d+r}\}$, where none of the species in $S_H$ is an ACR species. 
Let $\{x^*(M): M=(M_1,M_2,\dots,M_k)\in R^k_{>0}\}$ be the family of positive steady states of the system  $(\S,\C,\Re,\K)$. 
For the dynamics $x(t)$ of $(\S,\C,\Re,\K)$, suppose there exist a $\widetilde M=(\widetilde M_1,\dots,\widetilde M_k)$ such that
\begin{equation}\label{eq:condition of mean}
x_{d+i}(0) = x^*_{d+i}(\widetilde M_1,\dots,\widetilde M_k) \quad \text{for $i=1,2,\dots,r$}.
\end{equation}
Then the projected system $(\S_L,\C_L,\Re_L,\K_L)$ with the initial condition $q_L(x(0))$ has a positive steady state value at $\bar x^* = q_L(x^*(\widetilde M))$.
In particular, if $S_i\in S_L$ is an ACR species with the ACR value $x^*_i$, then $\bar x^*_i = x^*_i$.
\end{thm}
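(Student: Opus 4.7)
The plan is to verify directly that $\bar x^{\ast}=q_L(x^{\ast}(\widetilde M))$ satisfies the steady-state equation of the mass-action system associated with the projected network $(\S_L,\C_L,\Re_L,\K_L)$. The key algebraic observation is the factorization $x^{y}=q_L(x)^{q_L(y)}\,q_H(x)^{q_H(y)}$ together with the assumption $q_H(x(0))=q_H(x^{\ast}(\widetilde M))$, which means the ``frozen'' high-abundance values used to define the rate constants in \eqref{eq:rate constants for reduced} agree with the corresponding coordinates of the chosen steady state of the original system.

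First I would write the deterministic equation for coordinate $i\in\{1,\dots,d\}$ in the projected system. By the definition of $\K_L$,
\begin{equation*}
\frac{d}{dt}\bar x_i = \sum_{\bar y\to\bar y'\in\Re_L}(\bar y'-\bar y)_i\,\bar x^{\bar y}\!\!\!\sum_{\substack{y\to y'\in\Re\\ q_L(y)=\bar y,\;q_L(y')=\bar y'}}\!\!\!\kappa_{y\to y'}\,q_H(x(0))^{q_H(y)}.
\end{equation*}
Plugging in $\bar x=q_L(x^{\ast}(\widetilde M))$ and interchanging the two summations, I would collapse the inner sum back into one sum over $\Re$, restricted to those $y\to y'$ with $q_L(y')\neq q_L(y)$. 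For such $i\le d$ one has $(y'-y)_i=(q_L(y')-q_L(y))_i$, so the restriction is harmless: reactions with $q_L(y)=q_L(y')$ contribute $(y'-y)_i=0$ to the $\S_L$-block anyway. Using the factorization above and $q_H(x^{\ast}(\widetilde M))=q_H(x(0))$, the right-hand side becomes
\begin{equation*}
\sum_{y\to y'\in\Re}\kappa_{y\to y'}(x^{\ast}(\widetilde M))^{y}(y'-y)_i = f_i(x^{\ast}(\widetilde M)) = 0,
\end{equation*}
because $x^{\ast}(\widetilde M)$ is a positive steady state of the original system. Positivity of $\bar x^{\ast}$ follows from positivity of $x^{\ast}(\widetilde M)$, since $q_L$ simply selects coordinates.

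The final assertion is then immediate from the definition of an ACR species. If $S_i\in\S_L$ is ACR with value $x^{\ast}_i$, then by ACR every positive steady state of $(\S,\C,\Re,\K)$, including $x^{\ast}(\widetilde M)$, has $i$-th coordinate equal to $x^{\ast}_i$; therefore $\bar x^{\ast}_i=q_L(x^{\ast}(\widetilde M))_i=x^{\ast}_i$. I do not expect a serious obstacle: the whole argument is bookkeeping around the projection maps $q_L,q_H$ and the definition of $\K_L$. The one place to be careful is handling reactions $y\to y'$ with $q_L(y)=q_L(y')$, which are omitted from $\Re_L$ by construction in \eqref{eq:project network}; as noted, they drop out of the $\S_L$-equations automatically because their stoichiometric contribution in those coordinates vanishes, so their exclusion causes no mismatch between the original and projected steady-state equations.
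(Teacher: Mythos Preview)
Your argument is correct and follows essentially the same route as the paper: both hinge on the identity $\bar f_i(x_1,\dots,x_d)=f_i(x_1,\dots,x_d,x_{d+1}(0),\dots,x_{d+r}(0))$ for $i\le d$, then substitute $x^{\ast}(\widetilde M)$ using the hypothesis $q_H(x(0))=q_H(x^{\ast}(\widetilde M))$. Your version is simply more explicit about the mass-action factorization and about why reactions with $q_L(y)=q_L(y')$ can be dropped, which the paper leaves implicit.
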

\begin{proof}
Let $\bar x'(t)=\bar f (\bar x(t))$ be the deterministic system of $(\S_L,\C_L,\Re_L,\K_L)$.
Note that $(\S_L,\C_L,\Re_L)$ is obtained by freezing the species in $\S_H$ from  $(\S,\C,\Re)$. Recall by the definition of $\K_L$ \eqref{eq:scaled rates} that each element $\K_L$ is a summation of some rate constants in $\K$ multiplied by the initial values of species in $\S_H$. Hence for any positive $x_i$'s, we have
$$\bar f_i(x_1,\dots,x_d)=f_i(x_1,\dots,x_d,x_{d+1}(0),\dots,x_{d+r}(0)),\quad  i=1,2,\dots,d.$$ 
For the $\widetilde M$, we have 
\begin{align*}
0=f(x^*)&=f(x^*_1(\widetilde M),\dots,x^*_d,x^*_{d+1}(\widetilde M),\dots,x^*_{d+r}(\widetilde M))\\
&=f(x^*_1(\widetilde M),\dots,x^*_d(\widetilde M),x_{d+1}(0),\dots,x_{d+r}(0))\\
&=\bar f(x^*_1(\widetilde M),\dots,x^*_d(\widetilde M)).
\end{align*}
Hence $\bar x^*=(x^*_1(\widetilde M),\dots,x^*_d(\widetilde M))=q_L(x^*(\widetilde M))$ is a positive steady state of $(\S_L,\C_L,\Re_L,\K_L)$. 

Suppose $S_i\in S_L$ is an ACR species with the ACR value $x^*_i$. Since the ACR value is independent of any conservative quantities $M_1, M_2, \dots,M_k$ by definition, we have $x^*_i(\widetilde M)=x^*_i$. Consequently, $\bar x^*_i=x^*_i$.
\end{proof}

\begin{example} \textcolor{black}{Consider a network $(\S,\C,\Re,\K)$ }
\tikzset{every node/.style={auto}}
 \tikzset{every state/.style={circle, rounded corners, minimum size=1pt, draw=none, font=\normalsize}}
 \tikzset{bend angle=8}
 \begin{equation*}
   \begin{tikzpicture}[baseline={(current bounding box.center)}, scale=0.8, rounded corners]
   %states
   \node[text=black ] (1) at (0,4)  {\textbf{$C+A$}};
    \node[text=black ] (2) at (3,4)  {\textbf{$D+A$}};

   \node[ text=black] (4) at (-3,6)  {\textbf{$A$}};
   \node[ text=black] (5) at (0,6)  {\textbf{$B$}};

   \node[ text=black] (6) at (2,6)  {\textbf{$Z+A$}};
   \node[ text=black] (7) at (4.5,6)  {\textbf{$2Z$}};
   \node[ text=black] (8) at (6.5,6)  {\textbf{$Z$}};
     \node[ text=black] (9) at (8.5,6)  {\textbf{$A$}};
   
   %edges
   \path[shorten >=2pt,->,shorten <=2pt]
   (1) edge[bend left, line width=1pt] node {$1$} (2)

    (2) edge[bend left, line width=1pt] node {$1$} (1)
    %(4) edge[line width=1.5pt] node            {$\kappa_5$} (5)
    
    (4) edge[bend left, line width=1pt] node {$1$} (5)
    (5) edge[bend left, line width=1pt] node {$1$} (4)
    (6) edge[line width=1pt] node {$\theta$} (7)
    (8) edge[ line width=1pt] node {$\mu$} (9)
       ;
    
  \end{tikzpicture}
 \end{equation*} 
\textcolor{black}{ There are two conservation relations $a(0)+b(0)+z(0)=a(t)+b(t)+z(t)=M_1$ and $c(0)+d(0)=c(t)+d(t)=M_2$. The mass action deterministic dynamics associated with $(\S,\C,\Re,\K)$  is 
 \begin{align*}
& a'(t)=-a(t)+b(t)-z(t)(\theta a(t)-\mu),\\
& b'(t)=a(t)-b(t),\\
& z'(t)=z(t)(\theta a(t)-\mu),\\
& c'(t)=-c(t)a(t)+d(t)a(t),\\
& d'(t)=-d(t)a(t)+c(t)a(t),\\
 \end{align*}
Hence the positive steady state $x^*=(a^*,b^*,c^*,d^*,z^*)=(\frac{\mu}{\theta},\frac{\mu}{\theta},\frac{M_2}{2},\frac{M_2}{2},M_1-2\frac{\mu}{\theta})$. This implies that species $A$ and $B$ are ACR species.}

\textcolor{black}{We split $\S$ into $\S_L=\{A,B,C\}$ and $\S_H=\{D,Z\}$. For a given initial condition  $x(0)=(a(0),b(0),c(0),d(0),z(0))$, we have the following reduced network $(\S_L,\C_L,\Re_L,\K_L)$}

\begin{equation*}
  \begin{tikzpicture}[baseline={(current bounding box.center)}, scale=0.8, rounded corners]
   %states

     \node[ text=black] (3) at (-4,6)  {\textbf{$0$}};
   \node[ text=black] (4) at (0,6)  {\textbf{$A$}};
   \node[ text=black] (5) at (4,6)  {\textbf{$B$}};
  \node[ text=black] (1) at (0,8)  {\textbf{$C+A$}};

   %edges
   \path[shorten >=2pt,->,shorten <=2pt]

    (3) edge[bend left, line width=1pt] node {$\mu z(0)$} (4)
    (4) edge[bend left, line width=1pt] node {$\theta z(0)$} (3)
    (4) edge[bend left, line width=1pt] node {$1$} (5)
    (5) edge[bend left, line width=1pt] node {$1$} (4)
     (4) edge[bend left, line width=1pt] node {$d(0)$} (1)
     (1) edge[bend left, line width=1pt] node {$1$} (4)
       ;
    
  \end{tikzpicture}
 \end{equation*}
\textcolor{black}{ With $M_1=2\frac{\mu}{\theta}+z(0)$ and $M_2=2d(0)$, the condition \eqref{eq:condition of mean} holds for species $E$ and $Z$. Hence by Theorem \ref{thm:steady state after freezing}, the positive steady state values of $A$ and $B$ in $(\S_L,\C_L,\Re_L,\K_L)$ are same as the positive steady state values in $(\S,\C,\Re,\K)$, which are $\frac{\mu}{\theta}$ for both species. }\hfill $\triangle$
\end{example}

\subsection{Approximation of controlled stochastic network with hybrid systems}
In this section, we show that the basic ACR controller \eqref{eq:basic acr} can also be used to control a stochastic system under the classical scaling regime. To show that the target species in the scaled stochastic system follows approximately a Poisson distribution, we use a hybrid type system. This framework is introduced in \cite{anderson2017finite}.

We define $\gamma_k=1$ if $q_H(y_k)> 0$ for a reaction $y_k \to y'_k \in \Re$, otherwise $\gamma_k=0$. As shown in Section \ref{subsec:model reduction}, for a given set of intensity functions $\K$, we define new scaled reaction intensities as
\begin{equation}\label{eq:classic scale intensities}
\bar \K^N= \left \{\bar \lambda^N_{k} = \frac{\kappa_{y\to y'}}{N^{\Vert q_H(y) \Vert_1-\gamma_k}}\lambda_k : \lambda_k \in \K \right \}.
\end{equation}

Suppose $\S=\S_L\cup \S_H$ as \eqref{eq:S setting} for a reaction network $(\S,\C,\Re)$. Suppose further that the projected network $(\S_L,\C_L,\Re_L)$ has deficiency of zero and is weakly reversible. According to \cite[Theorem 3.8 and Corollary 4.4]{anderson2017finite}, the system dynamics for $(\S,\C,\Re,\tilde \K^N)$ can be approximately studied with a hybrid system. The copy numbers of the species in $\S_L$ are modeled with a stochastic process, and the distribution of $q_L(X^N)$ is approximated with a product form of Poissons at a finite time $t$. On the other hand, the concentration of the species in $\S_H$ follows the deterministic dynamics.
These two processes are coupled as the rate constants of the stochastic part are determined by the deterministic part, and the kinetics of the deterministic part also depends on the stationary mean of the stochastic part. We denote by $(\S_H,\C_H,\Re_H,\bar \K_H)$ the deterministic part of the hybrid system. More precise definition of $(\S_H,\C_H,\Re_H,\bar \K_H)$ can be found in \cite{anderson2017finite}.

In the scaled stochastic system modeled with $\bar \K^N$ defined at \eqref{eq:classic scale intensities}, let $m^N(t)$ be the mean copy number of the target species $X_1$ at time $t$ such that $\dlim_{N \to \infty}m^N(t)=m(t)$. That is, $m(t)$ approximates the mean of $X_1$ in the scaled stochastic system.
 The following Lemma provides the conditions guaranteeing that $m(t)$ converges in time to the desired value in a controlled system with the basic ACR controller. As we assumed for the basic ACR system in Section \ref{sec:acr controllers}, we suppose that $X_1$ is in a given reaction network and $Z$ is a newly introduced control species.

\begin{lem}\label{lem:when a hybrid has a steady state}
For a given reaction network, let $(\S,\C, \Re)$ be the union of the given network and the basic ACR system \eqref{eq:basic acr}. Let $X^N$ be the stochastic process associated with $( \S, \C, \Re, \bar \K^N)$, where $\bar \K^N$ is a set of scaled reaction intensities defined as \eqref{eq:classic scale intensities}. For an initial state of $X^N(0)$, suppose $X_1 \in \S_L$ and  $Z \in \S_H$. Suppose further that the conditions of Theorem 3.8 in \cite{anderson2017finite} hold. If the concentration of $Z$ in the deterministic part $(\S_H,\C_H,\Re_H,\bar \K_H)$ converges to a positive steady state, as $t\to \infty$, then $\dlim_{t\to \infty}m(t)=\frac{\mu}{\theta}$.
\end{lem}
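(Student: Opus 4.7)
The strategy is to extract a scalar ODE for $z(t)$ from the deterministic component of the hybrid system, then invoke a Barbalat-type argument using the hypothesis $z(t)\to z^{*}>0$. First I would identify the relevant ODE. Since $Z$ is a newly introduced control species, it participates only in the two ACR reactions $X_1+Z\xrightarrow{\theta} 2Z$ and $Z\xrightarrow{\mu} X_1$. In the hybrid limit established by Theorem 3.8 of \cite{anderson2017finite}, the deterministic ODE for a species in $\S_H$ aggregates the stoichiometric contributions of each reaction, replacing any stochastic species $X_j\in\S_L$ by its instantaneous Poisson mean. Applied to $Z$, this yields
$$\frac{dz(t)}{dt}=\theta\,z(t)\,m(t)-\mu\,z(t)=z(t)\bigl[\theta m(t)-\mu\bigr].$$

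Next, I would integrate. Since $z(t)\to z^{*}>0$, there is $T$ with $z(t)\ge z^{*}/2$ on $[T,\infty)$. Dividing by $z(t)$ and integrating from $T$ to $t$ gives
$$\log z(t)-\log z(T)=\int_T^t\bigl[\theta m(s)-\mu\bigr]\,ds.$$
The left-hand side converges as $t\to\infty$, so the improper integral $\int_T^{\infty}\bigl[\theta m(s)-\mu\bigr]\,ds$ exists and is finite.

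Now I would invoke Barbalat's lemma: if $\int_T^{\infty} f(s)\,ds$ converges and $f$ is uniformly continuous, then $f(t)\to 0$. Here $f(t)=\theta m(t)-\mu$. The mean $m(t)$ is a smooth function of the deterministic state $h(t)$ via the flux balance that defines the stationary Poisson distribution of the projected network (well-posed because the projection is zero-deficiency and weakly reversible, a consequence of the conditions of Theorem 3.8). Provided $h(t)$ remains bounded and evolves by a locally Lipschitz ODE, the composition $t\mapsto \theta m(t)-\mu$ has bounded derivative, hence is uniformly continuous, and Barbalat's lemma gives $m(t)\to \mu/\theta$.

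The main obstacle is the uniform-continuity step, because the hypothesis controls only the single component $z(t)$ and says nothing a priori about the remaining species of $\S_H$; boundedness of the full deterministic trajectory $h(t)$ has to be extracted from conservation laws or the coercive structure of the particular network. A clean shortcut, which I expect to use when available, is to strengthen the hypothesis so that the entire deterministic subsystem converges to a steady state $h^{*}$: then continuity of the flux-balance formula yields $m(t)\to m(h^{*})$, and setting $\frac{dz}{dt}\bigl|_{h^{*}}=0$ in the ODE above forces $z^{*}[\theta m(h^{*})-\mu]=0$, so $m(h^{*})=\mu/\theta$ since $z^{*}>0$.
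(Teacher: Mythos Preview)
Your core approach coincides with the paper's: both derive the scalar ODE
\[
\frac{d}{dt}z(t)=z(t)\bigl[\theta m(t)-\mu\bigr]
\]
from the fact that $Z$ is touched only by the two ACR reactions, with the rate of $Z+X_1\to 2Z$ in the hybrid limit replaced by $\theta z(t)m(t)$ via Theorem~3.8 of \cite{anderson2017finite}. From here the paper's proof is a single line: it simply asserts that $\displaystyle\lim_{t\to\infty}z(t)=z^*\in(0,\infty)$ implies $\displaystyle\lim_{t\to\infty}\bigl(\theta m(t)-\mu\bigr)=0$, with no further justification.

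Your proposal is therefore \emph{more} rigorous than the paper's, not less. You correctly observe that convergence of $\log z(t)$ only yields convergence of the integral $\int^{\infty}\bigl(\theta m(s)-\mu\bigr)\,ds$, which by itself does not force the integrand to zero; Barbalat's lemma closes that gap but requires uniform continuity of $m$, and you rightly flag that this depends on boundedness of the remaining $\S_H$ variables, a point the stated hypothesis does not supply. The paper's proof simply elides this issue. In the paper's sole application of the lemma (the dimer--catalyzer model in Section~\ref{subsec:dimer-catal}), the entire deterministic subsystem is shown to converge to an equilibrium, so in practice your ``shortcut'' --- pass to the limiting steady state $h^*$, set $z'=0$ there, and read off $m(h^*)=\mu/\theta$ from $z^*>0$ --- is exactly what is being used implicitly. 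As a proof of the lemma under the hypothesis as literally stated, your version is the more honest one.
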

\begin{proof}
Since $Z$ is newly introduced species with the basic ACR system, $Z$ is regulated with only two reactions $\emptyset \leftarrow Z \to 2A$ in  $(\S_H,\C_H,\Re_H,\bar \K_H)$.
Let $z(t)$ be the concentration of species $Z$ at time $t$. According to Theorem 3.8 in \cite{anderson2017finite}, the rate of reaction $\emptyset \leftarrow Z$ is $\mu z(t)$, and the rate of reaction $Z\to 2A$ is $\theta m(t)z(t)$. Then $z(t)$ solves a differential equation,
\begin{align*}
\frac{d}{dt}z(t)=z(t)\left ( \theta m(t)-\mu \right),
\end{align*}

Hence if $\dlim_{t\to \infty}z(t) =z^*$ for some $z^*\in (0,\infty)$, then $\dlim_{t \to \infty}\left (\theta m(t)-\mu\right)=0$.

\end{proof}
In Section \ref{subsec:dimer-catal}, we use this lemma to show that the mean of the species $X$ in the hybrid system in Figure \eqref{fig:faststochastic}d converges to $\dfrac{\mu}{\theta}$.

\subsection{Foster-Lyapunov criterion}\label{sec:lyapunov}
By equation \eqref{eq:conv}, it is important to show the term $|p(A,t)-\pi(A)|$ is small for the stationary distribution approximation \eqref{eq:poisson approx} with a Poisson distribution.
 Basically the term $|p(A,t)-\pi(A)|$ tends to zero as $t \to \infty$ by the ergodic theorem \cite{NorrisMC97}. In this section we introduce one of the most well-known theoretical frameworks, the so-called Foster-Lyapunov criterion \cite{MT-LyaFosterIII} for the convergence of $|p^N(A,t)-\pi(A)|$ in $t$. The following theorem is a version of Theorem 6.1 in \cite{MT-LyaFosterIII}.

\begin{thm}[Foster-Lyapunov criterion for exponential ergodicity]\label{thm:exp_ergo}
Let $X$ be a continuous-time Markov chain on a countable state space $\mathbb{S}$ with the infinitesimal generator $\mathcal A$ \eqref{gen5}. Suppose there exists a positive function $V$ on $\mathbb{S}$ satisfying the following conditions. 
\begin{enumerate}
\item $V(x) \to \infty$ as $|x|\to \infty$, and
\item There are positive constants $a$ and $b$ such that
\begin{equation}\label{eq:exp_ergo}
\mathcal{A}V(x) \le -aV(x)+b \quad \text{for all} \ \ x \in \mathbb{S}.
\end{equation}
\end{enumerate}
Then $X$ admits a unique stationary distribution $\pi$. Furthermore, there exists positive constants $\eta$ and $C$ such that for any measurable set $A$ and any state $x$,
\begin{align*}
| P(X(t)\in A| X(0)=x) - \pi(A) | \le C(V(x)+1)e^{-\eta t}.
\end{align*}
 	
 	\end{thm}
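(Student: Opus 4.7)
The plan is to adapt the classical Meyn--Tweedie drift approach. The starting point is Dynkin's formula applied to $V$: for any $x \in \mathbb{S}$,
\begin{align*}
E_x[V(X(t))] - V(x) = E_x\!\left[\int_0^t \mathcal{A}V(X(s))\,ds\right].
\end{align*}
Substituting the drift hypothesis $\mathcal{A}V \le -aV + b$ and applying Gronwall's inequality yields the key geometric drift estimate
\begin{align*}
E_x[V(X(t))] \le V(x)\,e^{-at} + \tfrac{b}{a}.
\end{align*}
This already does two important jobs: it forces $V(X(t))$ to have uniformly bounded expectation in the long run, and, since $V\to\infty$ at infinity (condition 1), it shows that any sublevel set $C := \{V \le K\}$ with $K > b/a$ is a \emph{finite} subset of $\mathbb{S}$ that is visited at a controlled rate.

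Next I would extract an exponential-moment bound for the hitting time $\tau_C$. Using $V$ as a Lyapunov function, a standard argument with the optional sampling theorem gives $E_x[e^{\varepsilon \tau_C}] \le c_1(V(x)+1)$ for some $\varepsilon>0$ and $c_1<\infty$. Assuming irreducibility of the chain (implicit in the countable state space setting), each state in $C$ is an atom, so $C$ is a petite set: there exist $t_0>0$, $\beta\in(0,1)$, and a probability measure $\nu$ supported on $C$ such that $P^{t_0}(x,\cdot)\ge \beta\,\nu(\cdot)$ for every $x\in C$. Existence and uniqueness of the stationary distribution $\pi$ then follows from positive recurrence of $C$.

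For the exponential convergence bound, I would run a coupling of two copies of the chain, one started at $x$ and the other from $\pi$, attempting coalescence every time both copies lie in $C$ simultaneously. The minorization yields a success probability of at least $\beta$ per attempt, while the geometric drift estimate forces the successive returns to $C$ to have finite exponential moments. Combining these via a Kendall-type renewal bound gives an exponential tail on the coupling time $T$, and hence
\begin{align*}
\|P^t(x,\cdot)-\pi\|_{TV}\le P_x(T>t)\le C(V(x)+1)e^{-\eta t}
\end{align*}
for suitable $\eta,C>0$ depending only on $a,b,K,\beta,t_0$.

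The main obstacle is the final packaging step: coupling the two chains so that the prefactor depends linearly on $V(x)+1$, rather than on something coarser such as $\sup_{s\le t} E_x V(X(s))$ or $V(x)^2$. This is where Meyn and Tweedie use the $V$-norm (or $f$-norm) framework together with a split-chain construction to control the regeneration times, and it is essentially the only place where genuine technical work is required; the rest is a clean application of Dynkin's formula, Gronwall, and renewal theory. I would either reproduce that split-chain argument in the countable setting (where it simplifies, since points are already atoms) or, more economically, cite Theorem~6.1 of \cite{MT-LyaFosterIII} and verify that conditions 1 and 2 imply the hypotheses there.
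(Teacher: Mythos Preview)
The paper does not prove this theorem at all: it is stated as ``a version of Theorem 6.1 in \cite{MT-LyaFosterIII}'' and simply cited. Your sketch of the Meyn--Tweedie drift/minorization/coupling argument is essentially correct and is exactly the machinery behind the cited result, so in that sense you and the paper agree---your final option of citing Theorem~6.1 of \cite{MT-LyaFosterIII} is literally what the paper does.

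One small caveat on your sketch: you assume irreducibility of the chain to guarantee that the finite set $C$ is petite and that the stationary distribution is unique. The theorem as stated in the paper does not explicitly list irreducibility as a hypothesis, and in the generality written (an arbitrary continuous-time Markov chain on a countable state space) the uniqueness claim can fail without it. This is a gap in the \emph{statement} rather than in your argument, and in the paper's intended application (stochastic reaction networks restricted to a closed communicating class) irreducibility is implicit; but if you were writing out a self-contained proof you would need to add it as an assumption.
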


For a finite time $t$, if the projected network system in Theorem \ref{thm:main1} satisfies the conditions in Theorem \ref{thm:exp_ergo}, then the term $|p(A,t)-\pi(A)|$ in the right hand side of \eqref{eq:conv2} can be small. Therefore $|p^N(A,t)-\pi(A)|$ in \eqref{eq:conv2} can eventually be small enough with sufficiently large $N$.

\section{Application of the stochastic dynamics}
 
\subsection{A receptor-ligand signaling model}
\label{subsec:RLsupplementary}

In this section we provide the initial reaction propensities in the receptor-ligand signaling model described in Figure \ref{fig:RL schematic}a. Note that we model the reaction propensities with stochastic mass-action kinetics \eqref{mass}. Letting $N=10^3$, we set the initial values $L(0)=1.5N, P(0)=100, R(0)=2, Z(0)=N$, and we set zero initial values for the rest of species. The parameters are $\kappa^N_1=1.24/(1.5N), \kappa^N_2=1.37, \kappa^N_3=1.41, \kappa^N_4=1.79,  \kappa^N_5=1.02, \kappa^N_6=1.36, \kappa^N_7=1.97, \kappa^N_8=1.11, \kappa^N_{9}=1.55, \kappa^N_{10}=1.01,\kappa^N_{11}=1.34, \kappa^N_{12}=0.5, \theta=1/N$ and $\mu=5/N$. Then for the initial condition $X(0)=(L(0),R_0(0),R(0),D(0),D_1(0),D_2(0),D_3(0),Z(0))$, we have the following propensities of reaction $L+R_0\to R, R\to L+R_0$ and $2R\to D$.
\begin{align*}
&\lambda_{L+R_0\to R}(X(0))=\kappa_1L(0)R_0(0)=2.46, \quad \lambda_{R\to L+R_0}(X(0))=\kappa_2R(0)=2.74\\
&\lambda_{2R\to D}(X(0))=\kappa_3R(0)(R(0)-1)=2.82.
\end{align*}
The propensities of all remaining reactions are zero since the initial counts of $D,D_1,D_2$ and $D_3$ are zero. As we mentioned in Section \ref{subsec:stochastic control} of the main text, the initial propensities of each reaction is relatively small to the initial counts of species $L$ and $Z$. Hence it needs a longer time to substantially deviate the counts of $L$ and $Z$. Over a short term interval, each of $L$ and $Z$ in the associated stochastic process behaves as a constant function (Figure \ref{fig:RL schematic}c). Hence the reduced model obtained by freezing $L$ and $Z$ at their initial values approximates the original controlled system. 

By using Theorem \ref{thm:main1}, we show this approximation more precisely. We let $(\S,\C,\Re)$ be the controlled system in Figure~\ref{fig:RL schematic}~a. We choose the initial values and the set $\K^N$ of parameters as introduced above with the scaling parameter $N$. Note that $\K^N$ satisfies \eqref{eq:scaled rates}. We split $\S$ into $\S_L=\{R_0, R, D, D_1,D_2,D_3\}$ and $\S_H=\{L,Z\}$. Then the reduced network $(\S_L,\C_L,\Re_L,\K_L)$ is the network in Figure~\ref{fig:RL schematic}~b. The deficiency of the reduced network is $0$ as the number of complexes is $8$, there are two linkage classes, and the rank of the stoichiometry matrix is $6$. The reduced network is also weakly reversible because each linkage class is strongly connected. Hence Corollary \ref{cor:poissonapprox} implies the distribution $p_L^N$ of $S_L$ species at $t=150$ in $(\S,\C,\Re,\K)$ is estimated by a product form of Poissons as described at \eqref{eq:poisson approx}.

In order to approximate the mean of $R_0$ in the controlled system $(\S,\C,\Re)$, we show that the positive steady state value of $R_0$ in $(\S_L,\C_L,\Re_L,\K_L)$ is $\frac{\mu}{\theta}=5$.
Theorem \ref{thm:steady state after freezing} can be used to show that the mean, but using deficiency zero condition of the reduced network provides a much simpler way. Since $(\S_L,\C_L,\Re_L,\K_L)$ has zero deficiency and is weakly reversible, the associated mass action deterministic dynamics is \textit{complex balanced} \cite{Feinberg72, Horn72}. This means that for each complex, all `in-flows` and `out-flows` are balanced at each positive steady state. Hence for the zero complex in $(\S_L,\C_L,\Re_L,\K_L)$, the in-flow is $\theta r_0^*$ and the out-flow is $\mu$ for the positive steady value $r_0^*$ of $R_0$. Therefore they are balanced at $r^*_0=\frac{\mu}{\theta}$.

\tikzset{every node/.style={auto}}
 \tikzset{every state/.style={circle, rounded corners, minimum size=1pt, draw=none, font=\normalsize}}
 \tikzset{bend angle=8}
 We now investigate the accuracy of this approximation. Let $p_L^N(\cdot,t)$ and $p(\cdot,t)$  denote the distribution of species $S_L$ in the controlled receptor-ligand system and the distribution of the reduced system, respectively. We further let $\pi$ be the product form of Poissons stationary distribution of the reduced system.
As shown in \eqref{eq:conv2}, for a small error between $p_L^N(\cdot,t_0)$ and  $\pi$ with a fixed time $t_0=150$, we need a fast convergence for $p(\cdot,t)$ to $\pi$.

With a slight modification on the reduced network, we show that how to use the Foster-Lyapunov criterion in Theorem \ref{thm:exp_ergo} to show the convergence rate of $p(\cdot,t)$ to $\pi$ in time $t$. In order to construct a Lyapunov function explicitly, we add a degradation of $D$ to the reduced model in Figure \ref{fig:RL schematic}b. Hence let $(\S_L,\C_L,\Re_L,\K_L)$ be a system described with the following reaction network.
\begin{equation}\label{eq:reducedforexp}
  \begin{tikzpicture}[baseline={(current bounding box.center)}, scale=0.8, rounded corners]
   %states
   \node[text=black ] (9) at (2,6)  {\textbf{$0$}};
    \node[text=black ] (0) at (5,6)  {\textbf{$R_0$}};
     \node[text=black] (1) at (8,6)  {\textbf{$R$}};
   
   \node[ text=black] (3) at (-3,6)  {\textbf{$2R$}};
   \node[ text=black] (4) at (0,6)  {\textbf{$D$}};
   
   %\node[ text=black] (5) at (6,6)  {\textbf{$2L+2R_0$}};
   \node[ text=black] (6) at (0,4)  {\textbf{$D_1$}};
   \node[ text=black] (7) at (0,2)  {\textbf{$D_2$}};
   \node[ text=black] (8) at (0,0)  {\textbf{$D_3$}};
   
   %edges
   \path[shorten >=2pt,->,shorten <=2pt]
   (0) edge[bend left, line width=1pt] node {$\kappa_1$} (1)
    (1) edge[bend left, line width=1pt] node {$\kappa_2$} (0)
    (3) edge[bend left, line width=1pt] node {$\kappa_3$} (4)
    (4) edge[bend left, line width=1pt] node {$\kappa_4$} (3)
    %(4) edge[line width=1.5pt] node            {$\kappa_5$} (5)
    
    (4) edge[bend left, line width=1pt] node {$\kappa_5$} (6)
    (6) edge[bend left, line width=1pt] node {$\kappa_6$} (4)
    (6) edge[bend left, line width=1pt] node {$\kappa_7$} (7)
    (7) edge[bend left, line width=1pt] node {$\kappa_8$} (6)
    (7) edge[bend left, line width=1pt] node {$\kappa_{9}$} (8)
    (8) edge[bend left, line width=1pt] node {$\kappa_{10}$} (7)
     (9) edge[bend left, line width=1pt] node {$\mu$} (0)
    (0) edge[bend left, line width=1pt] node {$\theta$} (9)
    (4) edge[line width=1pt] node {$\kappa_{11}$} (9)
       ;
    
  \end{tikzpicture}
 \end{equation}  

Let $x=(x_1,x_2,x_3,x_4,x_5,x_6)$ be a vector each of whose entries represents the copy numbers of $R_0, R, D, D_1, D_2$ and $D_3$, respectively. We use a linear Lyapunov function $V(x)=\sum_{i=1}^6v_ix_i$ with some positive vector $v=(v_1,v_2,\dots,v_d)$. The work in \cite{scalable2014} exploited details about linear Lyapunov functions for stochastic reaction networks. By the definition of the generator $\mathcal{A}$ \eqref{gen5}, we have
\begin{equation}\label{eq:av}
\begin{aligned}
\mathcal{A}V(x)=&\kappa_{10}(v_5-v_6)x_6+(\kappa_9(v_6-v_5)+\kappa_8(v_4-v_5))x_5+
(\kappa_7(v_5-v_4)+\kappa_6(v_3-v_4))x_4\\
&+(\kappa_5(v_4-v_3)+\kappa_4(2v_2-v_3)-\kappa_{11}v_3)x_3++\kappa_2(v_1-v_2)x_2\\
&+(\kappa_1(v_2-v_1)-\theta v_1)x_1
+(\kappa_3(v_3-2v_2))x_2^2+\mu.
\end{aligned}
\end{equation}
Let $h_i=v_{i+1}-v_i$ for $i=1,3,4,5$ and $h_2=v_3-2v_2$. We will choose $h_i$'s with which all the coefficients on the right hand side of \eqref{eq:av} are strictly negative numbers. Hence, we choose $h_i$'s such that 
\begin{equation}\label{eq:ineq}
\begin{aligned}
& h_1<0, \quad  h_2>0,  \quad \frac{\kappa_5}{\kappa_{11}}h_3-\frac{\kappa_4}{\kappa_{11}}h_2<v_3=2v_1+2h_1+h_2 \\
& \frac{\kappa_7}{\kappa_6}h_4< h_3,\quad \frac{\kappa_9}{\kappa_8}h_5< h_4,  \quad \text{and} \quad 
 h_5>0.
\end{aligned}
\end{equation}
With a sufficiently large $v_1$, we can find $h_i$'s satisfying \eqref{eq:ineq}. Then we have
\begin{align*}
\mathcal{A}V(x)=-c_1x_1-c'_2x^2+c_2x_2-c_3x_3-c_4x_4-c_5x_4-c_6x_6,
\end{align*}
for some $c_i >0$ for $i=1,2,\dots,6$ and $c'_2>0$. Hence \eqref{eq:exp_ergo} holds, and for the distribution $p(\cdot,t)$ of \eqref{eq:reducedforexp} and its stationary distribution $\pi$, we have the exponential decay of $|p(A,t)-\pi(A)|$ for any $A \subset Z^6_{\ge 0}$, as $t\to \infty$.

\subsection{A Dimer-Catalyzer Model}\label{subsec:dimer-catal}
Recall that we used the hybrid system shown in Figure \ref{fig:faststochastic}d to approximate the distribution of $X$ in the controlled dimer-catalyzer system that we introduced in  Figure \ref{fig:faststochastic}a. In this section, by using Lemma \ref{lem:when a hybrid has a steady state} we prove that $m(t)$, the mean of $X$ at time $t$ in the hybrid system, converges to $\dfrac{\mu}{\theta}$ under a mild condition, as $t\to \infty$.  

Let $x_1(t), c(t), c_p(t), c_{pp}(t), x^*(t)$ and $z(t)$ be the solution to the deterministic part of the hybrid system shown in Figure \ref{fig:faststochastic}d. Let further $m(t)=\dfrac{\kappa_1 x^*(t)+\mu z(t)}{\kappa_2 c_{pp}(t)+\theta z(t)}$ and $\ell(t)=\dfrac{\kappa_4 c(t)}{\kappa_3 x_1(t)}$ be the mean of $X$ and $X_2$ at time $t$, respectively. Then the deterministic system is governed by the following system of ordinary differential equations,
\begin{equation}\label{eq:ode of the hybrid}
\begin{split}
&\frac{d}{dt}x_1(t)=\kappa_4 c(t)-\kappa_3 x_1(t) \ell(t)=0,\\
&\frac{d}{dt}c(t)=\kappa_3 x_1(t) \ell(t)+\kappa_6 c_p(t)-(\kappa_4+\kappa_5)c(t)=\kappa_6 c_p(t)-\kappa_5 c(t),\\
&\frac{d}{dt}c_p(t)=\kappa_5 c(t)+\kappa_8 c_{pp}(t)-(\kappa_6+\kappa_7)c_p(t),\\
&\frac{d}{dt}c_{pp}(t)=\kappa_7 c_p(t)-\kappa_8 c_{pp}(t),\\
&\frac{d}{dt}x^*(t)=\kappa_2 c_{pp}(t)m(t)-\kappa_1 x^*(t), \quad \text{and}\\
&\frac{d}{dt}z(t)=z(t)(\theta  m(t)-\mu).
\end{split}
\end{equation}

Let $x_1(0), c(0), c_p(0), c_{pp}(0), x^*(0)$ and $z(0)$ be the initial values of the system \eqref{eq:ode of the hybrid}. Note that $\dfrac{d}{dt}x^*(t)+\dfrac{d}{dt}z(t)=0$ and $\dfrac{d}{dt}c(t)+\frac{d}{dt}c_p(t)+\dfrac{d}{dt}c_{pp}(t)=0$ for all $t$. We let $M=x^*(0)+z(0)=x_1(t)+z(t)$ and $L=c(0)+c_{p}(t)=c(t)+c_p(t)+c_{pp}(t)$ be the conserved quantities.

By using the second, third and fourth equations in \eqref{eq:ode of the hybrid}, there is a single positive steady state, $$(\bar{c}, \bar{c}_p, \bar{c}_{pp})=\left (\dfrac{\kappa_6 \kappa_8 L}{\kappa_6\kappa_8+\kappa_5 \kappa_8+\kappa_5\kappa_7}, \dfrac{\kappa_5 \kappa_6 \kappa_8 L}{\kappa^2_6\kappa_8+\kappa_5\kappa_6 \kappa_8+\kappa_5\kappa_6\kappa_7}, \dfrac{\kappa_5\kappa_6 \kappa_7\kappa_8 L}{\kappa^2_6\kappa^2_8+\kappa_5\kappa_6 \kappa^2_8+\kappa_5\kappa_6\kappa_7\kappa_8} \right ),$$ for $x_1(t),c(t),c_p(t)$ and $c_{pp}(t)$. In order to study the stability of this positive steady state, we consider the following linear system for $c, c_p$ and $c_{pp}$, 
\begin{align*}
\frac{d}{dt}\begin{pmatrix}
c\\ c_p\\c_{pp}
\end{pmatrix}
=  \begin{pmatrix}
-\kappa_5& \kappa_6 & 0\\
\kappa_5& -\kappa_6-\kappa_7 & \kappa_8\\
0& \kappa_7 & -\kappa_8
\end{pmatrix}
\begin{pmatrix}
c\\ c_p\\c_{pp} 
\end{pmatrix}.
\end{align*}
The eigenvalues of the matrix above is 
\begin{align*}
&\lambda_1=0,\\
&\lambda_2=-\frac{1}{2}(\kappa_5+\kappa_6+\kappa_7+\kappa_8)-\frac{1}{2}\sqrt{(\kappa_5+\kappa_6-\kappa_7-\kappa_8)^2+4\kappa_6 \kappa_7},\\
&\lambda_2=-\frac{1}{2}(\kappa_5+\kappa_6+\kappa_7+\kappa_8)+\frac{1}{2}\sqrt{(\kappa_5+\kappa_6-\kappa_7-\kappa_8)^2+4\kappa_6 \kappa_7}.
\end{align*}
Since $(\kappa_5+\kappa_6+\kappa_7+\kappa_8)^2 > (\kappa_5+\kappa_6-\kappa_7-\kappa_8)^2+4\kappa_6 \kappa_7$, the eigenvalues $\lambda_2$ and $\lambda_3$ are strictly negative. Thus 
\begin{align*}
\begin{pmatrix}
c(t)\\ c_p(t)\\c_{pp}(t) 
\end{pmatrix}=
v_1+v_2e^{-\lambda_2 t}+v_3 e^{-\lambda_3 t},
\end{align*}
where $v_i$'s are the corresponding eigenvectors. In particular, $v_1=(\bar{c}, \bar{c}_p, \bar{c}_{pp})$. 
This implies that $\dlim_{t\to \infty} c_{pp}(t)=\bar{c}_{pp}$.

Now in the following proposition we introduce a sufficient condition for $z(t)$ to converge to some positive steady state.
\begin{prop}\label{prop:hybrid}
Suppose that $\theta \kappa_1 M> \mu \kappa_2 \bar c_{pp}$. Then $z(t)$ in \eqref{eq:ode of the hybrid} converges to $\dfrac{\theta \kappa_1 M-\mu \kappa_2 \bar c_{pp}}{\theta \kappa_1}$, as $t \to \infty$.
\end{prop}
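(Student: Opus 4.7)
The plan is to reduce the system to a scalar ODE for $z(t)$, establish that $z(t)$ is eventually bounded away from $0$, and then exploit the already-proved exponential convergence $c_{pp}(t) \to \bar c_{pp}$ to conclude via a standard perturbation-of-autonomous-system argument.

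First I would use the conservation law $x^\ast(t) + z(t) = M$ to eliminate $x^\ast(t)$ from the expression for $m(t)$. After substituting $x^\ast = M - z$ and collecting terms in $\theta m(t) - \mu$, the equation for $z$ in \eqref{eq:ode of the hybrid} rearranges to
\[
\dot z(t) \;=\; \frac{\theta \kappa_1\, z(t)\,\bigl(\hat z(t) - z(t)\bigr)}{\kappa_2 c_{pp}(t) + \theta z(t)},
\qquad \hat z(t) \;:=\; M - \frac{\mu \kappa_2 c_{pp}(t)}{\theta \kappa_1}.
\]
Since $c_{pp}(t) \to \bar c_{pp}$ exponentially, $\hat z(t) \to z^\ast := (\theta\kappa_1 M - \mu\kappa_2 \bar c_{pp})/(\theta\kappa_1)$ at the same rate, and the hypothesis is exactly $z^\ast > 0$. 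Because $z(0) > 0$, the multiplicative factor $z(t)$ on the right-hand side keeps the solution in $(0, M]$ for all $t$.

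The main obstacle is showing $\liminf_{t\to\infty} z(t) > 0$, since the right-hand side vanishes at $z = 0$ and a priori $z(t)$ could approach that unstable fixed point. The key observation is that when $z(t)$ is small, $m(t)$ is close to $\kappa_1 M / (\kappa_2 \bar c_{pp})$, and the hypothesis says precisely that $\theta$ times this quantity exceeds $\mu$. Quantitatively, I would fix $\delta > 0$ small and $T$ large enough so that $z(t) \leq \delta$ and $t \geq T$ together imply $\dot z(t)/z(t) \geq c$ for some uniform $c > 0$. A continuity/crossing argument then rules out any downward passage of $z$ through the level $\delta$ after time $T$: at such a crossing point one would have $\dot z \leq 0$ while our estimate gives $\dot z \geq c\delta > 0$. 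Hence $z(t) \geq \delta$ for all sufficiently large $t$.

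With $z(t)$ bounded away from $0$ and $c_{pp}(t)$ bounded, the coefficient $a(t) := \theta\kappa_1 z(t) / (\kappa_2 c_{pp}(t) + \theta z(t))$ lies in a compact subset of $(0,\infty)$ for $t \geq T$. Writing $u(t) := z(t) - z^\ast$ and $\beta(t) := \hat z(t) - z^\ast$, the ODE takes the form $\dot u = -a(t)\bigl(u - \beta(t)\bigr)$ with $\beta(t) \to 0$ exponentially. A Grönwall estimate on $|u(t)|$, or equivalently a Lyapunov argument on $u(t)^2$ using $2u\dot u \leq -a(t) u^2 + a(t)\beta(t)^2$, then yields $u(t) \to 0$, i.e., $z(t) \to z^\ast$, which is the claimed limit. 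I expect the persistence step to be the delicate one; without the hypothesis the fixed point $z^\ast$ would be nonpositive and $z = 0$ could attract the trajectory, so the hypothesis enters the argument in an essential way.
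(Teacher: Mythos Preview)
Your approach is correct and shares the paper's key reduction: both use the conservation $x^\ast + z = M$ to obtain the scalar equation
\[
\dot z = \frac{\theta\kappa_1 z}{\kappa_2 c_{pp}(t)+\theta z}\bigl(\alpha(t)-z\bigr),\qquad \alpha(t)=M-\frac{\mu\kappa_2}{\theta\kappa_1}c_{pp}(t),
\]
and both exploit the previously established convergence $c_{pp}(t)\to\bar c_{pp}$, so that $\alpha(t)\to z^\ast$.

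The difference is in how convergence of $z$ is extracted. You split the work into a persistence step ($\liminf z>0$ via the crossing argument at level~$\delta$) and then a Gr\"onwall/Lyapunov estimate on $u=z-z^\ast$, treating $\beta(t)=\alpha(t)-z^\ast$ as an exponentially decaying forcing term. The paper instead runs a single trapping-region argument: for each $\epsilon>0$ it shows that once $|\alpha(t)-z^\ast|<\epsilon$, the derivative $\dot z$ has a definite sign outside $[z^\ast-2\epsilon,\,z^\ast+2\epsilon]$, so $z$ enters this interval in finite time and never leaves it. Your route is slightly more machinery-heavy but gives a cleaner separation of concerns (and in fact yields an explicit convergence rate); the paper's argument is more elementary but handles the lower bound and the convergence in one pass. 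One minor gap to tidy in your write-up: your crossing argument only prevents \emph{downward} passage through $\delta$ after time~$T$; you should add the one-line observation that if $z(T)\le\delta$ then $\dot z/z\ge c$ forces $z$ up to $\delta$ in finite time, after which the crossing argument applies.
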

\begin{proof}
By using the conservative quantity $M=x^*(t)+z(t)$, the differential equation for $z$ in \eqref{eq:ode of the hybrid} can be written as
\begin{align*}
\frac{d}{dt}z(t)=z(t) \left (\frac{\theta \kappa_1 M-\mu \kappa_2 c_{pp}(t)-\theta \kappa_1 z(t)}{\kappa_2 c_{pp}(t)+\theta z(t)} \right )
=\frac{ \theta \kappa_1 z(t)}{\kappa_2 c_{pp}(t)+\theta z(t)}(\alpha(t)-z(t)),
\end{align*}
where $\alpha(t)=M-\dfrac{\mu \kappa_2}{\theta \kappa_1} c_{pp}(t)$.

Let $\bar \alpha= M-\dfrac{\mu \kappa_2}{\theta \kappa_1} \bar c_{pp}$. Let also $\epsilon>0$ be an arbitrarily small number.% satisfying that $\theta \kappa_1 M-\mu \kappa_2 (\bar c_{pp} +2\epsilon)>0$. 
Since $\dlim_{t\to \infty}c_{pp}(t)=\bar c_{pp}$, there exists a $T>0$ such that $|\alpha(t)-\bar \alpha|< \epsilon$ when $t>T$. We denote $R_{-\epsilon}=\bar \alpha-2\epsilon$ and $R_{+\epsilon}=\bar \alpha+2\epsilon$. Then clearly $\alpha(t) \in [R_{-\epsilon},R_{+\epsilon}]$ for all $t>T$.

In the rest of the proof, we suppose $t>T$.
Note that if $z(t) < R_{-\epsilon}$, then
\begin{align*}
\dfrac{d}{dt}z(t) = \frac{ \theta \kappa_1 z(t)}{\kappa_2 c_{pp}(t)+\theta z(t)}(\alpha(t)-z(t)) > \frac{ \theta \kappa_1 z(t)}{\kappa_2 c_{pp}(t)+\theta z(t)}(\alpha(t)-\bar \alpha +2\epsilon) > \frac{ \theta \kappa_1 z(t)}{\kappa_2 L+\theta M}\epsilon
\end{align*} 
Note further that if $z(t)> R_{+\epsilon}$, then
\begin{align*}
\dfrac{d}{dt}z(t) =  \frac{ \theta \kappa_1 z(t)}{\kappa_2 c_{pp}(t)+\theta z(t)}(\alpha(t)-z(t)) < \frac{ \theta \kappa_1 z(t)}{\kappa_2 c_{pp}(t)+\theta z(t)}(\alpha(t)-\bar \alpha -2\epsilon) < -\frac{ \theta \kappa_1 z(t)}{\kappa_2 L+\theta M}\epsilon
\end{align*}

Therefore there exists $T_1=\inf \{t\ge T : z(t) \in [R_{-\epsilon},R_{+\epsilon}]\} < \infty$. 
Let $T_2=\inf \{t>T_1 : z(t) =\bar \alpha \}$.
We first consider the case $z(T_1)=R_{-\epsilon}$ and $T_2<\infty$. 
If there exists $t>T_2$ such that $z(t)>\alpha(t)$, then  by using the continuity of $z(t)$, 
\begin{align*}
z(t) = z(T_2) + \int_{T_2}^t \frac{d}{ds}z(s) ds = \alpha(t)+\int_{T_2}^t \frac{ \theta \kappa_1 z(s)}{\kappa_2 c_{pp}(s)+\theta z(s)}(\alpha(s)-z(s))ds < \alpha(t).
\end{align*}
Hence it is contradiction to $z(t)>\alpha(t)$. Thus $  R_{-epsilon}< z(t) < \alpha(t)$ for all $t>T_2$. Now we consider the case $z(T_1)=R_{-\epsilon}$ and $T_2=\infty$. Then  for any $t>T_1$, it follows that
\begin{align*}
z(t) = z(T_1) + \int_{T_1}^t \frac{d}{ds}z(s) ds = R_{-epsilon}+\int_{T_2}^t\frac{ \theta \kappa_1 z(s)}{\kappa_2 c_{pp}(s)+\theta z(s)}(\alpha(s)-z(s)) ds > R_{-\epsilon} 
\end{align*}
as $z(t) < \alpha(t)$ for any $t>T_1$. Hence $R_{-\epsilon} < z(t) < \alpha(t)$ for all $t>T_1$. Hence we conclude that when $z(T_1)=R_{-\epsilon}$, there exists $T'_1\ge T_1 \ge T$ such that such that $\alpha(t) < z(t) < R_{+\epsilon}$ for all $t>T'_1$.  In the same way, it follows that when $z(T_1)=R_{+\epsilon}$ there exists $T'_2\ge T_1 \ge T$ such that $\alpha(t) < z(t) < R_{+\epsilon}$ for all $t>T_3$. 

Consequently, there exists $T'\ge T$ such that $z(t)\in [R_{-\epsilon},R_{+\epsilon}]$ for all $t>T'$. Since $\epsilon$ can be chosen arbitrarily small with sufficiently large $T$, $z(t)$ converges to $\bar \alpha$, as $t\to \infty$. Obviously $\bar \alpha$ is positive by the assumption $\theta \kappa_1 M> \mu \kappa_2 \bar c_{pp}$.

\end{proof}

Proposition \ref{prop:hybrid} shows that for any choice of system parameters $\kappa_i$, if either the initial value of $x^*$ or $z$ is large enough, then $z(t)$ converges to a positive steady state, as $t$ goes to $\infty$. Hence by using Lemma \ref{lem:when a hybrid has a steady state}, we conclude that $\dlim_{t\to \infty} m(t)=\dfrac{\mu}{\theta}$ if we input sufficiently large initial concentration of the control species $Z$.

\bibliographystyle{unsrtnat}
%\bibliography{res}

\end{document}